\definecolor{my-linkcolor}{rgb}{0.75,0,0}
\definecolor{my-citecolor}{rgb}{0.1,0.57,0}
\definecolor{my-urlcolor}{rgb}{0,0,0.75}
\title{}
\author{Thomas Creutzig\thanks{T.~C. is supported by the Natural Sciences and Engineering Research Council of Canada (RES0020460).},\ and Davide Gaiotto\thanks{The research of D.G. is supported by the Perimeter Institute for Theoretical Physics.  Research at Perimeter Institute is supported by the Government of Canada through Industry Canada and by the Province of Ontario through the Ministry of Economic Development \& Innovation. }}
\date{}
\numberwithin{equation}{section}
\theoremstyle{definition}\newtheorem{rem}{Remark}[section]
\theoremstyle{plain}\newtheorem{prop}[rem]{Proposition}
\newtheorem{thm}[rem]{Theorem}
\theoremstyle{definition}
\theoremstyle{plain}\newtheorem{lem}[rem]{Lemma}
\newtheorem{cor}[rem]{Corollary}
\theoremstyle{definition}\newtheorem{exam}[rem]{Example}
\theoremstyle{definition}
\theoremstyle{definition}
\theoremstyle{definition}\newtheorem{conj}[rem]{Conjecture}
\theoremstyle{definition}    
\theoremstyle{definition}
\newcommand{\Com}{\text{Com}}
\newcommand{\OPE}{operator product expansion}
\newcommand{\voa}{vertex operator algebra}
\newcommand{\vosa}{vertex operator subalgebra}
\newcommand{\avoa}{affine vertex operator algebra}
\newcommand{\voas}{vertex operator algebras}
\newcommand{\psl}{\mathfrak{psl}(2|2)}
\newcommand{\sltwo}{\mathfrak{sl}(2)}
\newcommand{\Vir}{\text{Vir}}
\newcommand{\ch}{\text{ch}}
\newcommand{\g}{\mathfrak{g}}
\newcommand{\fA}{\mathfrak{A}}
\newcommand{\fM}{\mathfrak{M}}
\newcommand{\ZZ}{\mathbb{Z}}
\newcommand{\AG}{\mathfrak{A}[G, \Psi]}
\newcommand{\AGn}{\mathfrak{A}^{(n)}[G, \Psi]}
\newcommand{\Agn}{\mathfrak{A}^{(n)}[\g, \Psi]}
\newcommand{\Agni}{\mathfrak{A}^{(n_i)}[\g, \Psi]}
\newcommand{\Agno}{\mathfrak{A}^{(n)}[\g, \infty]}
\newcommand{\AGo}{\mathfrak{A}[G, \infty]}
\newcommand{\ASUNn}{\mathfrak{A}^{(n)}[SU(N), \Psi]}
\newcommand{\BG}{\mathfrak{\widetilde A}[G, \Psi]}
\newcommand{\CM}{\mathcal{M}}
\newcommand{\CF}{\mathcal{F}}
\newcommand{\CL}{\mathcal{L}}
\newcommand{\CB}{\mathcal{B}}
\newcommand{\CJ}{\mathcal{J}}
\begin{document}

\title{Vertex Algebras for S-duality}	
\date{}
\maketitle

\abstract{We define new deformable families of vertex operator algebras $\fA[\g, \Psi, \sigma]$ associated to a large set of S-duality operations in four-dimensional supersymmetric gauge theory. 
They are defined as algebras of protected operators for two-dimensional supersymmetric junctions 
which interpolate between a Dirichlet boundary condition and its S-duality image. 
The $\fA[\g, \Psi, \sigma]$ \voas{} are equipped with two $\g$ affine vertex subalgebras whose levels are related by the S-duality operation. 
They compose accordingly under a natural convolution operation and can be used to define an action of the S-duality operations 
on a certain space of \voas{} equipped with a $\g$ affine vertex subalgebra. We give a self-contained definition of the S-duality action on 
that space of \voas. 
The space of conformal blocks (in the derived sense, i.e. chiral homology) 
for $\fA[\g, \Psi, \sigma]$ is expected to play 
an important role in a broad generalization of the quantum Geometric Langlands program. 
Namely, we expect the S-duality action on \voas{} to extend to an action on the corresponding spaces of conformal blocks. 
This action should coincide with and generalize the usual quantum Geometric Langlands correspondence. 
The strategy we use to define the $\fA[\g, \Psi, \sigma]$ \voas{} is of broader applicability and leads to many new results and conjectures 
about deformable families of \voas.}

\setcounter{tocdepth}{2}
\setcounter{secnumdepth}{4}

\newpage

\tableofcontents

\allowdisplaybreaks

\section{Introduction}
The objective of this paper is to identify and characterize a large class of vertex operator (super)algebras $\fA[\g, \Psi; \CB_L, \CB_R;\CJ]$
which appear in the study of topologically twisted four-dimensional ${\cal N}=4$ gauge theory \cite{Nekrasov:2010ka,Gaiotto:2016wcv,Gaiotto:2017euk,CG}. 
These ``corner VOAs'' encode the algebras of certain protected operators which live at the junctions (``corners'') of topological boundary conditions 
for the four-dimensional gauge theory. This gauge theory setup is motivated by applications to the Geometric Langlands program \cite{Kapustin:2006pk,Gaiotto:2016hvd}. 

Recall that the topologically twisted four-dimensional gauge theory itself depends on a choice of gauge group $G$ (with Lie algebra $\g$) 
and on a topological gauge coupling $\Psi$ which appears as a continuous parameter in the corner VOAs.\footnote{The global form of the gauge group does not affect directly the algebra of local operators but will 
select a specific class of modules for the VOA}

The four-dimensional theory admits a large class of topological boundary conditions. One may consider a two-dimensional junction $\CJ$
between two different topological boundary conditions $\CB_L$ and $\CB_R$. Although the three-dimensional boundary conditions are topological, 
the two-dimensional junctions compatible with the topological twist are typically not topological, but rather holomorphic:  
they support an algebra of holomorphic local operators with non-trivial OPEs,
i.e. a vertex operator (super)algebra which we denote as $\fA[\g, \Psi; \CB_L, \CB_R;\CJ]$.\footnote{This setup has obvious generalizations involving two-dimensional junctions of multiple topological interfaces. }

The four-dimensional gauge theory enjoys an S-duality symmetry acting on the coupling $\Psi$ by fractional linear transformations, 
which acts non-trivially on boundary conditions and junctions. It leads to non-trivial identifications between corner VOAs.  

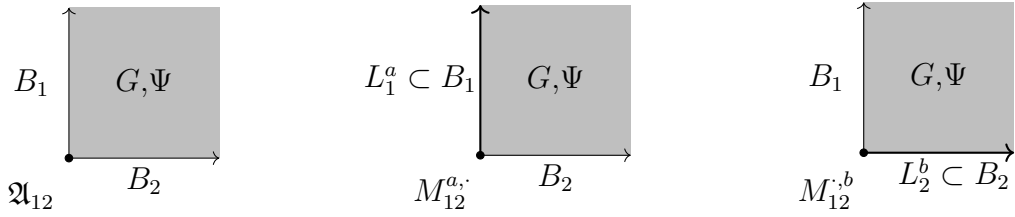
\begin{figure}[p]
\begin{center}
\begin{tikzpicture}
\path [fill=lightgray] (0,0) rectangle (2,2);
\draw[->] (0,0) --(0,2);
\draw[->] (0,0) --(2,0);
\node at (1,1) {$G$,$\Psi$};
\node at (-.5,1) {$B_1$};
\node at (1,-.3) {$B_2$};
\draw[fill] (0,0) circle [radius=0.05];
\node at (-.5,-.5) {$\mathfrak{A}_{12}$};
\end{tikzpicture}
\hspace{1.5cm}
\begin{tikzpicture}
\path [fill=lightgray] (0,0) rectangle (2,2);
\draw[->,thick] (0,0) --(0,2);
\draw[->] (0,0) --(2,0);
\node at (1,1) {$G$,$\Psi$};
\node at (-.8,1) {$L_1^a \subset B_1$};
\node at (1,-.3) {$B_2$};
\draw[fill] (0,0) circle [radius=0.05];
\node at (-.5,-.5) {$M^{a,\cdot}_{12}$};
\end{tikzpicture}
\hspace{1.8cm}
\begin{tikzpicture}
\path [fill=lightgray] (0,0) rectangle (2,2);
\draw[->] (0,0) --(0,2);
\draw[->,thick] (0,0) --(2,0);
\node at (1,1) {$G$,$\Psi$};
\node at (-.5,1) {$B_1$};
\node at (1.2,-.3) {$L_2^b \subset B_2$};
\draw[fill] (0,0) circle [radius=0.05];
\node at (-.5,-.5) {$M^{\cdot,b}_{12}$};
\end{tikzpicture}
\end{center}
\caption{Vertex Operator Algebras at corners. Left: junctions between boundary conditions of GL-twisted SYM typically support 
VOAs, determined by the choice of boundary conditions and by the specific choice of junction. Center: Boundary line defects 
can end at the junction on vertex operators associated to modules for the junction VOA. The modules fuse and braid according to the 
braided tensor category of line defects. Right: There are modules associated to lines in either boundary. The modules associated to lines in one boundary 
braid trivially with modules associated to lines in the other boundary.	They fuse to composite modules $M^{a,b}_{12}$. 
}\label{fig:one}
\end{figure}

The boundary conditions are equipped with braided (super)tensor categories $C_L$ and $C_R$ of topological line defects.
The corner VOA $\fA[\g, \Psi; \CB_L, \CB_R;\CJ]$ is equipped with a class $\fM[\g, \Psi; \CB_L, \CB_R;\CJ]$ of modules 
which braid according to the product $C_L \times \overline{C}_R$ of the braided (super)tensor categories associated to the two sides of the junction (here $\overline{C}_R$ denotes the category opposite to $C_R$, i.e. braiding is reversed). 
See Figure \ref{fig:one}.

The corner VOAs are known or conjectured explicitly for some simple examples of boundary conditions $\CB_L, \CB_R$, 
where the whole setup has a simple Lagrangian description. S-duality acts on these examples 
to provide larger families where the corner VOA is known even though the system does not admit a simple Lagrangian description. 

There is a powerful strategy to build an even larger class of examples:
``resolve'' a complicated junction $\CJ$ between $\CB_L$ and $\CB_R$ 
into sequences of simpler junctions $\CJ_i$ between a sequence of boundary conditions 
$\CB_1 \equiv \CB_L, \CB_2, \cdots, \CB_N \equiv \CB_R$.\footnote{Or even webs of junctions between topological interfaces.} 
Conjecturally, the resulting collection $\prod_i \fA[\g, \Psi; \CB_{i}, \CB_{i+1};\CJ_i]$ of simpler \voas{} is conformally embedded into the VOA 
$\fA[\g, \Psi; \CB_L, \CB_R;\CJ]$ for the original junction. 

The decomposition of the original VOA $\fA[\g, \Psi, \CB_L, \CB_R,\CJ]$ into modules for the simpler VOAs is also conjecturally known:
it is the sum of products of modules associated to topological line defects stretched between simpler junctions:
\begin{equation}\nonumber
\fA[\g, \Psi; \CB_L, \CB_R;\CJ] =  \fM[\g, \Psi; \CB_{L}, \CB_{2};\CJ_1] \boxtimes_{C_2}  \cdots \boxtimes_{C_{N-1}} \fM[\g, \Psi; \CB_{N-1}, \CB_R;\CJ_{N-1}]
\end{equation} 
This is a well-defined extension, precisely because these line defects pair up modules which braid according to dual braided tensor categories. 
See Figure \ref{fig:two}.

\begin{figure}[p]
\begin{center}
\begin{tikzpicture}
\path [fill=lightgray] (0,0) -- (2,0) -- (2,3) -- (-1,3) -- (-1,1);
\draw[->] (0,0) --(2,0);
\draw[->] (0,0) --(-.5,.5);
\draw (0,0) --(-1,1);
\draw[->] (-1,1) --(-1,3);
\node at (.5,1.5) {$G$,$\Psi$};
\node at (-.8,.2) {$B_2$};
\node at (-1.5,2) {$B_1$};
\node at (1,-.3) {$B_3$};
\draw[fill] (0,0) circle [radius=0.05];
\draw[fill] (-1,1) circle [radius=0.05];
\node at (.1,-.8) {$\mathfrak{A}_{12} \times \mathfrak{A}_{23} $};
\end{tikzpicture}
\hspace{1.2cm}
\begin{tikzpicture}
\path [fill=lightgray] (0,0) -- (2,0) -- (2,3) -- (-1,3) -- (-1,1);
\draw[->] (0,0) --(2,0);
\draw[->,thick] (0,0) --(-.5,.5);
\draw[thick] (0,0) --(-1,1);
\draw[->] (-1,1) --(-1,3);
\node at (.5,1.5) {$G$,$\Psi$};
\node at (-1.1,.2) {$L_2^b \subset B_2$};
\node at (-1.5,2) {$B_1$};
\node at (1,-.3) {$B_3$};
\draw[fill] (0,0) circle [radius=0.05];
\draw[fill] (-1,1) circle [radius=0.05];
\node at (.1,-.8) {$\mathfrak{A}_{13} \equiv \bigoplus_b M_{12}^{\cdot,b} \times M_{23}^{b,\cdot} $};
\end{tikzpicture}
\hspace{1cm}
\begin{tikzpicture}
\path [fill=lightgray] (0,0) -- (2,0) -- (2,3) -- (-1,3) -- (-1,1);
\draw[->] (0,0) --(2,0);
\draw[->,thick] (0,0) --(-.5,.5);
\draw[thick] (0,0) --(-1,1);
\draw[->,thick] (-1,1) --(-1,3);
\node at (.5,1.5) {$G$,$\Psi$};
\node at (-1.1,.2) {$L_2^b \subset B_2$};
\node at (-2,2) {$L_1^a \subset B_1$};
\node at (1,-.3) {$B_3$};
\draw[fill] (0,0) circle [radius=0.05];
\draw[fill] (-1,1) circle [radius=0.05];
\node at (.1,-.8) {$M_{13}^{a,\cdot} \equiv \bigoplus_b M_{12}^{a,b} \times M_{23}^{b,\cdot}$};
\end{tikzpicture}
\end{center}
\caption{The composition of junctions.  Left: Two consecutive junctions supporting each a VOA. Middle: the composition of the junctions 
is associated to a larger VOA. Local operators at the composite junction arise from line defect segments stretched between the 
individual junctions. The composite VOA $\mathfrak{A}_{13}\equiv \mathfrak{A}_{12} \boxtimes_{C_2} \mathfrak{A}_{23}$ is a conformal extension of $\mathfrak{A}_{12} \times \mathfrak{A}_{23}$
by products of modules associated to these line segments. Right: Lines on outer boundaries are associated to modules for the 
full composite VOA, built again with the help of extra line defect segments stretched between the 
individual junctions. 
}\label{fig:two}
\end{figure}
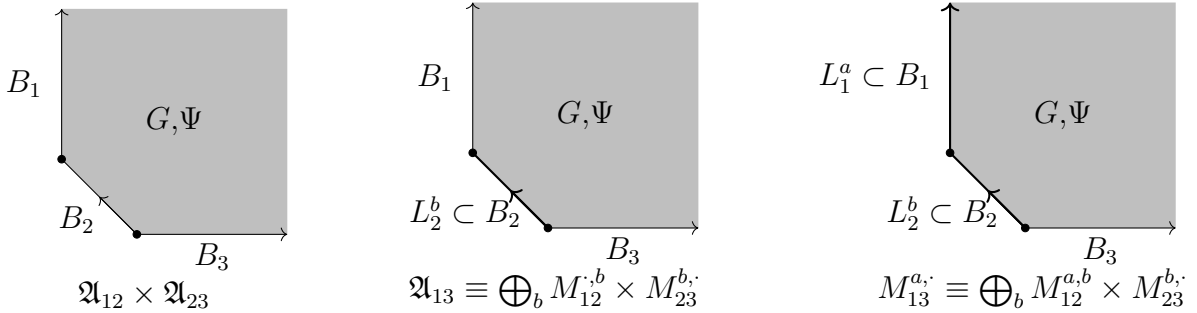

In this paper we will test these conjectures by proposing explicit descriptions of these conformal embeddings
in many important examples. In particular, we describe an explicit candidate for a VOA encoding 
the quantum Geometric Langlands duality for general self-dual gauge groups. 
We describe it very explicitly for the $U(2)$ gauge group. We also discuss 
in some detail the candidate VOA for $U(N)$ and other classical groups. 

We propose generalizations of the quantum Geometric Langlands associated to general S-duality operations. 
We both give VOAs which encode that action and define an action of the S-duality group on 
VOAs which is expected to descend to the generalized quantum Geometric Langlands
correspondences between their conformal blocks. 

The final definition of the VOAs is rather straightforward. It ultimately hinges on two simple observations (physically reasonable, but mathematically yet rather conjectural)
valid for a simply-laced $\g$:
\begin{itemize}
\item The affine $\g$ vertex algebra $G_{\Psi} \equiv V_{\Psi - h^\vee}[\g]$ has a nice class $\fM_{\g, \Psi}$ 
of modules defined as Weyl modules induced from finite-dimensional 
representations of $\g$. These are the standard vertex operators familiar to physicists. These modules braid 
as objects of a braided tensor super-category $C_{\g,q}$ which only depends on the continuous parameter 
$q \equiv e^{\frac{2 \pi i}{\Psi}}$. Furthermore, $\overline{C}_{\g,q} \simeq C_{\g,q^{-1}}$.
\item The W-algebra $W_\Psi[\g]$ defined as the regular Drinfeld-Sokolov reduction of the affine $\g$ vertex algebra 
above has a nice class $\fM^W_{\g, \Psi}$ of modules which braid as $C_{\g,q} \boxtimes C_{\g,q^\vee}$ 
\footnote{This statement is true up to important fermion number shifts, which are the reason our construction produces 
super-algebras rather than algebras, and up to restricting the weights of the representations in a manner 
associated to the action of Langlands duality on the global form of the group}
where $q^\vee$ is computed as before from the Feigin-Frenkel dual level $\Psi^\vee = \Psi^{-1}$. 
\end{itemize}
These facts allow one to built conformal extensions of chains of VOAs of the form 
\begin{equation}
G_{\Psi_0} \times W_\Psi[\g] \times \cdots \times W_\Psi[\g] \times G_{\Psi^{-1}_n}
\end{equation}
The conformal extension is built from products of nice modules and can be described schematically as 
\begin{equation}
\fM_{\g, \Psi_0} \boxtimes_{C_{\g, q_0}} \fM^W_{\g, \Psi_1} \boxtimes_{C_{\g, q_1}} \cdots \boxtimes_{C_{\g, q_{n-1}}} \fM_{\g,\Psi^{-1}_n}
\end{equation}
which makes sense as long as $q_i q^\vee_{i+1} = 1$,
i.e. 
\begin{equation}
\Psi^{-1}_i + \Psi_{i+1} = n_i
\end{equation} 
for a sequence of integers $n_i$. The resulting VOA is best defined when the $n_i$ are positive, which guarantees 
that the space of currents of a given scaling dimension is finite-dimensional.

The simplest example is
\begin{equation}
\fA[\g,\Psi] = \fM_{\g, \Psi+1} \boxtimes  \fM_{\g, \Psi^{-1} + 1}
\end{equation}

The action of S-duality on vertex algebras can also be described in a relatively simple manner. 
It factors through certain maps between 
\begin{itemize}
\item A certain space $A_{\g, \Psi}$ of vertex algebras equipped with a 
$\g$ affine subalgebra at critically shifted level $\Psi$ 
\item A certain space $B_{\g, q}$ of vertex algebras equipped with a family of nice modules 
which braids according to $C_{\g, q}$. 
\end{itemize}

The maps are defined as follows:
\begin{itemize}
\item There is a natural map $A_{\g, \Psi} \to B_{\g, q^{-1}}$ given by the operation of taking a coset by the affine sub-algebra.  
The nice modules for the coset VOA are the ones associated to the Weyl modules of the affine sub-algebra. 
\item There is a natural map $A_{\g, \Psi} \to B_{\g, q^\vee}$ given by the regular Drinfeld-Sokolov reduction. 
The nice modules for the Drinfeld-Sokolov reduction of a VOA descends from spectral flow images of the vacuum module
of the VOA.
\item There is a natural map $B_{\g, q^{-1}} \to A_{\g, \Psi}$ given by the operation $\boxtimes_{C_{\g,q}}$ 
which inverts the coset operation if the initial and final $\Psi$ coincide 
\end{itemize}
These basic maps can be composed to give non-trivial transformations acting on $A_{\g, \Psi}$ to be identified with the action of S-duality 
transformations. It is also useful to add a simpler operation $T: A_{\g, \Psi} \to A_{\g, \Psi+1}$, which consists of taking a tensor product with a $\g$ WZW model at level $1$. 
\footnote{This actually makes the second map in the above list redundant, thanks to  
the coset description of $W_\Psi[\g]$. }

\subsection{Structure of the paper}
In the remainder of this section, we will give more detailed introductions aimed either to readers interested in the 
application of gauge theory to Geometric Langlands or to readers interested only in a mathematical treatment of the Vertex Operator Algebras themselves. 
We will also say a few words about an interesting generalization of our work involving surface defects in the bulk gauge theory. 
In Section \ref{sec:two} we discuss the gauge theory construction of VOAs and the expected S-duality properties. 
In Section \ref{sec:three} we discuss the gauge theory construction of the special VOA expected to 
play the role of kernel for the quantum Geometric Langlands duality for simply laced gauge groups. 
In Section \ref{sec:four} and \ref{sec:five} we compare alternative constructions of the 
VOAs associated to classical simply-laced groups. 
In Section \ref{sec:six} we compare alternative constructions of the 
VOAs associated to non-simply-laced orthogonal and symplectic groups. 
In Section \ref{sec:seven} we discuss the relation between our VOAs and the 
S-duality group. 
In Section \ref{sec:eight} we give a careful mathematical treatment of aspects of the VOAs 
built earlier on in the paper. In Section \ref{sec:nine} we look again at the VOAs associated to 
an $SU(2)$ gauge group. 

\subsection{Geometric Langlands and branes}
The Geometric Langlands and quantum Geometric Langlands correspondences are conjectural equivalences 
between certain categories attached to a Riemann surface $C$ and a 
reductive Lie group $G$. See \cite{Fr} for a review of Geometric Langlands with many references, \cite{Gait1} for a recent outline. See \cite{Gait2} for a recent review of quantum Geometric Langlands.

The physical interpretation of these correspondences is that they encode the equivalence of distinct ``mirror''
mathematical descriptions of the same categories of branes \cite{Kapustin:2006pk}. Up to subtleties which are ameliorated by a gauge theory 
description, the Geometric Langlands correspondence should be associated to ``BAA'' branes in the Hitchin moduli space 
$\CM_H(C,G)$, mirror to ``BBB'' branes in the Hitchin moduli space $\CM_H(C,G^{\vee})$ for the Langlands dual group. 
On the other hand, the quantum Geometric Langlands correspondence should be associated to ``ABA'' branes in the Hitchin moduli space 
$\CM_H(C,G)$, mirror to ``AAB'' branes in the Hitchin moduli space $\CM_H(C,G^{\vee})$ for the Langlands dual group.

All the relevant maps from physical categories of branes to mathematical categories should be associated to the existence 
of certain special families $\CF$ of branes \cite{Gaiotto:2016hvd,Gaiotto:2016wcv}: the mathematical object associated to a given brane $X$ should be given as 
the spaces of morphisms from the branes in $\CF$ to $X$. The Geometric Langlands and quantum Geometric Langlands 
correspondences should relate the mathematical descriptions associated to mirror pairs $\CF$ and $\CF^\vee$ of 
families of branes. 

To be specific, the branes relevant for Geometric Langlands can be expressed conveniently in the complex structure 
where $\CM_H(C,G)$ is a space of Higgs bundles. The family $\CF$ consists of complex Lagrangian branes 
which wrap the locus defined by fixing the bundle and letting the Higgs field vary. The family $\CF^\vee$ consists of complex Lagrangian branes 
which wrap the locus defined by fixing the characteristic polynomial of the Higgs field and letting the bundle vary. These are mirror to skyscraper 
branes in $\CM_H(C,G^{\vee})$. 

The branes relevant for quantum Geometric Langlands can be expressed conveniently in the complex structure 
where $\CM_H(C,G)$ is a space of complex flat connections. The family $\CF$ consists again of complex Lagrangian branes 
which wrap the locus defined by fixing the bundle and letting the holomorphic part of the connection vary. 
The family $\CF^\vee$ consists of the mirrors to the family $\CF$ in $\CM_H(C,G^{\vee})$. We do not know a definition which does 
not involve mirror symmetry, though we expect them to be higher rank versions of the canonical coisotropic brane. 

From this point of view, the spaces of morphisms between branes in $\CF$ and $\CF^\vee$ plays a particularly important role. 
They can be employed as some sort of ``Fourier-Mukai kernel'' for the correspondences. For example, in the case of Geometric Langlands 
these should give a universal sheaf of Hecke eigensheaves. 

In order to acquire information about these morphism spaces, we will switch to a four-dimensional gauge theory description of the system.
The twisted compactification of ${\mathcal N}=4$ SYM on a Riemann surface $C$ gives the non-linear sigma model on the Hitchin moduli space,
up to subtleties concerning the singularities of the latter. Crucially, the families $\CF$ for the standard or quantum Geometric Langlands 
setup all descend from a specific choice boundary condition in the four-dimensional gauge theory, ``Dirichlet'' boundary conditions,
denoted here as $B^{D}_{0,1}$. 

The all-important S-duality of the four-dimensional gauge theory maps to mirror symmetry in 2d. We denote the S image of 
Dirichlet boundary conditions as $B^{D}_{1,0}$, whose 2d image is thus $\CF^\vee$

As boundary conditions in 4d descend to branes in 2d upon compactification on $C$, junctions between boundary conditions in 4d descend to 
morphisms between the corresponding branes in 2d. More precisely, there is a map from conformal blocks of a corner VOA to 
the space of morphisms between the branes corresponding to the two sides of the corner. That means we can learn about 
the spaces of morphisms between branes in $\CF$ and $\CF^\vee$ by looking at conformal blocks for a corner VOA 
between $B^{D}_{0,1}$ and $B^{D}_{1,0}$. 

We thus expect such corner VOAs to have important applications in the 
standard or quantum Geometric Langlands programs. More precisely, the VOAs we construct
for general values of the topological coupling $\Psi$ are relevant for the quantum Geometric Langlands program.
The $\Psi \to \infty$ limits of these VOAs are relevant for the standard Geometric Langlands program.

There are other interesting classes $\CF_\rho$ of BAA branes, labelled by an $\mathfrak{su}(2)$ embedding $\rho$ in $G$. 
For example, when $\rho$ is the regular embedding, $\CF_\rho$ is a single Lagrangian manifold: the canonical 
section of the Hitchin fibration for the GL setup or the oper manifold for the qGL setup. They play an important role in the 
correspondences, especially for regular $\rho$. 

These branes also descend from four-dimensional boundary conditions $B^{\rho}_{0,1}$, known as Nahm pole boundary conditions. 
For regular $\rho$, we denote them simply as $B_{0,1}$. 
They will be associated to interesting corner VOA. In particular, the S image $B_{1,0}$ of $B_{0,1}$ is well understood and there are 
important corner VOAs between $B_{1,0}$ and $B^{\rho}_{0,1}$. We review them in the main body of the paper.

\subsection{Vertex operator algebras}

The physics picture that we present in this work advocates many statements that are rather surprising and thus exciting from the \voa{} point of view. We predict many \voa{} extensions, \voa{} isomorphisms and braided equivalences between full subcategories of different vertex operator superalgebras.

We use section \ref{sec:eight} and \ref{sec:nine} to explain some rigorous results on this. 
Section \ref{sec:examplesVOAextensions} explains in a few examples of rational \voas{} how one indeed can prove braided equivalences between full subcategories of different \voas. The strategy of proof should generalize beyond rationaliy and it is an ambitious future goal to indeed prove a few of our predicted equivalences of vertex tensor subcategories. 
Section \ref{sec:nine} discusses the affine vertex operator superalgebra of the Lie superalgebra $\mathfrak d(2, 1;-\lambda)$ at level one for generic $\lambda$. We denote this deformable family of affine vertex operator superalgebras by the symbol $D(2, 1;-\lambda)_1$. 
 It serves as a first example of a junction \voa{} for the Lie group $SU(2)$ and here we can prove the \voa{} properties predicted by gauge theory. 
Generalizing the $SU(2)$ example to Lie groups of ADE-type is currently under investigation.

\subsubsection{Conjectures on vertex operator algebras}

Here we describe briefly the conjectures which may be interesting for vertex algebra experts, independently of the gauge theory and Geometric Langlands applications. 

The notion of a deformable family of \voas{} has been introduced in \cite{CL1, CL2}. These are \voas{} over a ring of functions, such that the \voas{} are free modules over this ring. Our picture predicts a zoo of deformable families of new \voas. For this let $\g$ be a reductive simply-laced Lie algebra. Pick a positive integer $n$ and let $P^+_n$ be the set of all dominant integral weights of $\g$ 
with the property that $n(\lambda,\lambda)$ in $\mathbb Z$. 
Denote by $M_{\Psi, \lambda}$ the Weyl module of highest-weight $\lambda$ of the Kac-Mody \voa{} $V_{\Psi-h}(\g)$. 

The gauge theory construction implies the conjecture
\begin{conj}\label{conj:voaextension1}
Let $\Psi$ be a complex number and $\Psi'$ satisfy
\[
\frac{1}{\Psi} + \frac{1}{\Psi'} =n 
\]
then  the $V_{\Psi-h}(\g)\otimes V_{\Psi'-h}(\g)$-module
\[
\Agn = \bigoplus_{\lambda \in P^+_n} M_{\Psi, \lambda} \otimes M_{\Psi', \lambda}
 \] 
 can be given the structure of a simple vertex operator superalgebra. 
 \end{conj}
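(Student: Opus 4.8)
\emph{Proof strategy.} The plan is to realize $\Agn$ as a (super)commutative algebra object in a braided tensor category of modules for $V_{\Psi-h^\vee}(\g)\otimes V_{\Psi'-h^\vee}(\g)$, and then to invoke the general dictionary between such algebra objects and \voa{} extensions. The categorical input is the following. For generic $\Psi$ the category generated by the Weyl modules $M_{\Psi,\lambda}$ of $V_{\Psi-h^\vee}(\g)$ carries a vertex/braided tensor structure (Huang--Lepowsky--Zhang), which is semisimple with the classical $\g$ fusion rules and, as a braided category, is equivalent to a category of $U_q(\g)$-modules with $q=e^{2\pi i/\Psi}$; this is the category $C_{\g,q}$, and $\overline{C}_{\g,q}\simeq C_{\g,q^{-1}}$. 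Since $1/\Psi+1/\Psi'=n\in\ZZ$ we have $q':=e^{2\pi i/\Psi'}=q^{-1}$, so the Deligne product $C_{\g,q}\boxtimes C_{\g,q'}\simeq C_{\g,q}\boxtimes\overline{C}_{\g,q}$ embeds as a braided tensor subcategory of $\mathrm{Rep}\,V_{\Psi-h^\vee}(\g)\otimes V_{\Psi'-h^\vee}(\g)$, and $\bigoplus_{\lambda} M_{\Psi,\lambda}\otimes M_{\Psi',\lambda}$ is the object $\bigoplus_\lambda \lambda\boxtimes\lambda^\ast$ there.

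\emph{The algebra object.} In $C_{\g,q}\boxtimes\overline{C}_{\g,q}$ there is the canonical ``diagonal'' commutative algebra $\bigoplus_{\text{simple }X} X\boxtimes X^\ast$, whose multiplication is assembled from the (co)evaluation maps. I would first prove the elementary lemma that the index set $P^+_n=\{\lambda : n(\lambda,\lambda)\in\ZZ\}$ is closed under the fusion product and under $\lambda\mapsto\lambda^\ast$ (using, for simply laced $\g$, that $(\lambda,2\rho)\in\ZZ$ and $(\beta,\beta)\in2\ZZ$ for $\beta$ in the root lattice), so that $\Agn$ is a subalgebra of the diagonal algebra. Equivalently, and more concretely, one builds the products $M_{\Psi,\lambda}\otimes M_{\Psi',\lambda}\times M_{\Psi,\mu}\otimes M_{\Psi',\mu}\to\bigoplus_\nu (M_{\Psi,\nu}\otimes M_{\Psi',\nu})^{\oplus N_{\lambda\mu}^{\nu}}$ directly from the intertwining operators of the two affine vertex algebras (whose spaces are governed by the same multiplicities $N_{\lambda\mu}^{\nu}$ in both factors), pairing up the two sets of channels along the diagonal; an alternative route to these intertwiners is via Wakimoto/free-field realizations of both affine algebras inside a common $\beta\gamma$–lattice system, screened appropriately, which is essentially how the $\sltwo$ case is handled in Section~\ref{sec:nine}.

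\emph{Locality and the \voa{} axioms.} The ground state of the $\lambda$-summand has conformal weight $h_{\Psi,\lambda}+h_{\Psi',\lambda}=\tfrac12 n(\lambda,\lambda+2\rho)\in\tfrac12\ZZ$ for $\lambda\in P^+_n$, so its balancing/twist is the sign $(-1)^{n(\lambda,\lambda+2\rho)}$; this fixes the $\ZZ/2$ parity of each summand (the ``fermion number shift'' of the footnote is precisely this dependence of the parity on $(\lambda,2\rho)$, not merely on $(\lambda,\lambda)$). Combining these twists with the equivalence $\overline{C}_{\g,q}\simeq C_{\g,q^{-1}}$, the double braiding of the $\lambda$- and $\mu$-summands is computed channel by channel and, thanks to $1/\Psi+1/\Psi'=n$ together with the $P^+_n$ restriction, collapses to a single sign independent of the intermediate channel $\nu$, matching the prescribed parity. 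Hence $\Agn$ is a (super)commutative algebra object with the correct integrality of conformal weights, and the general theorem identifying such objects with \voa{} / vertex operator superalgebra extensions (Huang--Kirillov--Lepowsky in the even case, and its $\ZZ/2$-graded refinement à la Creutzig--Kanade--McRae in the super case) endows $\Agn$ with a vertex operator superalgebra structure containing $V_{\Psi-h^\vee}(\g)\otimes V_{\Psi'-h^\vee}(\g)$. Simplicity then follows from standard étale-algebra arguments: the algebra is haploid (the unit summand $\lambda=0$ occurs once), for generic $\Psi$ the summands are pairwise non-isomorphic simple $V_{\Psi-h^\vee}(\g)\otimes V_{\Psi'-h^\vee}(\g)$-modules, and the Frobenius pairing restricted to the $\lambda$- and $\lambda^\ast$-summands is non-degenerate, so any nonzero graded ideal (a sub-sum over some $S\subseteq P^+_n$) must contain the unit summand and hence equals $\Agn$.

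\emph{Main obstacle.} The genuinely hard part is not the formal algebra-object manipulation but securing its inputs for arbitrary simply laced $\g$: the existence of the vertex tensor category $C_{\g,q}$ with exactly the classical fusion rules, braiding, and twist at generic level; the braided equivalence $\overline{C}_{\g,q}\simeq C_{\g,q^{-1}}$; and the precise bookkeeping of the fermion-parity shifts needed to make the abstract (super)commutative algebra integrate to an honest vertex operator superalgebra. These are available in low rank — the case $\g=\sltwo$ is handled explicitly in Section~\ref{sec:nine} via $D(2,1;-\lambda)_1$ — but in general they remain the conjectural ingredients; granting them, the argument above is a proof.
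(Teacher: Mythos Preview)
The statement is a \emph{conjecture} in the paper; there is no proof to compare against. The paper only establishes the case $\g=\sltwo$, $n=1$ (Theorem~\ref{thm:dec}), and explicitly flags $n=2$ and general $\g$ as work in progress.

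Your categorical framework is exactly the one the paper lays out in Section~\ref{sec:examplesVOAextensions}: realize the extension as a (super)commutative algebra object via \cite{HKL, CKL, CKM}, using that the category of Weyl modules on the $\Psi'$ side is braid-reversed to the $\Psi$ side when $\Psi^{-1}+\Psi'^{-1}\in\ZZ$, and then invoke the haploid/\'etale argument for simplicity (your Proposition~\ref{prop:ff} is essentially in the paper). So at the level of strategy you and the paper agree, and you have correctly identified the genuine obstacle: the existence of the vertex tensor structure on the Weyl-module category for general simply-laced $\g$ at generic level, together with the braided equivalence to $C_{\g,q}$ and its behaviour under $q\mapsto q^{-1}$, is precisely the conjectural input the paper flags in the introduction.

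Where your account diverges from the paper is in the one case that \emph{is} proved. The paper does \emph{not} verify the $\sltwo$, $n=1$ case by running the abstract algebra-object machine. Instead it builds $D(2,1;-\lambda)_1$ concretely as a Heisenberg coset
\[
D_{k_1}=\Com\bigl(H,\;V_{k_1}(\mathfrak{sl}(2|1))\otimes V_L\bigr),
\]
exhibits explicit primary fields $X_n=\,:a^+\partial a^+\cdots\partial^{n-1}a^+:\varphi_n\phi_{\beta_n}$ inside this coset to obtain one inclusion (Lemma~\ref{lem:inclusions}), and then matches characters to force equality. Your remark that ``this is essentially how the $\sltwo$ case is handled in Section~\ref{sec:nine}'' via Wakimoto/free-field realizations is therefore not accurate: the construction goes through $\mathfrak{sl}(2|1)$ and a specific lattice, not through screened Wakimoto modules for the two $\sltwo$'s separately. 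The concrete route buys an honest proof without needing the tensor-category inputs; your abstract route, once those inputs are available, would be uniform in $\g$.
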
 
 In the case of $n=1$ and $\mathfrak g=\sltwo$ this vertex operator algebra is the affine vertex operator superalgebra of $D(2,1;-\lambda)$ at level one for generic $\lambda$ and $\Psi=1-\lambda$ and this is discussed in this paper. In the instance of $n=2$ and $\mathfrak g=\sltwo$ this is the large $N=4$ super conformal algebra at central charge $-6$. The proof is a nice application of deformable families of \voas{} and is work in progress.

 The $\Psi \to \infty$ limit of the above conjecture leads to another useful statement: 
\begin{conj}\label{conj:voainfty1}
The $\g \otimes V_{n^{-1}-h}(\g)$-module
\[
\Agno = \bigoplus_{\lambda \in P^+_n} R_{\lambda} \otimes M_{n^{-1}, \lambda}
 \] 
 where $R_{\lambda}$ is the finite-dimensional representation of $\g$ of weight $\lambda$, can be given the structure of a simple vertex operator superalgebra. 
 \end{conj}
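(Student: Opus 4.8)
The plan is to obtain Conjecture~\ref{conj:voainfty1} as the $\Psi\to\infty$ degeneration of Conjecture~\ref{conj:voaextension1}, so the first step is to upgrade the vertex superalgebra structure on $\Agn$ to a \emph{deformable family} in the sense of \cite{CL1,CL2}: a vertex algebra that is free as a module over a suitable localization $R$ of $\mathbb{C}[\Psi^{-1}]$, with its two affine subalgebras realised over $R$ as $V_{\Psi-h}(\g)$ and $V_{\Psi'-h}(\g)$ subject to $\Psi^{-1}+\Psi'^{-1}=n$, and with underlying $R$-module $\bigoplus_{\lambda\in P^+_n}M^R_{\Psi,\lambda}\otimes M^R_{\Psi',\lambda}$. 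I would fix once and for all a finite set of strong generators --- the affine currents of the two factors together with a choice $\{X_\mu\}$ of primary vectors generating the lowest graded pieces of the summands for $\mu$ ranging over the fundamental weights lying in $P^+_n$ --- and then rescale the first-factor currents and the fields $X_\mu$ by the powers of $\Psi$ dictated by requiring simultaneously that (i) every OPE structure constant, a priori an element of $\mathrm{Frac}(R)$, become regular at $\Psi^{-1}=0$, and (ii) the $\Psi\to\infty$-contracted first-factor subalgebra stay nonzero in the limit. Condition (ii) is precisely what forces the contraction of each Weyl module $M_{\Psi,\lambda}$ to be $R_\lambda$ tensored with the vacuum-type module of the limiting first factor, so that the fibre at $\Psi^{-1}=0$ acquires the shape in the statement.

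With such a family in hand, I would specialise at $\Psi^{-1}=0$ --- noting that this is already a convergent specialisation on the second factor, where it sets $\Psi'=n^{-1}$. The rescaled structure constants now extend across $\Psi^{-1}=0$, producing a vertex algebra on $\bigoplus_\lambda \overline{M}_\lambda$ with $\overline{M}_\lambda$ the specialisation of $M^R_{\Psi,\lambda}\otimes M^R_{\Psi',\lambda}$, and one checks $\overline{M}_\lambda\cong R_\lambda\otimes M_{n^{-1},\lambda}$ as a module over the limiting algebra $\g\otimes V_{n^{-1}-h}(\g)$. The Borcherds (equivalently, associativity) identities for the limit follow by specialising the corresponding identities valid over $R$, which depend polynomially on $\Psi^{-1}$, and the conformal vector is inherited from the family after the same rescaling, so the limit is a vertex operator superalgebra. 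Simplicity does \emph{not} pass to the limit automatically; I would establish it separately, using that $\Agno$ is generated over its simple subalgebra $\g\otimes V_{n^{-1}-h}(\g)$ by the fields $X_\mu$ and that each summand $R_\lambda\otimes M_{n^{-1},\lambda}$ is indecomposable and generated by its lowest piece, so that any nonzero ideal must meet the vacuum summand --- or, for $n=1$ and $\g$ of small rank, by a direct Zhu-algebra computation.

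The hard part is the quantitative control of the limit: the extension OPEs $X_\mu(z)X_\nu(w)$, expanded into the summands $M^R_{\Psi,\lambda}\otimes M^R_{\Psi',\lambda}$, have coefficients rational in $\Psi^{-1}$ with possible poles at $\Psi^{-1}=0$, and one must prove that a \emph{single} choice of rescalings clears all of these poles at once while keeping both factors nondegenerate --- equivalently, that the rescaled family is flat at $\Psi^{-1}=0$ with fibre of the expected graded dimension, with no jump in the dimensions of null spaces at any conformal weight. Folded into this is the categorical input shared with Conjecture~\ref{conj:voaextension1}: that the modules $M_{n^{-1},\lambda}$, $\lambda\in P^+_n$, at the (non-admissible) level $n^{-1}-h$ generate a rigid braided tensor category whose fusion and braiding match $C_{\g,q}$ at $q=e^{2\pi i n}=1$, hence reduce to those of $\mathrm{Rep}(\g)$ restricted to $P^+_n$; granted this, the extension is essentially forced, being the image of the regular algebra $\mathcal{O}(\g)=\bigoplus_\lambda R_\lambda\otimes R_\lambda^{*}$ under the symmetric tensor functor sending $R_\lambda^{*}\mapsto M_{n^{-1},\lambda}$, and the role of the degeneration argument is to promote this abstract commutative algebra object to an honest vertex operator superalgebra with the stated underlying module. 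As cross-checks I would match the output against known instances --- for $n=1$ the algebra $\Agno$ should be a chiral-differential-operators type vertex algebra attached to $G$, and for $n=2$, $\g=\sltwo$, the $\Psi\to\infty$ limit of the large $N=4$ superconformal algebra at $c=-6$ mentioned after Conjecture~\ref{conj:voaextension1}.
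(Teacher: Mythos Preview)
The statement you are trying to prove is explicitly a \emph{conjecture} in the paper, not a theorem: the paper does not supply a proof, only motivation and a few verified low-rank instances (e.g.\ for $\g=\sltwo$ and $n=1,2$). So there is no ``paper's own proof'' to compare against.

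Your strategy has a structural gap: you propose to deduce Conjecture~\ref{conj:voainfty1} from Conjecture~\ref{conj:voaextension1} by taking a $\Psi\to\infty$ limit, but Conjecture~\ref{conj:voaextension1} is itself unproven in the paper (indeed the paper says only that the $n=1$, $\g=\sltwo$ case is $D(2,1;-\lambda)_1$, and that $n=2$, $\g=\sltwo$ is work in progress). Your argument therefore reduces one conjecture to another, not to known results. The ``hard part'' you flag --- finding a single rescaling making the family flat at $\Psi^{-1}=0$ with no jump in null spaces --- is exactly the content of the deformable-family construction, and you have not indicated how to establish it beyond asserting it should work. Likewise, the categorical input you invoke (rigidity and the correct braiding for Weyl modules at the non-generic, non-admissible level $n^{-1}-h$) is precisely what is not currently available and is part of why the statement remains conjectural.

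Two smaller points on your cross-checks: for $n=1$, $\g=\sltwo$ the paper identifies $\Agno$ not with a CDO-type algebra but with the coset $\mathrm{Com}(L_1(\sltwo),L_1(\mathfrak{psl}(2|2)))$; and for $n=2$, $\g=\sltwo$ the paper points to the small $N=4$ superconformal algebra at $c=-9$ (the $\Psi\to\infty$ limit of the large $N=4$ at $c=-6$), so your expectations there are slightly misaligned with the paper's stated examples.
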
 
 In the instance of $\mathfrak g=\sltwo$ and $n=1,2$ one gets the coset vertex algebra $\text{Com}(L_1(\sltwo), L_1(\psl))$  (which is the rectangular $W$-algebra of $\mathfrak{sl}(4)$ at level $-5/2$ by Remark 5.3 of \cite{CKLR}) respectively the small $N=4$ super conformal algebra at central charge $-9$ studied in \cite{AdaN4}. For general $n$ but still $\g=\sltwo$ the even subalgebras of these algebras are constructed in \cite{C} and denoted by $\mathcal Y_n$.

The construction can be further extended with the help of the W-algebra $W_\Psi(\g) \simeq W_{\Psi^{-1}}(\g)$, defined as the 
regular Drinfield-Sokolov reduction of $V_{\Psi-h}(\g)$. Recall that $W_\Psi(\g)$ has a distinguished family of 
modules $M_{\Psi, \lambda, \lambda'} \simeq M_{\Psi^{-1}, \lambda', \lambda}$ labelled by two weights of $\g$. 
The gauge theory construction implies the conjecture
\begin{conj}\label{conj:voaextension1}
Let $\Psi_0, \cdots \Psi_{m+1}$ be a collection of numbers which satisfy
\[
\frac{1}{\Psi_i} + \Psi_{i+1} =n_i
\]
then  the $V_{\Psi_0-h}(\g) \otimes \left( \bigotimes\limits_{i=1}^m W_{\Psi_i}(\g)\right) \otimes V_{\Psi_{m+1}^{-1}-h}(\g)$-module
\[
\Agni = \bigoplus_{\lambda_i \in P^+_{n_i}} M_{\Psi_0, \lambda_0} \otimes \left(\bigotimes_{i=1}^m M_{\Psi_i, \lambda_{i-1},\lambda_i}\right) \otimes M_{\Psi_{m+1}^{-1}, \lambda_m}
 \] 
 can be given the structure of a simple vertex operator superalgebra. 
 \end{conj}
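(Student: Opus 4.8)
The plan is to exhibit $\Agni$ as a commutative (super)algebra object $A$ in an ind-completion of the Deligne product $\mathcal C_0\boxtimes\mathcal C_1\boxtimes\cdots\boxtimes\mathcal C_{m+1}$, where $\mathcal C_0$ is a vertex tensor category of modules for $V_{\Psi_0-h}(\g)$ containing the Weyl modules and braided-equivalent to $C_{\g,e^{2\pi i/\Psi_0}}$, each $\mathcal C_i$ ($1\le i\le m$) is the category of ``nice'' modules for $W_{\Psi_i}(\g)$ braided as $C_{\g,e^{2\pi i/\Psi_i}}\boxtimes C_{\g,e^{2\pi i\Psi_i}}$, and $\mathcal C_{m+1}$ is the analogue of $\mathcal C_0$ for $V_{\Psi_{m+1}^{-1}-h}(\g)$, braided as $C_{\g,e^{2\pi i\Psi_{m+1}}}$; one then invokes the general dictionary between such algebra objects and vertex operator superalgebra extensions, in the spirit of Huang--Kirillov--Lepowsky. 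This splits the conjecture into four tasks: (i) produce the braided tensor categories $\mathcal C_i$ with the stated braidings and with rigidity; (ii) check that $A$ is a legitimate object of the ind-completion; (iii) endow $A$ with an associative commutative multiplication; (iv) prove the extension is simple. Since the $\Psi_i$ are governed by finitely many free parameters constrained only by $\Psi_i^{-1}+\Psi_{i+1}=n_i$, I would carry (ii)--(iv) out first at \emph{generic} parameters, where the categories are semisimple and, by Kazhdan--Lusztig-type theorems, braided-equivalent to generic quantum-group categories $\mathrm{Rep}_q(\g)$, and then transport the whole VOA structure to every admissible specialization using the deformable-family formalism of \cite{CL1,CL2}, exactly as is done in this paper for $\g=\sltwo$, $n=1$, $m=0$.

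For the generic construction, write $A=\bigoplus_{\vec\lambda}X_{\vec\lambda}$ with $X_{\vec\lambda}=M_{\Psi_0,\lambda_0}\boxtimes M_{\Psi_1,\lambda_0,\lambda_1}\boxtimes\cdots\boxtimes M_{\Psi_m,\lambda_{m-1},\lambda_m}\boxtimes M_{\Psi_{m+1}^{-1},\lambda_m}$, the sum over $\lambda_i\in P^+_{n_i}$. The multiplication $A\otimes A\to A$ is assembled factorwise from intertwining operators: the fusion rules of Weyl modules and of the nice $W_{\Psi_i}(\g)$-modules reproduce those of $\mathrm{Rep}_q(\g)$, so each component map $X_{\vec\lambda}\boxtimes X_{\vec\mu}\to X_{\vec\lambda+\vec\mu}$ lies in a nonzero $\mathrm{Hom}$-space, and a coherent family of normalizations is pinned down by a coherence/cocycle argument inside the rigid vertex tensor category. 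The mechanism that lets a shared label $\lambda_i$ couple two adjacent factors is precisely the operation $\boxtimes_{C_{\g,q_i}}$ of the introduction: summing $M_{\Psi_i,\lambda_{i-1},\lambda_i}\boxtimes M_{\Psi_{i+1},\lambda_i,\lambda_{i+1}}$ over $\lambda_i$ yields a \emph{local} object because the $\lambda_i$-slot of the left factor braids in $C_{\g,e^{2\pi i/\Psi_i}}$ while the $\lambda_i$-slot of the right factor braids in $C_{\g,e^{2\pi i\Psi_{i+1}}}$, and these two braided categories are mutually opposite exactly when $\Psi_i^{-1}+\Psi_{i+1}\in\ZZ$; the same holds at the two ends via $\Psi_0^{-1}+\Psi_1=n_0$ and $\Psi_m^{-1}+\Psi_{m+1}=n_m$.

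Commutativity of $A$ is equivalent to triviality of the self-monodromy, which factors over the Deligne product. By the locality just described, the contributions of adjacent factors cancel in each shared slot, so on $X_{\vec\lambda}$ the monodromy collapses to the ribbon twist $\theta_{X_{\vec\lambda}}$. A computation of lowest conformal weights shows that, after these cancellations, the residual $\lambda_i$-dependence of $\theta_{X_{\vec\lambda}}$ is controlled by $n_i(\lambda_i,\lambda_i+2\rho)\bmod 2\ZZ$; since $(\lambda_i,2\rho)\in\ZZ$ and $n_i\in\ZZ$ for simply-laced $\g$, this lands in $\ZZ$ precisely when $\lambda_i\in P^+_{n_i}$ --- which is why the direct sum runs over exactly those weights --- and the surviving sign is the fermion-number grading that makes $A$ a \emph{super}algebra. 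Hence $\theta_{X_{\vec\lambda}}=\pm 1$, the monodromy is trivial, and $A$ is commutative. Simplicity follows because the decomposition is multiplicity-free: a nonzero vertex-algebra ideal is a submodule, hence a sub-sum of the summands $X_{\vec\lambda}$; fusing a nonzero summand with its dual produces the vacuum summand, after which closure under all OPEs forces the ideal to equal $A$. The structure constants of the extension, organized over the base ring of the deformable family, are polynomial in the parameters and the family is free over that ring, so the VOA axioms, associativity and simplicity --- verified on the Zariski-dense generic locus --- persist under specialization to every $(\Psi_0,\dots,\Psi_{m+1})$. There is also an inductive reformulation: since $W_\Psi(\g)$ admits a coset description, $\Agni$ is obtained from $\Agn$ (the $m=0$ case) by iterating the elementary maps $A_{\g,\Psi}\to B_{\g,q^{-1}}$, $B_{\g,q^{-1}}\to A_{\g,\Psi}$ and $T$ of the introduction, so it would suffice to show that each of those maps is well defined on the space of VOAs-with-nice-modules.

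The main obstacle is task (i), together with associativity in (iii). Establishing that the module categories of $V_{\Psi-h}(\g)$ and of $W_\Psi(\g)$ at the relevant \emph{non-generic} rational levels are braided-equivalent to $C_{\g,q}$ and $C_{\g,q}\boxtimes C_{\g,q^\vee}$, that they are rigid, and that the fermion-number shifts and weight restrictions flagged in the body are the correct ones, is at present only partially known, and the whole construction is conditional on it. Granting it, the remaining difficulty is that the factorwise multiplication need not be associative for \emph{any} normalization: one must show that the associativity obstruction --- a class in a degree-three cohomology of the pertinent fusion category --- vanishes, and this is exactly where the fine structure of $\mathrm{Rep}_q(\g)$, and ultimately the gauge-theory input guaranteeing that the junction VOA exists, becomes indispensable. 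Everything else is bookkeeping or a routine application of the deformable-family machinery.
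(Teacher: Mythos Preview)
The statement you are attempting to prove is explicitly a \emph{conjecture} in the paper (Conjecture~\ref{conj:voaextension1}); the paper offers no proof, only the gauge-theory motivation and the partial evidence assembled in Sections~\ref{sec:eight} and~\ref{sec:nine}. So there is no ``paper's own proof'' to compare against.

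That said, your strategy is precisely the one the paper's framework points toward: realize $\Agni$ as a commutative superalgebra object via the HKL/CKM dictionary, use the conjectural braided equivalences $\fM_{\g,\Psi}\simeq C_{\g,q}$ and $\fM^W_{\g,\Psi}\simeq C_{\g,q}\boxtimes C_{\g,q^\vee}$ from the introduction, and handle the deformation in $\Psi$ through the deformable-family machinery of \cite{CL1,CL2}. The paper carries this out only in the simplest case $\g=\sltwo$, $m=0$, $n=1$ (Theorem~\ref{thm:dec}), where the answer is identified concretely as $D(2,1;-\lambda)_1$ and the decomposition is verified by an explicit construction and a character computation rather than by abstract categorical arguments.

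You correctly identify the genuine obstructions: the required vertex-tensor-category structure and rigidity on the relevant module categories of $V_{\Psi-h}(\g)$ and $W_\Psi(\g)$ at non-generic levels are not established in the literature, and the vanishing of the associativity obstruction (the $H^3$ class) is not automatic. These are exactly the reasons the paper leaves the statement as a conjecture and flags the general simply-laced case as ``work in progress.'' Your outline is a fair roadmap, but it is not a proof, and neither is anything in the paper.
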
 
 
These conjectures  can be extended to situations where $\g$ is not simply laced, with the help of the Feigin-Frenkel duality 
properties of the W-algebra $W_\Psi(\g)$. In this work we however concentrate on the simply-laced case.
 
The gauge theory construction will also often involve extra lattice \voa{} ingredients.  
This will lead to variants of \ref{conj:voaextension1} of the form 
 \begin{conj}\label{conj:voaextension2} The following can be given the structure of a simple vertex operator superalgebra as well
\[
\BG = \bigoplus_{\nu \in P/Q} \bigoplus_{\substack{\lambda \in P^+\\ \lambda \in \nu +Q}} M_{\Psi, \lambda} \otimes M_{\Psi', \lambda} \otimes V_\nu'
\]
where $V_\nu'$ is a certain lattice \voa{} module that ensures half-integer conformal weight grading.
\end{conj}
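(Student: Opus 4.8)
The plan is to realize $\Bg$ as a simple conformal extension of the vertex operator subalgebra $\mathcal V := V_{\Psi - h^\vee}(\g)\otimes V_{\Psi' - h^\vee}(\g)\otimes L$, where $L$ is the rational lattice \voa{} whose irreducible modules are the $V_\nu'$ of the statement (its lattice being the appropriate self-dual lift of $P/Q$), and where, as for $\Agn$, one takes $\tfrac1\Psi+\tfrac1{\Psi'}=n$ a positive integer. By the commutative-algebra-object description of \voa{} extensions (Huang--Kirillov--Lepowsky, refined by Creutzig--Kirillov--McRae and, in the non-finite vertex-tensor-category setting, by Creutzig--McRae and coauthors), it is enough to endow the $\mathcal V$-module $\Bg=\bigoplus_{\nu}\bigoplus_{\lambda}M_{\Psi,\lambda}\otimes M_{\Psi',\lambda}\otimes V_\nu'$ with the structure of a connected, commutative, associative superalgebra object with trivial (resp.\ $\pm1$) self-monodromy inside a braided tensor category containing all the summands. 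The natural ambient category is $C_{\g,q}\boxtimes C_{\g,q'}\boxtimes\mathrm{Rep}(L)$ with $q=e^{\frac{2\pi i}{\Psi}}$, $q'=e^{\frac{2\pi i}{\Psi'}}$, and $C_{\g,q}$ the braided tensor category of Weyl modules of $V_{\Psi-h^\vee}(\g)$; I take its existence, with the required ribbon structure and rigid duals on Weyl modules, as an input (available at generic $\Psi$, and with more care at the non-generic levels the S-duality picture needs).

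The structural simplification is the hypothesis $\tfrac1\Psi+\tfrac1{\Psi'}=n\in\ZZ$, which forces $qq'=1$, hence $C_{\g,q'}=C_{\g,q^{-1}}\simeq\overline{C}_{\g,q}$ by the braided equivalence recalled in the introduction. Under this identification the diagonal piece $\bigoplus_{\lambda\in P^+}M_{\Psi,\lambda}\otimes M_{\Psi',\lambda}$ is the canonical (full-center, ``regular representation'') algebra object $\bigoplus_\lambda X_\lambda\boxtimes\overline{X_\lambda}$ of $C_{\g,q}\boxtimes\overline{C}_{\g,q}$, which in a ribbon category with duals on the relevant objects is automatically a commutative Frobenius algebra. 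What keeps this diagonal object from already being a \voa{} is a residual anomaly valued in the discriminant form of the center $P/Q$: the conformal weight $h_{M_{\Psi,\lambda}}+h_{M_{\Psi',\lambda}}=\tfrac n2(\lambda,\lambda+2\rho)$ misses $\tfrac12\ZZ$ by the $\tfrac n2(\lambda,\lambda)\bmod\ZZ$ term, which depends only on $[\lambda]\in P/Q$, and the self-monodromy likewise carries a $P/Q$-graded sign. The lattice module $V_\nu'$ is chosen precisely so that the discriminant form of $L$ is opposite to this $P/Q$-anomaly; the verification that the abelian $3$-cocycle on $P/Q$ built from $(h_{V_\nu'})$ and the lattice braiding is inverse to the affine one is a finite, $\Psi$-independent combinatorial check, which also decides exactly when $\Bg$ is an ordinary \voa{} rather than genuinely super (the fermion-number shifts flagged in the footnote).

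With the anomaly absorbed, the multiplication on $\Bg$ is assembled from the canonical Frobenius product on the diagonal tensored with the lattice intertwining operators, and commutativity and associativity reduce to the ribbon and hexagon identities on Weyl modules in $C_{\g,q}$ together with the quadratic-form relations for $L$; simplicity then follows because $\Bg$ is a connected haploid special symmetric Frobenius algebra object, which translates to simplicity of the extension. In parallel --- the route the remark on deformable families points to --- one can instead view $\Bg$ as a module over a deformable family of \voas{} over $\mathbb C[\Psi^{\pm1}]$, write down the candidate generating fields, check that each conformal weight space is finite-dimensional (positivity of $n$, equivalently the quadratic growth of $(\lambda,\lambda)$ on $P^+$, prevents the multiplicities from accumulating at any given weight), verify the Jacobi identity on a Zariski-dense set of $\Psi$ --- e.g.\ at rational $\Psi$, where $V_{\Psi-h^\vee}(\g)$ is an admissible affine \voa{} and the extension degenerates to an already-established simple-current/conformal extension, or from an explicit free-field (Wakimoto plus $bc\beta\gamma$ plus lattice) realization with screenings --- and conclude by the rigidity of deformable families that the identity holds identically in $\Psi$.

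The hard part will be the categorical step: proving that $\bigoplus_\lambda X_\lambda\boxtimes\overline{X_\lambda}$ really is a commutative algebra object in $C_{\g,q}\boxtimes\overline{C}_{\g,q}$. The standard ``canonical algebra is commutative Frobenius'' statements are proved for finite semisimple modular tensor categories, whereas $C_{\g,q}$ at generic $q$ is infinite and non-semisimple (though ribbon, with rigid duals on Weyl modules), so making sense of the infinite direct sum and of its universal/Frobenius properties requires the recent generic-level vertex-tensor-category technology, and it is here that the argument is currently conjectural. Secondary difficulties are the precise determination of $L$ and $\{V_\nu'\}$ together with the cocycle identity above, and --- logically separate but substantial --- establishing that $C_{\g,q}$ carries the needed braided tensor structure at whatever non-generic levels the S-duality application forces upon us.
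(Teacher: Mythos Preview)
The statement is labelled a Conjecture in the paper, and the paper does not prove it. There is no ``paper's own proof'' to compare against; the general case is recorded as open, with the generalization to all simply-laced $\g$ explicitly described as work in progress.

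Your outline is nonetheless exactly the categorical strategy the paper itself sketches in Section~\ref{sec:examplesVOAextensions}: build the extension as the regular-representation (coend) algebra object $\bigoplus_X X\otimes\tau(\overline X)$ in $C\boxtimes\overline C$ and invoke the HKL/CKM dictionary between commutative (super)algebra objects and \voa{} extensions. The paper flags the same obstruction you do --- the required vertex-tensor-category structure on Weyl modules of $V_{\Psi-h^\vee}(\g)$ at generic level, and the infinite non-semisimple analogue of ``the canonical algebra is commutative Frobenius'' --- and does not claim to have overcome it. Your identification of this as the hard step is correct and matches the paper's own assessment.

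What the paper \emph{does} prove is the special case $\g=\mathfrak{sl}_2$, $n=1$, where $\Bg$ is $D(2,1;-\lambda)_1$ with the lattice piece being $L_1(\mathfrak{sl}_2)$ and its nontrivial module (Theorem~\ref{thm:dec}). The proof there is \emph{not} your abstract route: it is an explicit construction. The authors realize $D(2,1;-\lambda)_1$ concretely as a Heisenberg coset inside $V_k(\mathfrak{sl}(2|1))\otimes V_L$ for a specific lattice $L$, exhibit by hand the highest-weight vectors generating each summand $L_{k_1}(m)\otimes L_{k_2}(m)\otimes L_1(\bullet)$ (Lemma~\ref{lem:inclusions}), and then match characters to upgrade the inclusion to an equality. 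The deformable-families technique you mention as a parallel route is invoked, but only to identify the $\Psi\to\infty$ limit, not to establish the extension itself.

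So your proposal is the intended general program, honestly labelled as conjectural at its core step; the paper's actual argument in the one case it settles proceeds by direct construction and character comparison, which sidesteps the categorical machinery entirely but does not generalize in any obvious way.
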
 

In particular, we will produce \voa s which have a particularly nice behaviour under quantum Drinfield-Sokolov reduction. The $n=1$ case of \ref{conj:voaextension2}
has the property that a regular Drinfield-Sokolov reduction of the second $\g$ Kac-Moody subalgebra produces $V_{\Psi-h-1}(\g)\otimes L_1(\g)$,
where $L_1(\g)$ is the level $1$ WZW \voa{}. Furthermore, the Drinfield-Sokolov reduction of spectrally flown modules gives Weyl modules 
of the form $M_{\Psi-h-1,\lambda} \otimes L_1(\g)$. These properties will be instrumental for the Geometric Langlands interpretation of the \voa{} and are explained for $\mathfrak g=\sltwo$ in section \ref{sec:nine}, while explaining this for all ADE-type cases is again work in progress. 

For $\lambda=0$ and  $\mathfrak g=\sltwo$ we verify the correctness of this conjecture and for arbitrary $\lambda$ we can perform a computation of the Euler-Poincar\'e character which strongly supports our conjecture as well.  
Going beyond trivial $\lambda$ amounts to leaving the category $\mathcal O$ and studying quantum Hamiltonian reduction beyond category $\mathcal O$. As one can also perform spectral flow on the ghosts, one can reformulate this problem to studying the semi-infinite cohomology inside category $\mathcal O$ of 
\[
d_\lambda = d_{st} +\chi\circ \sigma^\lambda
\]
with $d_{st}$ the standard differential of the semi-infinite cohomology of $\widehat\g$, $\sigma^\lambda$ the spectral flow automorphism corresponding to the weight $\lambda$ and $\chi$ the usual character. The conjecture that arises is
\begin{conj}
The $\lambda$-twisted reduction of a Weyl module is 
\[
H_{d_\lambda}(M_{\Psi, \nu}) \cong L^{\Psi}(\gamma_{\nu-\Psi(\lambda +\rho^\vee)}) \cong L^{\Psi^{-1}}(\gamma_{\lambda-\Psi(\nu +\rho^\vee)}) .
\]
\end{conj}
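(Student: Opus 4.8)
The plan is to untwist the differential by spectral flow and then run the standard machinery of Drinfeld--Sokolov reduction on the resulting module, keeping careful track of the fact that one leaves category $\mathcal{O}$. Since $\sigma^\lambda$ is an automorphism of $\widehat\g$ which acts compatibly on the $bc$-ghost system entering the BRST complex, the twisted complex $(C(M_{\Psi,\nu}),\,d_\lambda)$ is isomorphic to the \emph{untwisted} complex $(C(\sigma^\lambda M_{\Psi,\nu}),\,d_{st})$; equivalently, as the paper notes, one may keep $M_{\Psi,\nu}$ in category $\mathcal{O}$ and twist only the character term, at the price of replacing the conformal grading by a $\lambda$-shifted one. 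Either way the problem decomposes into: (i) computing the graded Euler-Poincar\'e character of the cohomology; (ii) proving concentration of the cohomology in a single homological degree; (iii) identifying the resulting $W_\Psi(\g)$-module and matching highest weights. The second isomorphism in the conjecture will then follow from Feigin--Frenkel duality $W_\Psi(\g)\simeq W_{\Psi^{-1}}(\g)$.

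Step (i) is essentially the Euler-Poincar\'e computation already alluded to in the paper: the alternating sum of characters depends only on the graded vector space $C(M_{\Psi,\nu})$ and not on the differential, so it equals the spectrally shifted version of the character computed by Feigin--Frenkel and Arakawa for the reduction of ordinary Verma modules. A manipulation with the affine Weyl--Kac denominator identity then identifies it with the character of the $W_\Psi(\g)$ Verma module $M^W(\gamma_{\nu-\Psi(\lambda+\rho^\vee)})$. Consequently, once (ii) is known, the cohomology is forced into the expected degree and must be a quotient of that Verma module.

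Step (ii) is the main obstacle. For $\lambda=0$ the vanishing $H^{\neq 0}=0$ on Weyl modules is Arakawa's exactness theorem for Drinfeld--Sokolov reduction on category $\mathcal{O}$, but $\sigma^\lambda M_{\Psi,\nu}$ is no longer in $\mathcal{O}$, so that argument does not apply verbatim. The plan is to filter the twisted complex by a suitable charge --- for instance a $\Delta$-filtration for which the associated graded differential splits as a Koszul piece plus a Chevalley--Eilenberg differential for $\widehat{\mathfrak n}_+$ --- and to run the resulting spectral sequence. With respect to the $\lambda$-shifted conformal grading each graded piece is still bounded below with finite-dimensional weight spaces, so the analysis can be carried out weight space by weight space, where it reduces to a Kostant-type acyclicity statement for a finite complex. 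A possibly more robust alternative is to exploit the deformable-family structure in $\Psi$: the dimensions of the cohomology weight spaces are upper semicontinuous in $\Psi$ and bounded below by the Euler-Poincar\'e character of (i), so a single semicontinuity/degeneration argument from a locus where the answer is already known (e.g.\ $\lambda=0$, or a regime in which all ingredients become free fields) would force concentration in the expected degree for generic $\Psi$, which is all the conjecture asserts.

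Finally, for step (iii), the image of the highest-weight vector of $\sigma^\lambda M_{\Psi,\nu}$ under the quasi-isomorphism is a cocycle of minimal conformal weight whose class generates the cohomology; evaluating its $W$-algebra highest weight via the Miura/free-field realisation yields exactly $\gamma_{\nu-\Psi(\lambda+\rho^\vee)}$, so $H_{d_\lambda}(M_{\Psi,\nu})$ is a cyclic highest-weight module, hence a quotient of $M^W(\gamma_{\nu-\Psi(\lambda+\rho^\vee)})$; comparing with the character of (i) and using that this Verma module is simple for generic $\Psi$ gives the first isomorphism. For the second, one may either repeat the argument starting from the Feigin--Frenkel dual presentation $W_\Psi(\g)\simeq W_{\Psi^{-1}}(\g)$ --- which interchanges the two weight labels and sends $\Psi\mapsto\Psi^{-1}$, producing the label $\gamma_{\lambda-\Psi(\nu+\rho^\vee)}$ --- or invoke the Feigin--Frenkel isomorphism directly and check on highest weights that it matches $L^{\Psi}(\gamma_{\nu-\Psi(\lambda+\rho^\vee)})$ with $L^{\Psi^{-1}}(\gamma_{\lambda-\Psi(\nu+\rho^\vee)})$, a finite computation with the known action of the duality on $W$-module labels. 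For $\g=\sltwo$ every step can be carried out explicitly, in agreement with the $\lambda=0$ verification reported in Section~\ref{sec:nine}.
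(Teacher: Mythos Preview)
The statement you are trying to prove is explicitly a \emph{Conjecture} in the paper, not a theorem; the paper does not claim to prove it. What the paper actually does (Section~\ref{sec:DSspectralflow}) is carry out your Step~(i) for $\g=\sltwo$: it computes the Euler--Poincar\'e character of the reduction of a spectrally flowed Weyl module and matches it with the character of the expected Virasoro module. It then states, just as you do, that ``it thus remains to prove a vanishing theorem for cohomology of spectrally flown modules,'' sketches the same untwisting idea you use (move the spectral flow onto the ghosts so that $d_{st}$ is unchanged and only $\chi$ is twisted, at the cost of a shift in ghost degree), and says ``we hope to be able to work out the arguments for this soon.'' So your overall strategy is essentially identical to the paper's own outline.

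The substantive issue is that neither you nor the paper actually completes Step~(ii). Your two proposed routes --- a filtration/spectral-sequence argument \`a la Arakawa, or a semicontinuity argument in $\Psi$ --- are both reasonable heuristics, but neither is carried out, and both have real obstacles. For the first, the standard vanishing arguments rely on category~$\mathcal{O}$ features (boundedness of weights from above in the relevant direction, or a suitable parabolic decomposition) that genuinely fail for $\sigma^\lambda M_{\Psi,\nu}$; saying ``each graded piece is still bounded below'' does not by itself buy you Kostant-type acyclicity. For the second, semicontinuity in $\Psi$ does not obviously connect the $\lambda\neq 0$ problem to the $\lambda=0$ one, since $\lambda$ is a discrete (coweight-lattice) parameter, not a continuous deformation parameter; you would need a locus where the twisted reduction is already known, and the paper does not provide one. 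In short: your proposal is a faithful restatement of the paper's own program, and shares the paper's gap --- the vanishing theorem for spectrally flowed modules remains open.
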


Finally, we have an observation whose gauge theory meaning is not fully clear to us. 
Let $H_{DS,m}$ be the Drinfield-Sokolov functor corresponding to the $\mathfrak{sl}_2$ embedding in $\mathfrak{sl}_{n+m}$ of type $m, 1, 1, \dots, 1$. 
Then $H_{DS,m}\left(V_k(\mathfrak{sl}_{n+m})\right)$ contains $V_{k+m-1}(\mathfrak{gl}_n)$ as a sub vertex operator algebra. Let $\mathcal H$ be the Heisenberg \voa{} commuting with the $V_{k+m-1}(\mathfrak{sl}_n)$.
\begin{conj}
The classical limit $\lim\limits_{\Psi\rightarrow \infty}\ASUNn$ is a large center times a \voa{} extension of $$\Com\left(\mathcal H, H_{DS,n(N-1)-1}\left(V_{\frac{n+1}{n}-(n+1)(N-1)}(\mathfrak{sl}_{(n+1)(N-1)}) \right)\right).$$
\end{conj}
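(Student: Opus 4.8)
The plan is to attack the conjecture by first producing the claimed vertex subalgebra on the reduction side, then matching characters, and finally upgrading the character match to a genuine VOA isomorphism using the deformable family technology of \cite{CL1,CL2}. First I would carefully identify the $\mathfrak{sl}_2$ embedding of type $m,1,\dots,1$ inside $\mathfrak{sl}_{n+m}$ (here with $m=n(N-1)$ and the large rank $(n+1)(N-1)$ so that $n+m=(n+1)(N-1)$ plays the role of $n$ in the abstract statement), and verify the structural input: that $H_{DS,m}(V_k(\mathfrak{sl}_{n+m}))$ contains $V_{k+m-1}(\mathfrak{gl}_n)$, which follows from the general theory of partial Drinfeld--Sokolov reductions (the surviving currents of conformal weight one span a $\mathfrak{gl}_n$-current algebra, and its level is read off from the second-order pole of the relevant OPE after reduction). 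Then $\mathcal H$ is the Heisenberg VOA commuting with the $V_{k+m-1}(\mathfrak{sl}_n)$ inside $V_{k+m-1}(\mathfrak{gl}_n)$, and $\Com(\mathcal H, H_{DS,n(N-1)-1}(\cdots))$ is a well-defined VOA; set $k = \tfrac{n+1}{n} - (n+1)(N-1)$ as in the statement so that the level of the $\mathfrak{sl}_n$ subalgebra matches the critically shifted level appearing in $\Agno$ with $\g=\mathfrak{sl}_N$, $\Psi^{-1}=n^{-1}$, which is the content that makes the two sides comparable.

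Next I would compute the $\Psi\to\infty$ limit of $\ASUNn$ explicitly using Conjecture \ref{conj:voaextension1} (or rather its classical $\Psi\to\infty$ degeneration, Conjecture \ref{conj:voainfty1}) as the starting description: in that limit the first affine factor degenerates to the finite-dimensional representations $R_\lambda$ and a large center (the classical limit of $V_{\Psi-h}$), leaving $\bigoplus_{\lambda\in P^+_n} R_\lambda \otimes M_{n^{-1},\lambda}$ times a polynomial center. The remaining task is to recognize the non-central part as the coset $\Com(\mathcal H, H_{DS,\cdot}(V_k(\cdots)))$. For this I would use the branching of $V_{n^{-1}-h}(\mathfrak{sl}_N)$-Weyl modules inside the partially reduced $\mathfrak{sl}_{(n+1)(N-1)}$ algebra: the key is that the $\mathfrak{sl}_2$ of type $m,1,\dots,1$ produces, upon reduction, exactly an affine $\mathfrak{sl}_N$ (the $\mathfrak{gl}_n$ with $n=N-1$, extended appropriately — note $\mathfrak{gl}_{N-1}$ and $\mathfrak{sl}_N$ are related by the lattice extension hidden in the "large center times" and in the $R_\lambda$ factors), and the spectral-flow / Weyl-module images match the modules $M_{n^{-1},\lambda}$ labelled by $\lambda\in P^+_n$ because the admissibility condition $n(\lambda,\lambda)\in\mathbb Z$ is precisely the condition for a weight of $\mathfrak{sl}_{(n+1)(N-1)}$ to survive the partial reduction with integral grading.

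The main obstacle will be the character/branching computation in the last step — controlling the decomposition of the partially Drinfeld--Sokolov-reduced module over the $\mathfrak{gl}_n$ current subalgebra times its Heisenberg commutant, and showing the multiplicity spaces are exactly the finite-dimensional $R_\lambda$ (times center) rather than something larger. I would handle this via an Euler--Poincar\'e character argument: compute $\mathrm{ch}\, H_{DS,m}(V_k(\mathfrak{sl}_{n+m}))$ using the BRST/Euler--Poincar\'e formula for partial reductions (alternating sum over the relevant Kostant-type complex), expand in characters of $V_{k+m-1}(\mathfrak{gl}_n)$, strip off the Heisenberg factor, and compare term-by-term with $\prod(\text{center})\times\sum_\lambda \mathrm{ch}\,R_\lambda\,\mathrm{ch}\,M_{n^{-1},\lambda}$. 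Once the characters agree, the VOA statement follows because both sides are realized inside deformable families (the $\Psi$-dependence on the left, the $k$-dependence on the right), both are simple for generic parameter, and a conformal embedding which is a character-level isomorphism at generic parameter is an isomorphism; the "large center" and the "extension" qualifiers in the statement are exactly the bookkeeping needed to pass between $\mathfrak{gl}$ and $\mathfrak{sl}$ and between the $P/Q$-graded lattice pieces, as in Conjecture \ref{conj:voaextension2}. A secondary difficulty is that the reduction is not a functor on all of category $\mathcal O$ in the naive sense once $\lambda\neq 0$, so some care with the $\lambda$-twisted reduction (the differential $d_\lambda$ above) is needed to identify the Weyl-module images correctly; I expect this to be manageable but technical.
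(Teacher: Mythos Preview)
The statement you are trying to prove is explicitly labeled a \emph{conjecture} in the paper, and the paper does not supply a proof. The only case the paper claims as established is $\g=\mathfrak{sl}_2$ (i.e.\ $N=2$), for which it cites \cite{C}; the general $N$ case is discussed in Section~8.5 only at the level of ``it is tempting to believe\ldots''. So there is no paper proof to compare your proposal against.

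That said, your strategy has structural issues beyond the obvious one that you are relying on Conjecture~\ref{conj:voainfty1} as input, which is itself unproven in the paper for general $G$. First, there is an index slip: in the abstract setup $H_{DS,m}$ acts on $V_k(\mathfrak{sl}_{n'+m})$ and yields a $\mathfrak{gl}_{n'}$ subalgebra; specializing to the conjecture, the ambient algebra is $\mathfrak{sl}_{(n+1)(N-1)}$ and the reduction parameter is $m=n(N-1)-1$, so the surviving $\mathfrak{gl}_{n'}$ has $n'=(n+1)(N-1)-(n(N-1)-1)=N$, not $N-1$ as you suggest. This matters for your branching argument. Second, and more seriously, the heart of your plan is the Euler--Poincar\'e character computation identifying the multiplicity spaces in the coset decomposition with the finite-dimensional $R_\lambda$; you describe this as something to ``compute \ldots\ and compare term-by-term'', but this is precisely the hard combinatorial content of the conjecture, and nothing in your outline indicates how it would actually go for $N>2$. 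The $N=2$ argument in \cite{C} (sketched in Section~8.4) uses the specific structure of the decomposition of $\mathfrak{sl}(n+1)$ under the chosen $\mathfrak{sl}_2$ together with a simple-current extension step; generalizing this requires understanding the analogous branching for $\mathfrak{sl}_{(n+1)(N-1)}$, which is substantially more intricate and is exactly why the paper leaves the general case open.
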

The Conjecture is true for  $\mathfrak g=\sltwo$ \cite{C}. We remark that the \voa{} constructions of \cite{C} aimed to find chiral algebras whose character coincides with the Schur index of certain Argyres-Douglas theories considered in \cite{Buican:2015ina, Cordova:2015nma}. 

These large $\Psi$-limits are related to logarithmic CFT results. The best-known logarithmic \voas{} are the triplet algebras \cite{AM}, which are conjecturally the DS-reductions of above coset for $N=2$ \cite{C}. In general the picture is that the large $\Psi$-limit of our deformable families of conjectural \voas{} is a \voa{} with compact Lie group acting as outer automorphisms but such \voas{} are exactly constructed by Feigin and Tipunin as new interesting logarithmic \voas{} as extensions of regular $W$-algebras of simply-laced Lie algebras \cite{FT}, see also \cite{Lentner}. 
One sees that the regular DS-reduction on our potential large $\Psi$-limits of $\Agno$ coincides on the level of characters with the $W$-algebras of \cite{FT}. 

\subsection{A brief discussion of surface defects}
In this section we will very quickly touch on a construction which we will ignore in the rest of the paper, but we expect to be rather interesting. 

The four-dimensional gauge theory is equipped with families of half-BPS surface defects. A particularly important family consists of 
Gukov-Witten defects \cite{Gukov:2006jk,Gukov:2008sn}, labelled by a Levi subgroup $G_\rho$ of $G$ and a collection of continuous parameters living in the part of the Cartan torus which 
commutes with $G_\rho$. In the topological theory, only a doubly-periodic complex combination $\alpha$ of the parameters survives. 
These defects are covariant under S-duality, with $\alpha$ transforming roughly as $\alpha \to \frac{\alpha}{c \Psi + d}$. 

These defects can be inserted in our construction along a plane perpendicular to the two-dimensional junctions, sharing one direction with the 
boundary conditions or interfaces. The intersection between the surface defect and the boundary supports a collection of line defects which are a 
module category for the braided tensor category of the boundary.  

If we denote these module categories as $C[G_\rho, \alpha]$, then the intersection of the surface defect and the junction between two boundaries 
will support a class of modules $\fM[\g, \Psi; g_\rho, \alpha;\CB_L, \CB_R;\CJ]$  
which braid with the original  $\fM[\g, \Psi;\CB_L, \CB_R;\CJ]$ modules according 
to the product $C_L[G_\rho, \alpha] \times \overline{C}_R[G_\rho, \alpha]$ of the module categories associated to the two sides of the junction. 

As we concatenate junctions into more complex junctions, keeping $G_\rho$ and $\alpha$ fixed, these modules will combine accordingly.

\section{A quick review of the gauge theory setup} \label{sec:two}
\subsection{The basic boundary conditions and interfaces}

The GL-twisted ${\cal N}=4$ gauge theory depends on a choice of gauge group $G$ and a ``topological coupling'' $\Psi$, 
a complex number which is a combination of the usual gauge coupling and the choice of supercharge defining the topological twist \cite{Kapustin:2006pk}. 
In particular, the four-dimensional gauge theory can be taken to be weakly coupled for all values of $\Psi$. The most interesting questions 
one can ask, though, involve lower dimensional defects in the theory, which will be typically strongly coupled for generic values of $\Psi$. 

At this point the gauge group $G$ is a general reductive Lie group. 

It is useful to list the set of boundary conditions and interfaces for which a weakly coupled description (say at $\Psi \to \infty$) is known or conjectured \cite{Gaiotto:2017euk,Gaiotto:2008sa}. These will provide us with 
our most useful building blocks. 
\begin{itemize}
\item We already mentioned Dirichlet boundary conditions $B^D_{0,1}$. They are defined uniformly for any gauge group $G$. 
We can use the notation $B^D_{0,1}[G]$ if we need to specify the gauge group. 
\item We also mentioned the Nahm pole variants $B^\rho_{0,1}$, labelled by an $\mathfrak{su}(2)$ embedding $\rho$ in $G$.
We can use the notation $B^\rho_{0,1}[G]$ if we need to specify the gauge group. For regular Nahm pole, 
we use the notation $B_{0,1}$ or $B_{0,1}[G]$.
\item Dirichlet boundary conditions can be generalized to Dirichlet interfaces $B^D_{0,1}[G;H]$
between a $G$ and and $H$ gauge theory with $H \subset G$. The gauge group 
is reduced from $G$ to $H$ as one crosses the interface. The topological couplings $\Psi_H$ and $\Psi_G$ coincide 
up to a rescaling and shift we will describe momentarily. 
\item Dirichlet interfaces have a Nahm pole variant $B^\rho_{0,1}[G;H]$ labelled by an $\mathfrak{su}(2)$ embedding $\rho$ in $G$ which commutes 
with the image of $H$. This modifications change the relation between the topological couplings for the $G$ and $H$ gauge theories. 
We will give the precise relation momentarily. 
\item Because the $H$ gauge symmetry survives at the interface, we can also add extra matter fields at the interface, in the form of hypermultiplets 
transforming in some quaternionic representation $R$ of $H$. This gives interfaces $B^\rho_{0,1}[G;H;R]$. 
This modifications also changes the relation between the topological couplings for the $G$ and $H$ gauge theories. 
We will give the precise relation momentarily. 
\item We can specialize the interfaces above to $H = G$, with no Nahm pole but with hypermultiplets 
transforming in some quaternionic representation $R$ of $G$. We can denote that special case as $B_{0,1}[G;G;R]$ or simply $B_{0,1}[R]$.
The topological couplings at the two sides of the interface differ by an amount proportional to the second Casimir of $R$. 
\item The second class of prototypical boundary conditions is given by Neumann boundary conditions $B_{1,0}$. They are also defined uniformly for any gauge group $G$.
We can use the notation $B_{1,0}[G]$ if we need to specify the gauge group.
\item Neumann boundary conditions can be modified in a very simple way by adding $n$ units of Chern-Simons coupling at the boundary, giving boundary conditions 
we can denote as $B_{1,n}$ or $B_{1,n}[G]$. 
\item A more interesting generalization would involve additional matter fields at the boundary, 
in the form of hypermultiplets transforming into a quaternionic representation $R$ of $G$. This is compatible with generic $\Psi$ 
if and only if $G$ can be extended to a supergroup $\hat G$ by adding fermionic generators transforming in $R$ \cite{Gaiotto:2008sd}.
This leads to boundary conditions or interfaces (by a reflection trick) $B_{1,0}^{\hat G}$ (or $B_{1,0}^{\hat G}[G]$ if we need to specify the gauge group). 
Typical examples are interfaces between $U(N)$ and $U(M)$ 
associated to $U(N|M)$ and interfaces between $SO(N)$ and $Sp(2M)$ associated to $OSp(N|2M)$. 
\item A boundary Chern-Simons coupling can be added as before, leading to $B_{1,n}^{\hat G}[G]$ (or $B_{1,0}^{\hat G}[G]$ if we need to specify the gauge group).  
\end{itemize}
In principle, one may also consider Neumann boundary conditions which are enriched by a more complicated three-dimensional 
gauge theory with eight supercharges. It is relatively rare, though, for such a boundary condition to be 
compatible with the topological twist at general $\Psi$. The only examples we know of arise as the composition of simpler
interfaces of type $B_{1,0}^{\hat G}$ \cite{Gaiotto:2008sd}. We will discuss them after we learn how to compose interfaces and junctions. 

\subsection{The weakly coupled corner VOAs}
In a similar manner, one may look for weakly coupled junctions for which we can compute the corner VOA directly. 
The most general statement we may give at this time involves the VOA at the intersection between a $B^\rho_{0,1}[G;H;R]$
interface and $B_{1,n_G}^{\hat G}$, $B_{-1,-n_H}^{\hat H}$ boundary conditions or interfaces. Based on examples, we expect such a junction $\CJ_{\mathrm{cl}}$ to exist 
if $\hat H$ is a subgroup of $\hat G$ and $R$ can be extended to a representation $\hat R$ of $\hat H$ by the addition of some extra fermionic generators.
The level shifts $n_G$ and $n_H$ must also match in a manner we will discuss momentarily. See Figures \ref{fig:three} and  \ref{fig:four}.

The corner VOA is expected to be the coset \cite{Gaiotto:2017euk}
\begin{equation}
\fA[\g,\Psi; B_{1,n_G}^{\hat G},B^\rho_{0,1}[G;H;R],B_{-1,-n_H}^{\hat H};\CJ_{\mathrm{cl}}] \equiv \frac{\mathrm{DS}_\rho \hat G_{\Psi_G - n_G} \times \mathrm{Sb}^{\hat R}}{\hat H_{\Psi_H - n_H} }
\end{equation}
Here we use the following notations:
\begin{itemize}
\item $\hat G_{\Psi_G - n_G}$ is the Kac-Moody VOA with supergroup $\hat G$ and critically shifted level $\Psi - n_G$. \footnote{The non-critically-shifted level would be 
$\Psi - n_G-h^\vee_G$.}
\item $\mathrm{DS}_\rho$ denotes the operation of quantum Drinfeld-Sokolov reduction associated to the $\mathfrak{su}(2)$ embedding $\rho$ in $G \subset \hat G$.
\item $\mathrm{Sb}^{\hat R}$ denotes a set of symplectic bosons transforming in $\hat R$. That really means the combination of 
symplectic bosons transforming in $R$ and fermions transforming in $\hat R/R$. The resulting VOA has a $\hat H$ current subalgebra. 
\item We take a coset by  the Kac-Moody VOA with supergroup $\hat H$. The VOA is the diagonal combination of the 
$\hat H \subset \hat G$ subalgebra in $\hat G_{\Psi - n_G}$ and the $\hat H$ current algebra in $\mathrm{Sb}^{\hat R}$. 
\item The critically shifted level $\Psi_H - n_H$ is determined according to the above embedding. 
\end{itemize}

The coset can be also usefully expressed (or defined) as a $\hat H$-BRST quotient  of 
\begin{equation}
\mathrm{DS}_\rho \hat G_{\Psi_G - n_G} \times \mathrm{Sb}^{\hat R} \times \hat H_{n_H -\Psi_H} 
\end{equation}

Notice that the construction makes manifest the existence of two classes of mutually local modules for the corner VOA,
induced by Weyl modules associated respectively to finite-dimensional representations of $\hat G$ and $\hat H$. 
These modules braid according to well-known braided tensor categories $C_{\hat G}[q_G \equiv e^{\frac{2 \pi i}{\Psi_G - n_G}}]$
and $C_{\hat H}[q_H \equiv e^{-\frac{2 \pi i}{\Psi_H - n_H}}]$
of quantum group representations. \footnote{Up to important fermion number shifts due to the DS reduction.}
The two classes of modules are mutually local with each other, with trivial mutual braiding. 
Both properties have a clear gauge theory meaning: these modules live at the endpoints of  line defects along the $B_{1,n_G}^{R_G}$
or $B_{1,n_H}^{R_H}$ boundaries.

The braided tensor category of line defects along the $B^\rho_{0,1}[G;H;R]$ is not generally known. 
Whatever it is, it will be associated to a third class of modules for the corner VOA, 
local with the other two classes of modules, braiding according to that tensor category. 

Extra degrees of freedom can be also added at the junction in the form of some chiral conformal field theory with $\hat H$ $WZW$ currents. 
The theory will then appear in the numerator of the coset and $n_H$ will be shifted appropriately. 

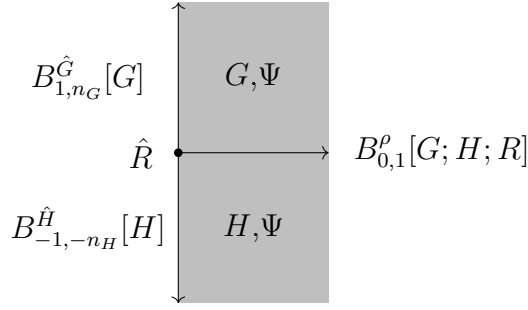
\begin{figure}[p]
\begin{center}
\begin{tikzpicture}
\path [fill=lightgray] (0,-2) rectangle (2,2);
\draw[->] (0,0) --(0,2);
\draw[->] (0,0) --(2,0);
\draw[->] (0,0) --(0,-2);
\node at (1,1) {$G$,$\Psi$};
\node at (1,-1) {$H$,$\Psi$};
\node at (-1.2,1) {$B^{\hat G}_{1,n_G}[G]$};
\node at (3.5,0) {$B^\rho_{0,1}[G;H;R]$};
\node at (-1.2,-1) {$B_{-1,-n_H}^{\hat H}[H]$};
\draw[fill] (0,0) circle [radius=0.05];
\node at (-.5,0) {$\hat R$};
\end{tikzpicture}
\end{center}
\caption{A very general weakly coupled junction with a known corner VOA, as described in the text. 
}\label{fig:three}
\end{figure}

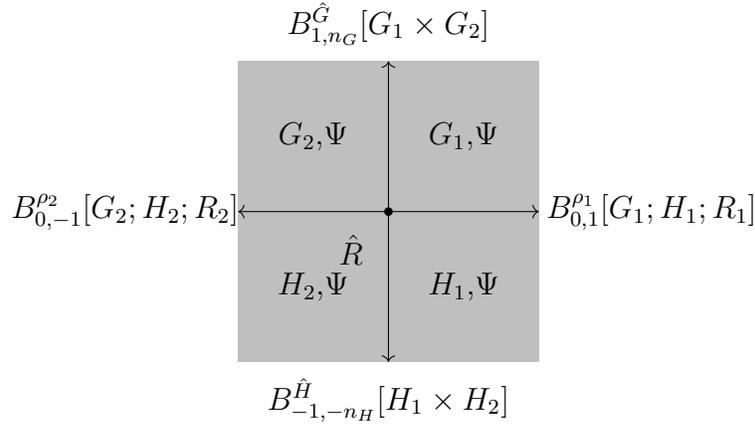
\begin{figure}[p]
\begin{center}
\begin{tikzpicture}
\path [fill=lightgray] (-2,-2) rectangle (2,2);
\draw[->] (0,0) --(0,2);
\draw[->] (0,0) --(2,0);
\draw[->] (0,0) --(0,-2);
\draw[->] (0,0) --(-2,0);
\node at (1,1) {$G_1$,$\Psi$};
\node at (1,-1) {$H_1$,$\Psi$};
\node at (-1,1) {$G_2$,$\Psi$};
\node at (-1,-1) {$H_2$,$\Psi$};
\node at (0,2.5) {$B^{\hat G}_{1,n_G}[G_1 \times G_2]$};
\node at (3.5,0) {$B^{\rho_1}_{0,1}[G_1;H_1;R_1]$};
\node at (-3.5,0) {$B^{\rho_2}_{0,-1}[G_2;H_2;R_2]$};
\node at (0,-2.5) {$B_{-1,-n_H}^{\hat H}[H_1 \times H_2]$};
\draw[fill] (0,0) circle [radius=0.05];
\node at (-.5,-.5) {$\hat R$};
\end{tikzpicture}
\end{center}
\caption{A reflection trick allows one to discuss junctions between four gauge groups, re-interpreting 
a boundary condition for $G_1 \times G_2$ as an interface between $G_1$ and $G_2$. 
}\label{fig:four}
\end{figure}
We can now consider a few basic examples. 

\begin{figure}[p]
\begin{center}
\begin{tikzpicture}
\path [fill=lightgray] (0,0) rectangle (2,2);
\draw[->] (0,0) --(0,2);
\draw[->] (0,0) --(2,0);
\node at (1,1) {$G$,$\Psi$};
\node at (-.5,1) {$B_{1,0}$};
\node at (1,-.3) {$B^D_{0,1}$};
\draw[fill] (0,0) circle [radius=0.05];
\end{tikzpicture}
\hspace{1cm}
\begin{tikzpicture}
\path [fill=lightgray] (0,0) rectangle (2,2);
\draw[->] (0,0) --(0,2);
\draw[->] (0,0) --(2,0);
\node at (1,1) {$G$,$\Psi$};
\node at (-.5,1) {$B_{1,0}$};
\node at (1,-.3) {$B_{0,1}$};
\draw[fill] (0,0) circle [radius=0.05];
\end{tikzpicture}
\hspace{1cm}
\begin{tikzpicture}
\path [fill=lightgray] (0,-2) rectangle (2,2);
\draw[->] (0,0) --(0,2);
\draw[->] (0,0) --(0,-2);
\node at (1,0) {$G$,$\Psi$};
\node at (-1.2,1) {$B_{1,0}[G]$};
\node at (-1.2,-1) {$B_{-1,1}[G]$};
\node at (-.7,0) {$L_1[G]$};
\draw[fill] (0,0) circle [radius=0.05];
\end{tikzpicture}
\end{center}
\caption{Some simple examples of corner configurations discussed in the text. Left: the semiclassical junction between $B_{1,0}$ and $B^D_{0,1}$, which supports a $G_\Psi$ Kac-Moody. 
Middle: the semiclassical junction between $B_{1,0}$ and the regular Nahm pole boundary condition $B_{0,1}\equiv B^{\rho_{\mathrm{reg}}}_{0,1}$, which supports a $W_G[\Psi]$ 
W-algebra. Right: the semiclassical junction between $B_{1,0}$ and $B_{-1,1}$ which includes auxiliary corner degrees of freedom given by a level $1$ 
WZW model $L_1[G]$. It supports a $W_G[1+\Psi^{-1}]$ W-algebra.
}\label{fig:five}
\end{figure}
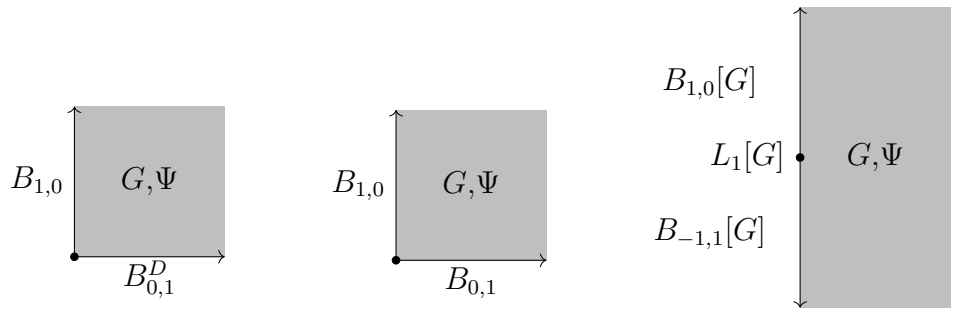

\subsubsection{Kac-Moody algebra at the corner}
The cleanest example is the corner VOA at the semiclassical junction between $B_{1,0}$ and $B^D_{0,1}$:
the Kac-Moody VOA associated to the group $G$, with (critically shifted) level $\Psi$:
\begin{equation}
\fA[\g,\Psi;B_{1,0},B^D_{0,1};\CJ_{\mathrm{cl}}] \equiv G_\Psi
\end{equation}
See Figure \ref{fig:five}.

The modules 
\begin{equation} 
M_{\lambda} \in \fM_{\g,\Psi} \equiv \fA[\g,\Psi;B_{1,0},B^D_{0,1};\CJ_{\mathrm{cl}}] 
\end{equation} associated to lines along the $B_{1,0}$ boundary are Weyl modules induced from finite-dimensional 
representations of $G$. They braid according to a well-known tensor category $C_{\g,q}$ with $q = e^{2 \pi i \Psi^{-1}}$.

The category of lines along the $B^D_{0,1}$ boundary is more mysterious, and so are the associated modules. 
Physically, the lines should be defined as ``boundary 't Hooft lines''. Mathematically, the notion of 
't Hooft operator at some point $p$ should translate to a Hecke modification at $p$ of a principal $G$ bundle.

The simplest possible example is to take $G = U(1)$, so that the corner VOA is 
a $U(1)$ Kac-Moody algebra. It is useful to bosonize the current as $J_\Psi \sim - i \partial \phi$.

 ``Electric'' modules $M_{n}$ associated to lines in $B_{1,0}$ are generated by vertex operators of the form $e^{i \frac{n}{\Psi} \phi}$,
which have integral charge $n$ under $J_\Psi$. These modules braid with phases controlled by $q = e^{\frac{2 \pi i}{\Psi}}$.

On the other hand, ``magnetic'' modules ${}^L M_{m}$ associated to lines in $B^D_{0,1} \simeq B_{0,1}$ are generated by vertex operators 
of the form $e^{i m \phi}$, which have charge $m \Psi$ under $J_\Psi$. These modules braid with phases controlled by  ${}^L q = e^{2 \pi i \Psi}$

The two sets of modules are mutually local. They fuse with a unique fusion channel to composite modules 
$M_{\Psi,n,m}$ generated by vertex operators of the form $e^{i ( m + \frac{n}{\Psi}) \phi}$. 

If we use $J_\Psi$ to couple the system to a $U(1)$ bundle $\CL$, electric vertex operators 
behave as sections of $\CL^{\otimes n}$. Magnetic vertex operators placed at some point $p$ effectively 
modify the line bundle as $\CL \to \CL(m p)$.

The relation between boundary 't Hooft lines in $G$ gauge theory and vertex operators implementing Hecke modifications of 
$G$ bundles is expected to hold in broader generality, but we do not understand it in detail. 

\subsubsection{$W_{G}$ algebra at the corner}
The corner VOA at the semiclassical junction between $B_{1,0}$ and $B_{0,1}$ is the $W_{\Psi}[\g]$ algebra, the regular qDS reduction of
the Kac-Moody VOA associated to the Lie group $G$ with (critically shifted) level $\Psi$:
\begin{equation}
\fA[\g,\Psi;B_{1,0},B_{0,1};\CJ_{\mathrm{cl}}] \equiv W_{\Psi}[\g]
\end{equation}
See Figure \ref{fig:five}.

Recall that the $W_{\Psi}[\g]$ enjoys Feigin-Frenkel duality, which exchanges $G$ with its Langlands dual ${}^L G$ and 
inverts $\Psi$, up to an rescaling dependent of conventions for the level of the dual Kac-Moody algebras. 
As we will review further in a later section, this duality is a manifestation of the S-duality covariance of the 
semiclassical junction between $B_{1,0}$ and $B_{0,1}$, which we will discuss in more detail in section \ref{sec:three}.

The modules $M^W_{\lambda}$ associated to lines along the $B_{1,0}$ boundary are the qDS reduction of Weyl modules induced from finite-dimensional 
representations of $G$. The conformal dimension of the highest weight vectors in the Weyl modules is shifted by their charge under the 
Cartan generator in $\rho$, which is half-integral and additive. The braiding properties of the modules are expected to be unchanged by the 
qDS reduction, as required by the gauge theory setup. 

The modules ${}^L M^W_{{}^L \lambda}$ associated to lines along the $B_{0,1}$ boundary, by S-duality, should be the image of Weyl modules 
associated to finite-dimensional representations of ${}^L G$. Within the original definition as a qDS reduction of the $G$ Kac-Moody algebra,
they arise from the reduction of certain spectral flowed images of the vacuum module. 

These two sets of modules are known to be mutually local, as long as $\lambda$ and ${}^L \lambda$
are chosen from the weight lattice of GL dual global forms of $G$ and ${}^L G$. 

The two types of modules fuse with a unique fusion channel to modules (simple for general $\Psi$)
$M^W_{\lambda, {}^L \lambda} \in \fM^W_{\g,\Psi}$ which are characterized as the quotient of Verma modules of $W_{\Psi}[\g]$ by a maximal 
set of null vectors. The typical mathematical notation for such modules is $L^{\Psi}\left(\gamma_{\lambda - \Psi {}^L ( \lambda+\rho^\vee)}\right)$,
indicating that they are simple quotients of Verma modules labelled by special values of the Toda momentum. 

\subsubsection{$W_{G,\rho}$ algebra at the corner}
The corner VOA at the semiclassical junction between $B_{1,0}$ and $B^\rho_{0,1}$ is the qDS reduction of
the Kac-Moody VOA associated to the group $G$, with (critically shifted) level $\Psi$, according to the $\mathfrak{su}(2)$ embedding $\rho$:
\begin{equation}
\fA[\g,\Psi;B_{1,0},B_{0,1}^\rho;\CJ_{\mathrm{cl}}] \equiv \mathrm{DS}_\rho G_\Psi
\end{equation}

The modules $M^{W_\rho}_{\lambda}$ associated to lines along the $B_{1,0}$ boundary are the qDS reduction of Weyl modules induced from finite-dimensional 
representations of $G$. The braiding properties of the modules are expected to be unchanged by the 
qDS reduction, as required by the gauge theory setup. 

Again, for general $\rho$ the category of lines along the $B^D_{0,1}$ boundary is more mysterious, and so are the associated modules.

\subsubsection{An alternative realization of $W_{G}$ algebra at the corner}
Consider here a simply-laced, semi-simple, simply-connected $G$. The $W_{G}[\Psi]$ algebra is known \cite{GKO} to possess an alternative coset definition, 
\begin{equation}
W_{G} \simeq \frac{G_{\kappa} \times L_1[G]}{G_{\kappa+1}}
\end{equation}
where we denote as $L_1[G]$ the $G$ WZW model at level $1$. 

The level $\kappa$ is related to $\Psi$ as 
\begin{equation}
\kappa = \frac{1}{\Psi-1}
\end{equation}
Notice that the correct BRST definition of the coset involves the combination 
\begin{equation}
G_{\frac{1}{\Psi-1}} \times L_1[G] \times G_{\frac{\Psi}{1-\Psi}}
\end{equation}
which is invariant under $\Psi \to \Psi^{-1}$. 

This can be engineered at a semiclassical junction between $B_{1,0}$ and $B_{-1,1}$, involving extra corner degrees 
of freedom in the form of $L_1[G]$. See Figure \ref{fig:five}. The existence of this third dual description is again associated to an S-duality relation between the relevant
semiclassical junctions, which we will discuss in more detail momentarily. \footnote{Notice that if ${}^L G = G$ then $L_1[G]$ has only the vacuum module
and is a good chiral CFT. For more general $G$ it is a relative theory and one may worry why is it OK to use it as extra junction 
degrees of freedom. This can be explained by a careful analysis of how the global form of the group changes 
under S-duality, leading to subtle discrete anomalies which are cancelled by the coupling to $L_1[G]$. We will not do so here}

The sets of modules associated to lines on $B_{1,0}$ and $B_{-1,1}$ are given by the BRST/coset reduction of Weyl modules for 
$G_{\frac{1}{\Psi-1}}$ or $G_{\frac{\Psi}{1-\Psi}}$ respectively. Notice that 
\begin{equation}
e^{\frac{2 \pi i}{\kappa}} = e^{2 \pi i \Psi} \qquad \qquad e^{-\frac{2 \pi i}{\kappa+1}} = e^{\frac{2 \pi i}{\Psi}}
\end{equation}
so that the braiding properties of the modules are compatible with the duality. 

The above coset description can be extended to more general reductive groups. In order for the maps of modules to 
work well, it is useful to give a proper definition of the WZW model $L_1[G]$, so that the Abelian factors are associated to lattice VOAs rather 
than just Abelian Kac-Moody currents. For example, $L_1[U(N)]$ is defined naturally as the VOA of $N$ complex fermions. 

It is entertaining and instructive to see this description at work for $G=U(1)$.
Here $L_1[U(1)]$ is simply the VOA of a complex fermion, with generators $\chi$ and $\psi$, OPE
\begin{equation}
\chi(z) \psi(w) \simeq \frac{1}{z-w}
\end{equation}
and $U(1)$ current $J_1 = \chi \psi$ giving charges $1$ and $-1$ to $\chi$ and $\psi$. 

It is clear that only the charge $0$ sector of the free fermion VOA gives a contribution to the vacuum module of the coset. This is the same as the 
$U(1)$ current sub-algebra VOA. The coset of the product of two $U(1)$ VOAs by their diagonal combination is obviously another $U(1)$ current VOA. 
If the diagonal current is 
\begin{equation}
J_{\kappa +1} = J_\kappa + \chi \psi
\end{equation}
then the coset current can be taken to be the 
\begin{equation}
J_{\Psi} = \kappa^{-1} J_\kappa - \chi \psi
\end{equation}

Charge $m$ electric modules of $U(1)_{\kappa}$ can be dressed with charge $-m$ modules of the free fermions. 
The resulting coset module is generated by a vertex operator of charge $m \Psi$ under $J_\Psi$, i.e. a magnetic module 
of the coset VOA. 

On the other hand, the coefficient of charge $-n$ modules for $J_{\kappa +1}$ inside charge $-n$ modules of the free fermions VOA 
are coset modules generated by a vertex operator of charge $n$ under $J_\Psi$, i.e. an electric module 
of the coset VOA. 

This matches our general expectation. \footnote{A very careful reader may wonder about the appearance of fermionic degrees of freedom 
at junctions in a bosonic theory. Some questions may also be raised about subtle interplay of fermionic and bosonic 
notions of mutual locality of modules in the coset. Such a reader is invited to explore related subtleties 
about the electric-magnetic duality group of Abelian gauge theories \cite{Metlitski:2015yqa}, such as the 
fact that the $ST$ transformation we use here maps standard gauge connections to Spin$_{\mathbb{C}}$ 
gauge connections. 
}
\subsection{The action of S-duality}
From now on we take $G$ simply laced and possibly include some Abelian factors to insure that $G$ equals its GL dual group, 
i.e. ${}^L G = G$. The prototypical example would be $G = U(N)$. This relieves us from the need to 
follow how the global form of the group changes under S-duality. 

There are two useful symmetries of the four-dimensional gauge theory. 
The first is simply the reflection of a direction in space-time combined with 
\begin{equation}
R: \Psi \to -\Psi
\end{equation} 
The second is the action of S-duality 
\begin{equation}
\Psi \to \frac{a \Psi + b}{c \Psi + d} \qquad \qquad \begin{pmatrix} a & b \cr c & d\end{pmatrix} \in \mathrm{PSL}(2,\mathbb{Z})
\end{equation} 
Two particularly useful generators of the S-duality group (combined with a reflection when useful) are
\begin{equation}
S: \Psi \to \Psi^{-1}  \qquad \qquad T: \Psi \to \Psi + 1
\end{equation} 

The boundary conditions and interfaces we have introduced until now belong to infinite families of boundary conditions which we denote by notations such as 
$B_{p,q}$, $B^D_{p,q}$, $B^{\rho}_{p,q}$, etc. The two numbers $p$ and $q$ are co-prime integers. In general, for every family of boundary conditions 
$B^{\cdots}_{p,q}$ we have that $B^{\cdots}_{-p,-q}$ is the same as $B^{\cdots}_{p,q}$ with opposite 
orientation, while the above duality generators act as 
\begin{equation}
R: B^{\cdots}_{p,q} \to B^{\cdots}_{-p,q} \qquad \qquad S: B^{\cdots}_{p,q} \to B^{\cdots}_{q,p}\qquad \qquad T: B^{\cdots}_{p,q} \to B^{\cdots}_{p,q+p}
\end{equation}
In general, $\Psi$ transforms in the same way as $q/p$, i.e. we have 
\begin{equation}
(p,q) \to (d p +c q, a q + b p) 
\end{equation} 
Notice that this duality action is compatible with the fact that all the boundary conditions of the Dirichlet family $B^{\cdots}_{0,1}$ are invariant under the
$T$ transformation: the $T$ transformation adds boundary Chern-Simons couplings which are trivialized by the Dirichlet boundary conditions. 

Earlier on, we proposed to denote regular Nahm pole boundary conditions as $B_{0,1}$ and Neumann as $B_{1,0}$. We did so because these boundary conditions 
are conjecturally mapped into each other by the $S$ duality operation. This is a crucial relationship, essentially the only non-trivial
duality relation between weakly-coupled boundary conditions which holds universally for all gauge groups. 
The duality relation also implies that Neuman-like boundary conditions should be invariant under $STS$ and that $B_{1,1}$ and $B_{1,-1}$ 
should be mapped to themselves under $S$. These are all deep, non-perturbative statements which are crucial 
for this work. See Figure \ref{fig:dualityone}.

The duality images of Dirichlet boundary conditions $B^D_{0,1}$ are instead all expected to be strongly coupled. See Figure \ref{fig:dualitytwo}.

Given some known semi-classical junctions, we get infinite families of strongly coupled junctions which should have 
the same corner VOA, up to the re-definition of $\Psi$. In some cases, semiclassical junctions may be mapped to other semiclassical 
junctions. This is always the case under $R$ and $T$, but rarely under other symmetry transformations. 

The canonical example, which we have already encountered, is the junction between $B_{1,0}$ and $B_{0,1}$,
which is expected to be mapped to the same type of junction under $S$. We have also described the semiclassical $ST^{-1}$ image of the 
above junction: the $T^{-1}$ operation gives a junction between $B_{1,-1}$ and $B_{0,1}$ at $\Psi -1$, which is then mapped to a 
junction between $B_{1,0}$ and $B_{-1,1}$ at coupling $\kappa = (\Psi-1)^{-1}$. See Figure \ref{fig:dualitythree}.

Notice that we always list the boundary conditions for a junction from the left boundary to the right boundary, oriented outwards from the junction. Both $R$ and $S$ exchange 
the left and right boundaries because of the space-time reflection. 

For general groups these are the only known dualities between junctions. For classical groups there is a more general story described in detail in \cite{Gaiotto:2017euk}.

Starting from the junctions between $B_{1,0}$ and $B^\rho_{0,1}$ we get a canonical choice of junction between any $B_{p,q}$ and $B^\rho_{p',q'}$ 
with $p q' - p' q = 1$, such that the corner VOA is $W_{G,\rho}$. We will see now how to leverage that single piece of knowledge to get many other junctions
with interesting duality properties. See Figure \ref{fig:dualityfour}.

\begin{figure}[p]
\begin{center}
\begin{tikzpicture}
\path [fill=lightgray] (0,0) rectangle (2,2);
\draw[->] (0,0) --(0,2);
\node at (1,1) {$G$,$\Psi$};
\node at (-.5,1) {$B_{1,0}$};
\end{tikzpicture}
\hspace{1cm}
\begin{tikzpicture}
\path [fill=lightgray] (0,0) -- (-2,2) -- (0,2) -- (2,0);
\draw[->] (0,0) --(-2,2);
\node at (0,1) {$G$,$\Psi$};
\node at (-1.3,.7) {$B_{1,-1}$};
\end{tikzpicture}
\hspace{1cm}
\begin{tikzpicture}
\path [fill=lightgray] (0,0) rectangle (2,1.5);
\draw[->] (0,0) --(2,0);
\node at (1,.7) {$G$,$\Psi$};
\node at (1,-.3) {$B_{0,1}$};
\end{tikzpicture}
\end{center}
\caption{Graphical conventions for Neumann boundary conditions and their S-dual images. 
Left: Neumann boundary conditions $B_{1,0}$. 
Middle: modified Neumann boundary conditions $B_{1,1}$. 
Right: Nahm pole boundary conditions $B_{0,1}$.
We use the convention that boundary conditions of type $(p,q)$ are drawn with slope $p/q$.
}\label{fig:dualityone}
\end{figure}
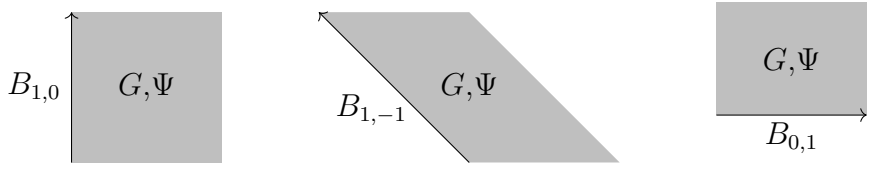

\begin{figure}[p]
\begin{center}
\begin{tikzpicture}
\path [fill=lightgray] (0,0) rectangle (2,1.5);
\draw[->,ultra thick] (0,0) --(2,0);
\node at (1,.7) {$G$,$\Psi$};
\node at (1,-.3) {$B^D_{0,1}$};
\end{tikzpicture}
\hspace{1cm}
\begin{tikzpicture}
\path [fill=lightgray] (0,0) rectangle (2,2);
\draw[->,ultra thick] (0,0) --(0,2);
\node at (1,1) {$G$,$\Psi$};
\node at (-.5,1) {$B^D_{1,0}$};
\end{tikzpicture}
\hspace{1cm}
\begin{tikzpicture}
\path [fill=lightgray] (0,0) -- (-2,2) -- (0,2) -- (2,0);
\draw[->,ultra thick] (0,0) --(-2,2);
\node at (0,1) {$G$,$\Psi$};
\node at (-1.3,.7) {$B^D_{1,-1}$};
\end{tikzpicture}
\end{center}
\caption{Graphical conventions for Dirichlet boundary conditions and their S-dual images.  
Left: Dirichlet boundary conditions $B^D_{0,1}$. 
Middle: The S-dual of Dirichlet boundary conditions $B^D_{1,0}$. It can be defined by coupling the gauge theory to a strongly-coupled three-dimensional SCFT $T[G]$. 
Right: The $(T^{-1}S)$-dual of Dirichlet boundary conditions $B^D_{1,-1}$ can be defined by coupling the gauge theory to $T[G]$ together with a boundary CS coupling. 
We use the convention that boundary conditions of type $(p,q)$ are drawn with slope $p/q$.
}\label{fig:dualitytwo}
\end{figure}
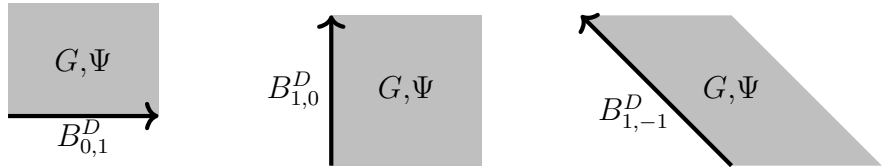

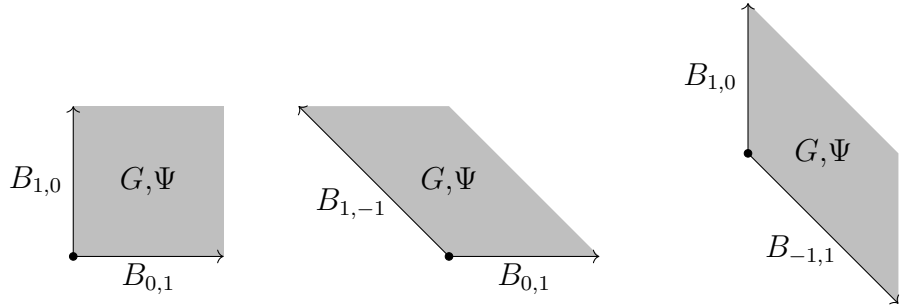
\begin{figure}[p]
\begin{center}
\begin{tikzpicture}
\path [fill=lightgray] (0,0) rectangle (2,2);
\draw[->] (0,0) --(0,2);
\draw[->] (0,0) --(2,0);
\node at (1,1) {$G$,$\Psi$};
\node at (-.5,1) {$B_{1,0}$};
\node at (1,-.3) {$B_{0,1}$};
\draw[fill] (0,0) circle [radius=0.05];
\end{tikzpicture}
\hspace{.7cm}
\begin{tikzpicture}
\path [fill=lightgray] (0,0) -- (-2,2) -- (0,2) -- (2,0);
\draw[->] (0,0) --(-2,2);
\draw[->] (0,0) --(2,0);
\node at (0,1) {$G$,$\Psi$};
\node at (-1.3,.7) {$B_{1,-1}$};
\node at (1,-.3) {$B_{0,1}$};
\draw[fill] (0,0) circle [radius=0.05];
\end{tikzpicture}
\hspace{.7cm}
\begin{tikzpicture}
\path [fill=lightgray] (0,0) -- (2,-2) -- (2,0) -- (0,2);
\draw[->] (0,0) --(2,-2);
\draw[->] (0,0) --(0,2);
\node at (1,0) {$G$,$\Psi$};
\node at (.7,-1.3) {$B_{-1,1}$};
\node at (-.5,1) {$B_{1,0}$};
\draw[fill] (0,0) circle [radius=0.05];
\end{tikzpicture}
\end{center}
\caption{Canonical junctions between basic boundary conditions and their S-dual images.
Left: The canonical junction between  $B_{1,0}$ and $B_{0,1}$ which supports a $W_G[\Psi]$ VOA defined as the qDS reduction of a $G$ Kac-Moody at level $k+h = \Psi$. 
It is conjecturally invariant under the $S$ transformation $\Psi \to \Psi^{-1}$. Indeed, $W_G[\Psi] = W_G[\Psi^{-1}]$.
Middle: The canonical junction between  $B_{1,-1}$ and $B_{0,1}$ which supports a $W_G[\Psi+1]$ VOA. 
Right: The canonical junction between  $B_{1,0}$ and $B_{-1,1}$ supports a $W_{G}[\Psi^{-1}+1]$ VOA. The two latter junctions are conjecturally related by $S$. }
\label{fig:dualitythree}
\end{figure}

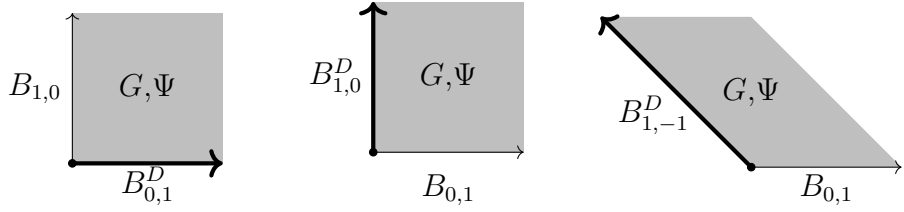
\begin{figure}[p]
\begin{center}
\begin{tikzpicture}
\path [fill=lightgray] (0,0) rectangle (2,2);
\draw[->] (0,0) --(0,2);
\draw[->,ultra thick] (0,0) --(2,0);
\node at (1,1) {$G$,$\Psi$};
\node at (-.5,1) {$B_{1,0}$};
\node at (1,-.3) {$B^D_{0,1}$};
\draw[fill] (0,0) circle [radius=0.05];
\end{tikzpicture}
\hspace{.7cm}
\begin{tikzpicture}
\path [fill=lightgray] (0,0) rectangle (2,2);
\draw[->] (0,0) --(2,0);
\draw[->,ultra thick] (0,0) --(0,2);
\node at (1,1) {$G$,$\Psi$};
\node at (1,-.5) {$B_{0,1}$};
\node at (-.5,1) {$B^D_{1,0}$};
\draw[fill] (0,0) circle [radius=0.05];
\end{tikzpicture}
\hspace{.7cm}
\begin{tikzpicture}
\path [fill=lightgray] (0,0) -- (-2,2) -- (0,2) -- (2,0);
\draw[->, ultra thick] (0,0) --(-2,2);
\draw[->] (0,0) --(2,0);
\node at (0,1) {$G$,$\Psi$};
\node at (-1.3,.7) {$B^D_{1,-1}$};
\node at (1,-.3) {$B_{0,1}$};
\draw[fill] (0,0) circle [radius=0.05];
\end{tikzpicture}
\end{center}
\caption{Canonical junctions between Neumann and Dirichlet boundary conditions and their S-dual images.
Left: The canonical junction between  $B_{1,0}$ and $B^D_{0,1}$ supports a $G$ Kac-Moody at level $k+h = \Psi$. 
Middle: The $S$ operation gives a junction between  $B^D_{1,0}$ and $B_{0,1}$ supports a $G^\vee$ Kac-Moody at level $k+h = \Psi^{-1}$. 
Right: The $ST^{-1}$ operation gives a junction between  $B_{1,0}$ and $B_{-1,1}$ supports a $G^\vee$ Kac-Moody at level $k+h = (\Psi+1)^{-1}$. 
Similar junctions exist for all pairs $B_{a,b}$ and $B^D_{c,d}$ with $ad-bc = \pm 1$. 
}\label{fig:dualityfour}
\end{figure}

\subsection{Concatenating junctions}
The following general conjecture was introduced in \cite{Gaiotto:2017euk}. See Figure \ref{fig:two}.

Consider junctions $\CJ_{12}$ and $\CJ_{23}$ between boundary conditions $\CB_1$ and $\CB_2$ and $\CB_2$ and $\CB_3$. 
We can attempt to define a new junction $\CJ_{12} \circ \CJ_{23}$ by concatenating $\CJ_{12}$ and $\CJ_{23}$ and taking a scaling limit. 

Denote as $\fA_{12}$ and $\fA_{23}$ the corner VOAs associated to $\CJ_{12}$ and $\CJ_{23}$. 
They should be equipped with two families of modules $M^R_{12,\lambda}\in \fM_{12}^R$ and $M^L_{23,\lambda}\in \fM^L_{23}$ 
where $\lambda$ runs over the elements of some braided tensor category $C_2$ associated to $B_2$. 
The two families of modules braid according to $\bar C_2$ and $C_2$ respectively. 

Then the VOA associated to $J_{12} \circ J_{23}$ is conjecturally the extension of $\fA_{12} \times \fA_{23}$ 
\begin{equation}
\fM^R_{12} \boxtimes_{C_2} \fM^L_{23} \equiv \bigoplus_\lambda M^R_{12,\lambda} \otimes M^L_{23,\lambda}
\end{equation}

Notice that $\fA_{12}$ and $\fA_{23}$ should be equipped with more general families of modules $M_{12,\lambda_1,\lambda_2}$ and $M_{23,\lambda_2,\lambda_3}$ 
where $\lambda_i$ runs over the elements of some braided tensor categories $C_i$ associated to $\CB_i$. These modules should braid according to 
$C_1 \times \bar C_2$ and $C_2 \times \bar C_3$. 

Then the $\fM^R_{12} \boxtimes_{C_2} \fM^L_{23}$ VOA inherits families of modules 
\begin{equation}
\bigoplus_{\lambda_2} M^{LR}_{12,\lambda_1,\lambda_2} \otimes M^{LR}_{23,\lambda_2,\lambda_3} \in \fM^{LR}_{12} \boxtimes_{C_2} \fM^{LR}_{23}
\end{equation}
and can be used further as an ingredient of more complicated junctions. 

Indeed, the concatenation of interfaces is associative. We can define richer 
VOAs from chains of simpler VOAs: $\fA_{12} \times_{C_2} \fA_{23} \times_{C_3} \fA_{34} \cdots$.

As we only understand the braided tensor category associated to $B_{1,0}$, all our examples will involve tensor products over 
the braided tensor categories of the form $C_G[q]$. 

The general philosophy of how junctions can be concatenated to get new junctions can be illustrated by two simple examples. 

\subsubsection{A junction between $B_{1,0}$ and $B_{-1,n}$}
Consider again a simply-laced, self-dual $G$. 

We can define a semi-classical junction between $B_{1,0}$ and $B_{-1,2}$ by adding two copies of $L_1[G]$ at the junction. The result would be the corner VOA 
\begin{equation}
\fA[\g,\Psi;B_{1,0},B_{-1,2};\CJ] \equiv W^{(2)}_{\Psi}[\g]\simeq \frac{G_{\Psi} \times L_1[G]\times L_1[G]}{G_{\Psi+2}}
\end{equation}

On the other hand, consider the concatenation of a standard junction between $B_{1,0}$ and $B_{-1,1}$ and a standard junction 
between $B_{1,-1}$ and $B_{-1,2}$, both involving a single copy of $L_1[G]$ at each junction. The product of the resulting corner VOAs would be 
\begin{equation}
W_{1+\Psi^{-1}}(\g) \times W_{1+(\Psi+1)^{-1}}(\g) \simeq \frac{G_{\Psi} \times L_1[G]}{G_{\Psi+1}} \times  \frac{G_{\Psi+1} \times L_1[G]}{G_{\Psi+2}}
\end{equation}

This product VOA is actually a subalgebra of $W^{(2)}_{\Psi}[\g]$. Indeed, at least for simply laced $G$ this is a conformal embedding, with decomposition (see the main Theorem of \cite{GKO})
\begin{equation}
W^{(2)}_{\Psi}[\g] = \bigoplus_{\lambda \in Q_+} M_{\Psi_1,\lambda} \otimes {}^L M_{\Psi_2,\lambda}
\end{equation}
Here $Q_+$ is the set of dominant weights that lie in the root lattice and
$\Psi_1=1+\Psi^{-1}, \Psi_2=1+(\Psi+1)^{-1}$.

The modules $M_{\Psi_1,\lambda}$ and ${}^L M_{\Psi_2,\lambda}$ are the modules in $W_{1+\Psi^{-1}}(\g)$ and $W_{1+(\Psi+1)^{-1}}(\g)$ 
associated to Weyl modules for $G_{\Psi+1}$ of weight $\lambda$, which are in turn the modules associated to 
the corresponding topological Wilson lines in $B_{-1,1} \simeq B_{1,-1}$. 

$W^{(2)}_{\Psi}[\g]$ is the conformal extension of $W_{1+\Psi^{-1}}(\g) \times W_{1+(\Psi+1)^{-1}}(\g)$ 
by such a sum of products of modules, precisely of the structure we proposed for the composition of junctions. 

This relation extends to any $n$. We can define a semi-classical junction between $B_{1,0}$ and $B_{-1,n}$ by adding $n$ copies of $L_1[G]$ at the junction. The result would be the corner VOA 
\begin{equation}
 W^{(2n)}_{\Psi}[\g] \simeq \frac{G_{\Psi} \times L_1[G]^{\otimes n}}{G_{\Psi+n}}
\end{equation}

On the other hand, consider the concatenation of $n-1$ standard junctions between $B_{1,0}$ and $B_{-1,1}$, $B_{1,-1}$ and $B_{-1,2}$, etc. 
The product of the resulting corner VOAs would be 
\begin{equation}
\begin{split}
W_{1+\Psi^{-1}}(\g) \times W_{1+(\Psi+1)^{-1}}(\g)& \cdots \times W_{1+(\Psi+n-1)^{-1}}(\g) \simeq \\
\simeq \frac{G_{\Psi} \times L_1[G]}{G_{\Psi+1}}& \times  \frac{G_{\Psi+1} \times L_1[G]}{G_{\Psi+2}} \cdots  \times  \frac{G_{\Psi+n-1} \times L_1[G]}{G_{\Psi+n}}
\end{split}
\end{equation}

This product VOA is actually a subalgebra of $W^{(n)}_{G}$. Indeed, this is a conformal embedding, with decomposition 
\begin{equation}
\begin{split}
W^{(n)}_{G} \cong \bigoplus_{\substack{\lambda_i \in Q_+\\ i=1,\dots, n}} &M_{\Psi_1, \lambda_1} \otimes M_{\Psi_2, \lambda_2,\lambda_1} \otimes M_{\Psi_3, \lambda_3,\lambda_2} \otimes
 \dots  \otimes M_{\Psi_{n-1}, \lambda_{n-1},\lambda_{n-2}} \otimes  {}^L M_{\Psi_n, \lambda_{n-1}}
\end{split}
\end{equation}
involving modules associated to associated to Weyl modules for $G_{\Psi_i}$, with $\Psi_1=1+(\Psi+i-1)^{-1}$, which are in turn the modules associated to 
the corresponding topological Wilson lines in $B_{-1,1} \simeq B_{1,-1}$,$B_{-1,2} \simeq B_{1,-2}$, etc.

This concatenation of interfaces can be associated freely, in the sense that we can always first extend to some $W^{(m)}_{G} \times W^{(n-m)}_{G}$
and then to $W^{(n)}_{G}$. 

\subsubsection{Resolution of a junction between $B_{1,0}$, $B^D_{0,1}[G;H]$ and $B_{-1,0}$}
Consider the corner VOA 
\begin{equation}
\fA[\g,\Psi; B_{1,0}[G],B^D_{0,1}[G;H],B_{-1,0}[H];\CJ_{\mathrm{cl}}] \equiv \frac{G_\Psi}{H_{d_{G/H}(\Psi-h_G) + h_H}}
\end{equation}
associated to an interface $B^D_{0,1}[G;H]$ for $H \subset G$ encountering Neumann boundary conditions for the two groups. 

If we have some intermediate subgroup $H \subset K \subset G$, we can obtain $B^D_{0,1}[G;H]$ as the concatenation of interfaces 
$B^D_{0,1}[G;K]$ and $B^D_{0,1}[K;H]$. The corresponding concatenation of junctions give a product of individual VOAs
\begin{equation}
\frac{G_\Psi}{K_{d_{G/K}(\Psi-h_G) + h_K}} \times \frac{K_{d_{G/K}(\Psi-h_G) + h_K}}{H_{d_{G/H}(\Psi-h_G) + h_H}}
\end{equation}
This is conformally embedded in the original VOA. The embedding gives a decomposition 
\begin{equation}
\frac{G_\Psi}{H_{d_{G/H}(\Psi-h_G) + h_H}} = \bigoplus_{\lambda} M_\lambda \otimes M'_\lambda
\end{equation}
where $M_\lambda$ and $M'_\lambda$ are the modules in $\frac{G_\Psi}{K_{d_{G/K}(\Psi-h_G) + h_K}}$ and $\frac{K_{d_{G/K}(\Psi-h_G) + h_K}}{H_{d_{G/H}(\Psi-h_G) + h_H}}$
associated to Weyl modules for $K_{d_{G/K}(\Psi-h_G) + h_K}$ of weight $\lambda$, which are in turn the modules associated to 
the corresponding topological Wilson lines in the $B_{1,0}$ boundary condition for the $K$ gauge theory.

\subsubsection{A relation between standard junctions}
It is entertaining to look at the concatenation of standard junctions between $B_{1,0}$ and $B_{-1,1}$ and 
between $B_{1,-1}$ and $B^D_{0,1}$. 

The former junction has a corner VOA  
\begin{equation}
W_{1+\Psi^{-1}}(\g) \simeq \frac{G_{\Psi} \times L_1[G]}{G_{\Psi+1}}.
\end{equation}
The latter has a corner VOA $G_{\Psi+1}$. The product of the two VOAs, extended by 
Weyl modules of $G_{\Psi+1}$ associated to lines in $B_{-1,1}$ is clearly nothing but
\begin{equation}
G_{\Psi} \times L_1[G].
\end{equation}

In other words, the composition of these standard junctions gives the standard junction 
between $B_{1,0}$ and $B^D_{0,1}$ dressed by an extra copy of $L_1[G]$. This is pretty obvious in gauge theory and 
natural in the VOA perspective.

Our main objective will be obtained from this example by the replacement $B_{1,0} \to B^D_{1,0}$.

We could have done the same concatenation but use boundary conditions $B_{1,0}$, $B_{-1,1}$ and $B_{0,1}$.
The result would have been somewhat less pleasant, the qDS reduction 
\begin{equation}
\mathrm{DS} \left[G_{\Psi} \times L_1[G] \right]
\end{equation}
of the product of VOAs by the diagonal set of $G$ currents. 

This is likely the same as $W_G[\Psi] \times L_1[G]$, but with a twisted stress tensor for the $L_1[G]$ factor. 
This would be reasonable: the collision would give the standard junction between $B_{1,0}$ and $B_{0,1}$,
dressed by extra decoupled degrees of freedom given by $L_1[G]$.

\subsubsection{A basic Abelian example}
A junction between $B_{1,-1}$ and $B_{0,1}$ will support a $U(1)_{\Psi+1}$ 
vertex algebra, with charge $n$ vertex operators associated to line defects along $B_{1,-1}$ and 
charge $m (\Psi + 1)$ vertex operators associated to line defects along $B_{0,1}$.

Acting with S-duality, we have that a junction between $B_{-1,1}$ and $B_{1,0}$ will support a $U(1)_{\Psi^{-1}+1}$ 
vertex algebra, with charge $n$ vertex operators associated to line defects along $B_{-1,1}$ and 
charge $m (\Psi^{-1} + 1)$ vertex operators associated to line defects along $B_{1,0}$.

Now, consider the concatenation of the junction between $B_{1,-1}$ and $B_{0,1}$ and the junction between $B_{-1,1}$ and $B_{1,0}$,
giving a new junction between $B_{1,0}$ and $B_{0,1}$. This is just the reverse of the Abelian coset we discussed before, 
but it is worth repeating the exercise from this perspective.

The new VOA will have three sets of local operators: 
the $U(1)_{\Psi+1}$ VOA, the $U(1)_{\Psi^{-1}+1}$ VOA and the sum of products of 
with charge $n$ vertex operators associated to line defects along $B_{1,-1}$ and charge $n$ vertex operators associated to line defects along $B_{-1,1}$. 

These products $O_n$ of vertex operators have conformal dimension 
\begin{equation}
\Delta_n = \frac{n^2}{2 (\Psi+1)} + \frac{n^2}{2 (\Psi^{-1}+1)} = \frac{n^2}{2}
\end{equation}
We will identify $O_{\pm 1}$ with a pair of complex fermions. The current local with the free fermions can be taken to be
\begin{equation}
J_\Psi = \frac{1}{\Psi^{-1}+1}(J_{\Psi+1} - J_{\Psi^{-1}+1})
\end{equation}
The resulting VOA is the $U(1)_\Psi$ expected from a bare junction between $B_{1,0}$ and $B_{0,1}$ dressed by an extra 
complex free fermion VOA. This basic example is explained in terms of the corresponding vertex tensor subcategories of the free boson \voa{} and equivalences as braided tensor categories between them in section \ref{sec:catabel}.

\subsubsection{A general Abelian example}
Consider the product 
\begin{equation}
U(1)_\Psi \times U(1)_{n-\Psi}
\end{equation}
extended by the product of magnetic vertex operators of the two theories, which have dimension 
\begin{equation}
\Delta_m = \frac{m^2}{2} \Psi +  \frac{m^2}{2} (n-\Psi) = n  \frac{m^2}{2}
\end{equation}

We can describe this VOA in a simple manner by rotating our basis of currents: 
define 
\begin{equation}
J_n = J_\Psi + J_{n-\Psi} \qquad \qquad J_{\frac{n^2}{\Psi} -n} = (1-\frac{n}{\Psi}) J_\Psi + J_{n-\Psi}
\end{equation}
Then the extension involves vertex operators charged under $J_n$ only, with charge multiple of $n$, building up the standard lattice VOA 
$V[U(1)_n]$. The full VOA is thus 
\begin{equation}
U(1)_{\frac{n^2}{\Psi} -n} \times V[U(1)_n]
\end{equation}

The electric vertex operators for $U(1)_\Psi \times U(1)_{n-\Psi}$ map to vertex operators 
for $U(1)_{\frac{n^2}{\Psi} -n} \times V[U(1)_n]$. The second family has charges $k$ under $J_{\frac{n^2}{\Psi} -n}$ and $k \,\mathrm{mod}\, n$ under $J_n$ for all integer $k$.
The first family has charges $k' (1-\frac{n}{\Psi})$ under $J_{\frac{n^2}{\Psi} -n}$ and $k' \,\mathrm{mod} \,n$ under $J_n$. 

This is the corner VOA one would assign to a junction between $B_{1,0}$ and $B_{1,n}$ in a $U(1)$ gauge theory, 
resolved into junctions between $B_{1,0}$ and $B_{0,1}$ and $B_{0,-1}$ and $B_{1,n}$.

One can iterate this construction to build more general corner VOAs for a $U(1)$ gauge theory. The general result of a concatenation of 
$n+1$ basic junctions will a $U(1)$ Kac-Moody current combined with a rank $n$ lattice VOA, equipped with two families of modules 
built by dressing $U(1)$ Kac-Moody vertex operators with appropriate modules for the lattice VOA.

Without loss of generality, we can take the boundary conditions to be $B_{p,q}$ and $B_{0,1}$. The boundary condition $B_{p,q}$ 
in an $U(1)$ gauge theory can be defined by coupling the 4d theory to a 3d Chern-Simons theory with Abelian gauge group
determined by the specific choice of $p,q$. Intuitively, the latticed VOA is just the VOA living at a boundary for the 
3d Chern-Simons theory. 

In our example, we can map $\Psi \to \Psi^{-1}$ and look at boundary conditions $B_{n,1}$ and $B_{0,1}$. The boundary theory for $B_{n,1}$
is precisely $U(1)_n$. 

It is plausible that the non-Abelian VOA extensions may be treated in a similar manner as this if one employs 
free field realizations. 

\subsection{Generalized Neumann boundary conditions}
Consider a generalized Neumann boundary condition where some set of three-dimensional hypermultiplets are coupled 
both to a four-dimensional gauge group $G$ and to a three-dimensional gauge group $H$. The precise condition for this 
boundary condition to admit a deformation compatible with general $\Psi$ has not yet been established.

A sufficient condition is that the boundary condition or interface can be decomposed into 
simpler boundary conditions or interfaces involving only three-dimensional hypermultiplets, 
with $H$ being realized by four-dimensional gauge theories compactified on a segment. 

The simplest example would be that the hypermultiplets could be combined with $G$ and $H$ into a supergroup $\hat K$. 
Then we could realize the boundary condition as the composition of a $B_{1,0}$ boundary for $H$ gauge theory and 
a $B^{\hat K}_{1,0}$ interface. 

Next, we can ask for junctions and corner VOAs between such composite boundary condition 
and some other boundary condition, say e.g. $B^D_{0,1}[G]$. 

As we only understand well the line defects of $B_{0,1}$ boundary conditions, we can compose a junction between 
$B_{1,0}[H]$ and $B_{0,1}[H]$ and a junction between $B_{0,1}[H]$, $B^{\hat K}_{1,0}$ and $B^D_{0,1}[G]$.
This choice can be interpreted as a specific choice for the boundary condition 
of the three-dimensional gauge fields and matter fields at the corner, a three-dimensional version of 
a Nahm pole boundary condition. 

The resulting corner VOA $\fA[\g,\Psi;B_{1,0}[H],B^{\hat K}_{1,0},B^D_{0,1}[G];\CJ]$ will be an extension of 
\begin{equation}
DS_{\mathrm{reg}_H} H_{\Psi+n_H} \times DS_{\mathrm{reg}_H} \hat K_{\Psi}
\end{equation}
by products of magnetic modules associated to lines in $B_{0,1}[H]$, which should be the 
images of spectral flow modules for $H$ under the qDS reduction. 

We can treat more complicated examples in the same manner. 

\subsection{The $\Psi \to \infty$ limit}
It is important to remark that the $\Psi \to \infty$ limit of families of VOAs with $G_\Psi$ Kac-Moody sub-algebras 
gives VOAs equipped with an outer $G$ automorphism. These VOAs are rather special, in the sense that 
they can be coupled to $G$ bundles with holomorphic flat connections in an algebraic manner,
by identifying the rescaled $G$ currents in the VOA OPE with the holomorphic connection. 
This is very important for Geometric Langlands applications.

\subsubsection{Relation to VOAs in three-dimensional gauge theory}
We have just discussed boundary conditions of Neumann type associated to 
coupling to a three-dimensional gauge theory $T$ with three-dimensional gauge group $H$ and 
matter fields sitting as odd generators in a supergroup $\hat K$. 

It is interesting to inquire about the $\Psi \to \infty$ of the corner configuration involving such boundary conditions 
and $B^D_{0,1}[G]$. In that limit the coupling of the four-dimensional gauge theory can be taken to be very weak and 
the four-dimensional degrees of freedom essentially decouple from the three-dimensional degrees of freedom. 

The resulting corner VOA $\fA[\g,\infty;B^T_{1,0},B^D_{0,1};\CJ_B]$ should be closely related to the VOA $A_C[T,B]$ which emerges at (deformed $(0,4)$) boundary conditions 
for the three-dimensional gauge theory subject to a Rozansky-Witten twist, discussed in the upcoming work \cite{CG}.

The VOA $A_C[T,B]$ is not fully understood at the moment, as it includes generators arising as boundary monopole operators
whose identity and OPE relations are somewhat mysterious. The perturbative generators, though, for $B$ being Dirichlet boundary conditions, form
a certain graded super-Kac-Moody algebra whose bosonic generators consist of two copies of $H$ in degrees $0$ and $2$ and whose fermionic generators 
live in degree $1$ and are associated to the hypermultiplets. Central elements valued in the global symmetry 
group $G$ can also be included at degree $2$. The specific details of the construction depend on the choice of boundary condition. 

The super-Kac-Moody algebra turns out to coincide with the $\Psi \to \infty$ limit of $H_{\Psi+n_H} \times \hat K_{\Psi}$.
In the limit we need to rescale the currents judiciously in order to keep the OPE coefficients finite. 
The total $H$ currents have finite level, as the $H$ currents in $\hat K_{\Psi}$
have level $-\Psi$. They can be kept finite in the limit. The fermionic currents in $\hat K$
need to be rescaled by a power of $\Psi^{\frac12}$, and a non-trivial OPE with the total $H$ currents. 
The remaining bosonic currents need to be rescaled by a power of $\Psi$ and become the degree $2$ 
components of the superKac-Moody algebra.

If $B$ is a regular Nahm pole boundary condition, we expect the perturbative part of $A_C[T,B]$
to be the $\Psi \to \infty$ limit of 
\begin{equation}
DS_{\mathrm{reg}_H} H_{\Psi+n_H} \times DS_{\mathrm{reg}_H} \hat K_{\Psi}
\end{equation}
This suggests that the full $A_C[T,B]$ could be obtained from the $\Psi \to \infty$ limit of $\fA[\g,\Psi;B_{1,0}[H],B^{\hat K}_{1,0},B^D_{0,1}[G];\CJ]$

We will be able to test this idea in simple examples, using the fact that $A_C[T,B]$
may admit a mirror description, which is considerably simpler and better understood \cite{Gaiotto:2016wcv,CG}. 

\subsubsection{Fermionic currents VOA and $U(1)$ flat connections}
Consider the VOA generated by two fermionic currents of dimension $1$, i.e. a $PSU(1|1)$ Kac-Moody algebra,
with OPE 
\begin{equation}
x(z) y(w) \sim \frac{1}{(z-w)^2}
\end{equation}
This VOA has an $SU(2)$ global symmetry. We will first focus on an $U(1)$ subgroup. 

The $U(1)$ symmetry is only global because we have no $U(1)$ current in the algebra. 
That means we have no good way to couple the VOA to a $U(1)$ bundle. For example, a gauge transformation 
$x(z) \to g(z) x(z)$ and $y(z) \to g(z)^{-1} y(z)$ changes the OPE to 
\begin{equation}
x(z) y(w) \sim \frac{g(z)^{-1} g(w)}{(z-w)^2} \sim \frac{1}{(z-w)^2} - \frac{g(w)^{-1} \partial g(w)}{z-w}
\end{equation}

We can cure this problem if the $U(1)$ bundle is equipped with an holomorphic connection $A(z)$. Then the OPE 
\begin{equation}
x(z) y(w) \sim \frac{1}{(z-w)^2} + \frac{A(w)}{z-w}
\end{equation}
is gauge invariant! That allows one to define conformal blocks for the VOA coupled to a $U(1)$ 
bundle with connection. The conformal blocks will depend algebraically on the connection. 

This OPE arises from the $\Psi \to \infty$ limit of $SU(1|1)_\Psi$: the bosonic generator, rescaled, 
becomes the central element $A(z)$ and the fermionic generators, rescaled, become the $PSU(1|1)$
currents. 

We can also include coupling to an $SU(2)$ connection: 
\begin{equation}
x(z) x(w) \sim  \frac{A^+ (w)}{z-w} \qquad x(z) y(w) \sim \frac{1}{(z-w)^2} + \frac{A(w)}{z-w} \qquad y(z) y(w) \sim  \frac{A^- (w)}{z-w} 
\end{equation}
This OPE arises from the $\Psi \to \infty$ limit of $OSp(1|2)_\Psi$.

We will encounter richer examples in later sections. 

\subsection{Good and bad compositions}
There is a point which is worth making here. It is very convenient working with VOAs where the currents have dimensions 
bounded from below, so that each $L_0$ eigenspace is finite-dimensional. Even if this condition holds for $\fA_{12}$ and $\fA_{23}$,
it may fail for $\fM_{12} \boxtimes_{C_2} \fM_{23}$ if the dimensions of the modules which appear in the construction 
are unbounded from below.

In the physical untwisted gauge theory, quarter-BPS junctions between half-BPS boundary conditions often require the boundaries to have 
a specific slope in the plane orthogonal to the junction. For example, the boundary conditions $B_{p,q}$ should have slopes controlled by $p \tau + q$. 
A conformal-invariant quarter-BPS junction in the physical theory will have local operators of positive scaling dimension. The property will be inherited by the 
corresponding VOA. 

The concatenation of two physical quarter-BPS junctions into a single one may already be tricky: the R-charge of local operators 
is well-defined before the scaling limit, but the IR R-symmetry may differ from the R-symmetry of the UV
concatenation of junctions, usually due to the decoupling of some free degrees of freedom in the scaling limit. 
Such a ``bad'' collision may thus give VOAs which mildly fail to have bounded scaling dimensions, 
but the problem should be solvable by a judicious re-definition of the stress tensor for the decoupled 
degrees of freedom. For example, the decoupled system may consist of some $bc$ system with non-positive dimension for $c$,
which can be corrected to a system of free fermions. We saw a potential example earlier on, where the decoupled degrees of freedom 
consisted of $L_1[G]$. 

A more serious obstruction occurs if we concatenate junctions in a manner which 
may be available in the topologically twisted theory, but not in the underlying untwisted physical theory,
as it violates the slope constraints. An example would be the concatenation of a junction between $B^{\cdots}_{1,2}$ and $B^{\cdots}_{-1,1}$ and 
one between $B^{\cdots}_{1,-1}$ and $B^{\cdots}_{2,1}$. In these examples we do indeed find that the concatenation involves 
products of modules with scaling dimensions unbounded from below. 

\section{The (quantum) Geometric Langlands kernel VOAs}\label{sec:three}

We want to define a VOA $\mathfrak{A}[G, \Psi]$ associated to a gauge group $G$ and a 
complex parameter $\Psi$, which arise at the junction between Dirichlet boundary conditions for a gauge theory of gauge group $G$ 
and the S-duality image of Dirichlet boundary conditions for the dual gauge theory with dual gauge group ${}^L G$. 

For general $\Psi$, this VOA is expected to be endowed with two Kac-Moody subalgebras, associated respectively to $G$ and ${}^L G$.
In this section we will specialize again to simply-laced groups with Abelian factors added to make them self-dual, 
i.e. ${}^L G = G$. 

In the language of the previous section, we are after a corner VOA between $B^D_{1,0}$ and $B^D_{0,1}$ boundary conditions.
We will build one in a straightforward manner: we concatenate the standard junction between $B^D_{0,1}$ and $B_{1,-1}$ 
and the standard junction between $B_{-1,1}$ and $B^D_{1,0}$.

\subsection{The basic extension}

Our candidate for $\mathfrak{A}[G, \Psi]$ is the conformal extension of a product VOA of the form 
\begin{equation}
G_{\Psi + 1} \times G_{\Psi^{-1} + 1}
\end{equation}
by a sum of products of Weyl modules associated to the topological lines on the $B_{1,-1}$ boundary. See Figure \ref{fig:basic}.
This sum simply runs over all finite-dimensional representations:
\[
\AG = \bigoplus_{\lambda} M_{\Psi+1, \lambda} \otimes M_{\Psi^{-1}+1, \lambda}
 \] 

For general groups, this is as much as we can do. It is a relatively precise definition, in the sense that 
the conformal extension should be determined by the braided inverse equivalence between the category of 
Weyl modules of $G_{\Psi + 1}$ and the category of Weyl modules of $G_{\Psi^{-1} + 1}$.

In particular, the space of conformal blocks for $\mathfrak{A}[G, \Psi]$ should be a sub-space 
of the product of conformal blocks for $G_{\Psi + 1}$ and $G_{\Psi^{-1} + 1}$,
determined by a projector built from the algebra element encoding the braided equivalence. 
It should be possible to describe this in a mathematically concise fashion as the 
space of (derived) sections of some D-module. We will not attempt to do so here. 

For specific classical groups we can provide further information on the 
resulting VOA by leveraging extra information about the S-duality of 
special interfaces. 

Before doing so, we observe that we can build generalizations $\mathfrak{A}^{p,q}[G, \Psi]$
of $\mathfrak{A}[G, \Psi]$ as corner VOAs between $B^D_{p,q}$ and $B^D_{0,1}$. These encode generalized 
quantum Geometric Langlands relationships associated to general S-duality elements. 

We specialize again to simply-laced, self-dual $G$. Then a junction between $B^D_{n,1}$ and $B^D_{0,1}$
resolved through a $B_{1,0}$ segment gives a conformal extension of 
\begin{equation}
G_{\Psi} \times G_{\Psi'}
\end{equation}
with 
\begin{equation}
\frac{1}{\Psi} + \frac{1}{\Psi'} = n
\end{equation}
of the form 
\[
\AGn = \bigoplus_{\lambda} M_{\Psi, \lambda} \otimes M_{\Psi', \lambda}
 \] 
See Figure \ref{fig:basictwo}.

More general junctions may require a sequence of $B_{a,b}$ intervals and thus 
will involve an extensions of product VOAs of  the form
\begin{equation}
G_{\Psi_0} \times W_{\Psi_1}(\g) \times \cdots \times G_{\Psi_{m+1}}
\end{equation}
by modules of the form 
\[
\Agni = \bigoplus_{\lambda_i} M_{\Psi_0, \lambda_0} \otimes \left(\otimes_{i=1}^m M^W_{\Psi_i, \lambda_{i-1},\lambda_i}\right) \otimes M_{\Psi_{m+1}^{-1}, \lambda_m}
 \] 
See Figure \ref{fig:basictwo}.
 
\subsubsection{The Geometric Langlands kernel} 
If we take the $\Psi \to \infty$ limit of the algebra $\mathfrak{A}[G, \Psi]$ 
we obtain a simpler algebra, the conformal extension of $G_{1}$ by modules of the form 
\[
\AGo = \bigoplus_{\lambda} R_{\lambda} \otimes M_{1, \lambda}
 \] 
 where $R_{\lambda}$ is the weight $\lambda$ finite-dimensional representation of an outer 
 automorphism $G_{\mathrm{out}}$ global symmetry. Notice that $G_{1}$ is the Kac-Moody algebra at critically shifted level $1$, 
 i.e. standard level $1-h$. 
 
 A similar structure had been conjectured for $SU(2)$ and $SU(3)$ gauge groups in \cite{Gaiotto:2016wcv}. We will discuss $SU(2)$ further in the next section. 
 
The conformal blocks of $\AGo$ on a Riemann surface equipped with an $G_{\mathrm{out}}$
flat connection conjecturally coincide with the Hecke eigensheaves labelled by the same flat connection. 
\footnote{The fact that such conformal blocks can be defined in a manner which is algebraic in the $G_{\mathrm{out}}$
flat connection is rather non-trivial and it is intimately related to the fact that the $G_{\mathrm{out}}$ outer automorphism 
symmetry is the remnant of a $G$ Kac-moody algebra whose level is sent to infinity. }

\subsection{qDS reduction of $\mathfrak{A}[G, \Psi]$}
The quantum DS reduction of $\mathfrak{A}[G, \Psi]$ by a regular embedding in 
one of the $G$ Kac-Moody algebras is rather striking: it gives an extension of 
\begin{equation}
G_{\Psi + 1} \times W_{G}[\Psi^{-1} + 1]
\end{equation}
which we have encountered before: $G_\Psi \times L_1[G]$! 
Indeed, the qDS reduction is the VOA manifestation of an operation which maps $B^D_{1,0}$ to $B_{1,0}$. 
See also Figure \ref{fig:basic}.

We can check this claim in full detail for $G=U(2)$. We will do so in Theorem \ref{thm:DS}. The generalization of statements 
to all simply laced $G$ uses \cite{GKO} and will be presented in future work.  

This fact is particularly significant in light of the expected relationship with the quantum Geometric Langlands program. 
We believe it is the VOA version of a crucial statement relating the twisted D-module on the space of $G$-bundles which quantizes the oper manifold and 
the D-module of all twisted differential operators  on the space of $G$-bundles \cite{Gait2}.

The relationship is strengthened further by the observation that the qDS reduction of spectral flow modules for the $G$ Kac-Moody algebra
results in Weyl modules for $G_\Psi$ (see Section \ref{sec:DSspectralflow} for the $SU(2)$ case). This statement also has a natural quantum Geometric Langlands interpretation. 

Finally, the $\Psi \to \infty$ limit of this statement becomes a statement about the properties of $\AGo$ when it is coupled to 
a $G_{\mathrm{out}}$ flat connection which is actually an oper. One can show that the coupling to an oper reduces the 
VOA to a central quotient of $G_0 \times L_1[G]$. 

The $L_1[G]$ VOA has trivial one-dimensional conformal blocks. Thus we hope to recover a well known fact about the Geometric Langlands correspondence: 
the Hecke eigensheaves labelled by opers are conformal blocks of a a central quotient of  the Kac-Moody algebra at critical level. 

\begin{figure}[p]
\begin{center}
\begin{tikzpicture}
\path [fill=lightgray] (0,0) -- (2,0) -- (2,3) -- (-1,3) -- (-1,1);
\draw[->,ultra thick] (0,0) --(2,0);
\draw[->] (0,0) --(-.5,.5);
\draw (0,0) --(-1,1);
\draw[->,ultra thick] (-1,1) --(-1,3);
\node at (.5,1.5) {$G$,$\Psi$};
\node at (-.8,.2) {$B_{1,-1}$};
\node at (-1.5,2) {$B^D_{1,0}$};
\node at (1,-.3) {$B^D_{0,1}$};
\draw[fill] (0,0) circle [radius=0.05];
\draw[fill] (-1,1) circle [radius=0.05];
\end{tikzpicture}
\hspace{2cm}
\begin{tikzpicture}
\path [fill=lightgray] (0,0) -- (2,0) -- (2,3) -- (-1,3) -- (-1,1);
\draw[->,ultra thick] (0,0) --(2,0);
\draw[->] (0,0) --(-.5,.5);
\draw (0,0) --(-1,1);
\draw[->] (-1,1) --(-1,3);
\node at (.5,1.5) {$G$,$\Psi$};
\node at (-.8,.2) {$B_{1,-1}$};
\node at (-1.5,2) {$B_{1,0}$};
\node at (1,-.3) {$B^D_{0,1}$};
\draw[fill] (0,0) circle [radius=0.05];
\draw[fill] (-1,1) circle [radius=0.05];
\end{tikzpicture}
\end{center}
\caption{An important concatenation of junctions. Left: In order to find a junction between $B^D_{1,0}$ and $B^D_{0,1}$ 
we interpolate with an $B_{1,-1}$ segment. This resolution defines the algebra $\mathfrak{A}[G,\Psi]$ as an extension the product of 
Kac-Moody $G$ at level $k+h = \Psi^{-1}+1$ and Kac-Moody $G$ at level $k'+h = \Psi+1$.
Right: The same construction applied to $B_{1,0}$ and $B^D_{0,1}$ gives the product of 
Kac-Moody $G$ at level $k+h = \Psi$ and the minimal WZW model $L_1[G]$. This is related to the previous construction by a qDS reduction. 
}\label{fig:basic}
\end{figure}
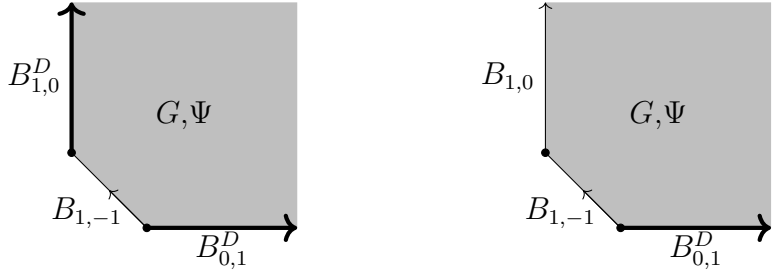

\begin{figure}[h]
\begin{center}
\begin{tikzpicture}
\path [fill=lightgray] (0,0) -- (2,-1) -- (2,3) -- (-1,3) -- (-1,1);
\draw[->,ultra thick] (0,0) --(2,-1);
\draw[->] (0,0) --(-.5,.5);
\draw (0,0) --(-1,1);
\draw[->,ultra thick] (-1,1) --(-1,3);
\node at (.5,1.5) {$G$,$\Psi$};
\node at (-.8,.2) {$B_{1,-1}$};
\node at (-1.5,2) {$B^D_{1,0}$};
\node at (1,-.8) {$B^D_{-1,2}$};
\draw[fill] (0,0) circle [radius=0.05];
\draw[fill] (-1,1) circle [radius=0.05];
\end{tikzpicture}
\hspace{3cm}
\begin{tikzpicture}
\path [fill=lightgray] (0,0) -- (2,-1) -- (5,-2) -- (5,3) -- (-1,3) -- (-1,1);
\draw[->,ultra thick] (2,-1) --(5,-2);
\draw[->] (0,0) --(1,-.5);
\draw (0,0) --(2,-1);
\draw[->] (0,0) --(-.5,.5);
\draw (0,0) --(-1,1);
\draw[->,ultra thick] (-1,1) --(-1,3);
\node at (.5,1.5) {$G$,$\Psi$};
\node at (-.8,.2) {$B_{1,-1}$};
\node at (-1.5,2) {$B^D_{1,0}$};
\node at (1,-.8) {$B_{-1,2}$};
\node at (3.5,-2) {$B^D_{-1,3}$};
\draw[fill] (0,0) circle [radius=0.05];
\draw[fill] (-1,1) circle [radius=0.05];
\end{tikzpicture}
\end{center}
\caption{More intricate resolutions. Left: A useful way to resolve a junction between $B^D_{1,0}$ and $B^D_{-1,2}$. The final junction VOA conformally extends the product of 
$G^\vee$ Kac-Mody at $k+h = \Psi^{-1}+1$ and Kac-Moody $G$ at level $k'+h = \frac{\Psi+1}{\Psi+2}$. This can be generalized to any positive integer. 
Right: A useful way to resolve a junction between $B^D_{1,0}$ and $B^D_{-1,3}$.  The final junction VOA conformally extends the product of 
$G^\vee$ Kac-Mody at $k+h = \Psi^{-1}+1$,  $W_G$ at level $k'+h = \frac{\Psi+1}{\Psi+2}$ and Kac-Moody $G$ at level $k''+h = \frac{\Psi+2}{\Psi+3}$.
}\label{fig:basictwo}
\end{figure}
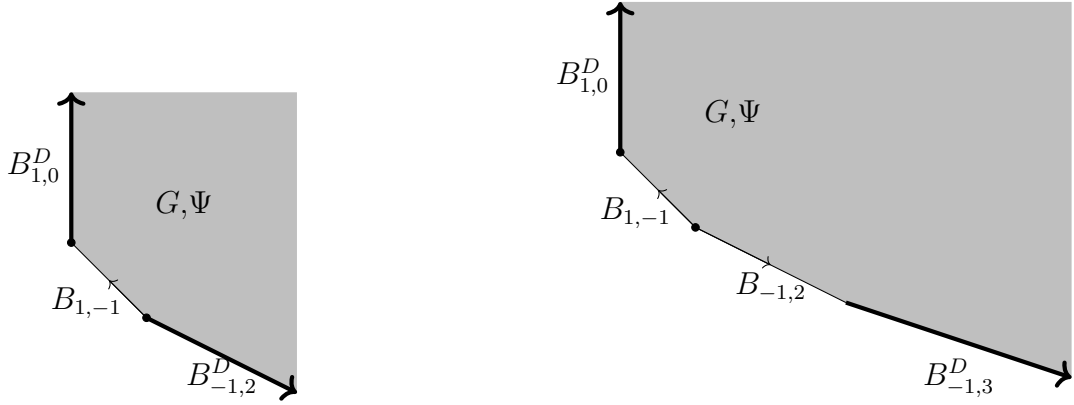

\section{The $U(2)$ kernel and an unexpected exceptional supergroup.} \label{sec:four}
We will now implement our prescription for the quantum Geometric Langlands kernel for the $U(2)$ gauge group. 

\subsection{Simple junction and $U(1) \times \Vir$}
The junction between $B_{1,0}$ and $B_{0,1}$ in $U(2)$ gauge theory supports a $W_{\psi}(U(2)) \equiv U(1)_{2\Psi} \times \Vir_{b^2 = - \Psi}$ 
vertex algebra. 

The relevant families of modules combine degenerate modules of Virasoro and $U(1)$ modules of appropriate charge: 
\begin{itemize}
\item Charge $(n_1,n_2)$ Wilson line defects along $B_{1,0}$ end on the module generated by a primary of dimension 
\begin{equation}
\Delta_{(n_1,n_2)} = \frac{n_1^2 + n_1}{2 \Psi} - \frac{n_1}{2} + \frac{n_2^2-n_2}{2 \Psi}+ \frac{n_2}{2}
\end{equation}
which is the combination of a charge $n_1 + n_2$ vertex operator for $U(1)_{2\Psi}$ and a degenerate Virasoro primary 
of type $(1,1+n_1-n_2)$. Here $n_1 \geq n_2$. 
\item Charge $(m_1,m_2)$ 't Hooft line defects along $B_{0,1}$ end on the module generated by a primary of dimension 
\begin{equation}
\Delta_{(m_1,m_2)} = \frac{m_1^2 + m_1}{2}\Psi + \frac{m_2^2-m_2}{2}\Psi
\end{equation}
which is the combination of a charge $(m_1 +m_2)\Psi$ vertex operator for $U(1)_{2\Psi}$ and a degenerate Virasoro primary 
of type $(1+m_1-m_2,1)$. Here $m_1 \geq m_2$
\end{itemize}
Notice that the two sets of vertex operators are mutually local and map to each other under $\Psi \to \Psi^{-1}$. 

\subsection{Mixed junction and $U(2)_\Psi$}
The junction between $B_{1,0}$ and $B^{D}_{0,1}$ in $U(2)$ gauge theory supports a $U(2)_{\Psi}$ 
Kac-Moody algebra. 

The level of the diagonal $U(1)$ sub-algebra is $2 \Psi$. The $U(2)_\Psi$ Weyl modules are labelled by a weight 
$(n_1,n_2)$ with $n_1 \geq n_2$ and have dimensions proportional to the $U(2)$ Casimir: 
\begin{equation}
\Delta_{(n_1,n_2)} = \frac{n_1^2 + n_1}{2 \Psi} + \frac{n_2^2-n_2}{2 \Psi}
\end{equation}
These modules live at the end of charge $(n_1,n_2)$ Wilson line defects along $B_{1,0}$.

The set of line defects available along $B^{D}_{0,1}$ is potentially rather rich and not obvious from the gauge theory description. 
They will give a rich collection of modules for $U(2)_\Psi$ which are mutually local with 
the Weyl modules. We will discuss some of them later. 

\subsection{Collision of mixed junctions and the $D(2,1;-\Psi)$ exceptional supergroup.}
A junction between $B_{1,-1}$ and $B^D_{0,1}$ will support a $U(2)_{\Psi+1}$ 
vertex algebra, with charge $(n_1,n_2)$ vertex operators associated to line defects along $B_{1,-1}$.

Acting with S-duality, we have that a junction between $B_{-1,1}$ and $B^D_{1,0}$ will support a $U(2)_{\Psi^{-1}+1}$ 
vertex algebra, with charge $(n_1,n_2)$ vertex operators associated to line defects along $B_{-1,1}$.

Now, consider the concatenation of the junction between $B_{1,-1}$ and $B^D_{0,1}$ and the junction between $B_{-1,1}$ and $B^D_{1,0}$,
giving a new junction between $B^D_{1,0}$ and $B^D_{0,1}$. The new VOA will have three sets of local operators: 
the $U(2)_{\Psi+1}$ VOA, the $U(2)_{\Psi^{-1}+1}$ VOA and the sum of products of 
with charge $(n_1,n_2)$ vertex operators associated to line defects along $B_{1,-1}$ 
and charge $(n_1,n_2)$ vertex operators associated to line defects along $B_{-1,1}$. 

These products $O_{n_1,n_2}$ of vertex operators are Weyl modules for $U(2)_{\Psi+1}\times U(2)_{\Psi^{-1}+1}$ 
and have conformal dimension 
\begin{equation}
\Delta_{(n_1,n_2)} = \frac{n_1^2 + n_1}{2} + \frac{n_2^2-n_2}{2}
\end{equation}
We conjecture that the resulting vertex algebra $\mathfrak{A}[U(2), \Psi]$ is a current algebra for the supergroup $D(2,1;-\Psi)\times U(1)_{2 \Psi}$.

The overall $U(1)$ should be the anti-diagonal combination of the centers of $U(2)_{\Psi+1}\times U(2)_{\Psi^{-1}+1}$. 
It gives a current which we can normalize as $U(1)_{2 \Psi}$ and has trivial OPE with the $O_{n_1,n_2}$
vertex operators. 

The second linear combination of the centers of $U(2)_{\Psi+1}\times U(2)_{\Psi^{-1}+1}$ 
can be taken to be the level $2$ current
\begin{equation}
J_{2}=  \frac{1}{\Psi+1} J_{2 \Psi + 2} + \frac{\Psi}{\Psi+1}J_{2 \Psi^{-1}+2} 
\end{equation}
so that $O_{n_1,n_2}$ have charge $\frac{n_1 + n_2}{2}$. 

The $O_{(1,1)}$ and $O_{(-1,-1)}$ vertex operators are dimension $1$ currents which are singlets of $SU(2)_{\Psi+1}\times SU(2)_{\Psi^{-1}+1}$
and have charge $\pm 1$ under $J_{2}$. Together they form an $SU(2)_1$ WZW current algebra. 
The $SU(2)_{\Psi+1}\times SU(2)_{\Psi^{-1}+1} \times SU(1)_1$ algebra is the bosonic subalgebra of $D(2,1;-\Psi)_1$.

On the other hand, $O_{(1,0)}$ and $O_{(0,-1)}$ are dimension $1$ currents which transform in doublet representations of
the $SU(2)_{\Psi+1}\times SU(2)_{\Psi^{-1}+1}$ subalgebra. They are naturally identified with the odd generators of $D(2,1;-\Psi)_1$.

The conformal embedding of $SU(2)_{\Psi+1}\times SU(2)_{\Psi^{-1}+1}\times SU(2)_1 \subset D(2,1;-\Psi)_1$ has been discussed a while ago
\cite{BFST}. The vacuum character decomposes exactly as described above, see Theorem \ref{thm:dec}. 

\subsection{Further resolutions of the full junction}
The analysis of \cite{BFST} uses intensively the relation of an $SU(2|1)$ current
algebra to $D(2,1;-\Psi)_1$. We describe this in section \ref{sec:constVOAs}.
This structure has a neat interpretation in the language of gauge theory junctions. 

We can recover the structure by resolving $B^{D}_{1,0}$ into a combination of a $B_{1,0}$ interface between $U(2)$ and $U(1)$ gauge theory 
and a $B_{1,0}$ boundary condition for $U(1)$. The full junction is then resolved into 
a junction between the $B_{1,0}$ interface and the $B_{0,1}^D$ boundary conditions for $U(2)$ and $U(1)$ gauge theories 
and the standard junction for the $U(1)$ gauge theory. 

These junctions are well understood and support respectively an $U(2|1)_\Psi$ and 
an $U(1)_\Psi$  Kac-Moody algebras. The full junction VOA should be an extension of that 
product VOA by modules associated to 't Hooft line defects on the $B_{0,1}$ boundary for the $U(1)$ theory. 

We already encountered the corresponding modules for $U(1)_\Psi$: they are generated by vertex operators of charge $m \Psi$,
which can be thought of as spectral flow operators for $U(1)_\Psi$. We have not encountered before the corresponding modules for 
$U(2|1)_\Psi$, but we expect them to be also spectral flow modules for the super-group Kac-Moody algebra, 
associated to $m$ units of spectral flow for the block-diagonal $U(1)$ subgroup. 
The corresponding decomposition of $D(2,1;-\Psi)_1$ is precisely described in \cite{BFST}.

We can also take the resolved junction of the previous section and decompose both $B^{D}_{1,0}$ 
and $B^{D}_{0,1}$ for $U(2)$ into interfaces from $U(2)$ to $U(1)$ and boundary conditions for $U(1)$. 

The ``maximally resolved'' picture involves the product of four VOAs: 
\begin{equation}
U(1)_\Psi \times \frac{U(2|1)_\Psi}{U(2)_{\Psi + 1}} \times \frac{U(2)_{\Psi + 1}}{U(1)_\Psi} \times U(1)_\Psi
\end{equation}
This set of VOAs is invariant under $\Psi \to \Psi^{-1}$, exchanging the outer pair of factors and the inner pair of factors,
as $\frac{U(2|1)_\Psi}{U(2)_{\Psi + 1}} \simeq \frac{U(2)_{\Psi^{-1} + 1}}{U(1)_{\Psi^{-1}}}$.
This product should be extended by three sets of modules, corresponding to appropriate line defects:
\begin{itemize}
\item The product of magnetic modules for $U(1)_\Psi$ and the coset image in $\frac{U(2|1)_\Psi}{U(2)_{\Psi + 1}}$ of spectral flow modules for $U(2|1)_\Psi$.
\item The product of coset modules in $\frac{U(2|1)_\Psi}{U(2)_{\Psi + 1}}$ associated to Weyl modules of $U(2)_{\Psi + 1}$ and the coset image in $\frac{U(2)_{\Psi + 1}}{U(1)_\Psi}$ 
of Weyl modules of $U(2)_{\Psi + 1}$.
\item The product of coset modules in $\frac{U(2)_{\Psi + 1}}{U(1)_\Psi}$ associated to Weyl modules of $U(1)_\Psi$ and electric modules for $U(1)_\Psi$.
\end{itemize}

If we only include the second and third sets of modules, we reassemble the sequence of nesting cosets 
to 
\begin{equation}
U(1)_\Psi \times U(2|1)_\Psi
\end{equation}
If we include the first two sets we reassemble a dual sequence of cosets to 
\begin{equation}
U(2|1)_{\Psi^{-1}} \times U(1)_\Psi
\end{equation}
If we include the first and third sets we get the original product of Kac-Moody algebras:
\begin{equation}
U(2)_{\Psi^{-1} + 1} \times U(2)_{\Psi + 1}
\end{equation}

The gauge theory picture predicts the existence of two families of modules for the full VOA, 
associated to line defects in either $B^{D}_{1,0}$ or $B^{D}_{0,1}$.
The set of line defects living on these boundary conditions is not fully understood, but 
we can at least discuss the line defects which result from line defects in the 
resolved junction. 

The first set of modules is then simply induced from Weyl modules for $U(1)_\Psi \times U(2|1)_\Psi$.
The second set is induced from Weyl modules for $U(2|1)_{\Psi^{-1}} \times U(1)_\Psi$. Gauge theory predicts 
these two sets of modules will be mutually local. 

\subsection{qDS reduction}

It is natural to consider the qDS reduction of $\mathfrak{A}[U(2), \Psi]$ by either of the two Kac-Moody 
$SU(2)_{\Psi^{\pm 1} -1}$ subalgebras. In gauge theory terms, this should correspond to replacing 
either $B^D_{1,0}$ or $B^D_{0,1}$ with $B_{1,0}$ or $B_{0,1}$ and is thus expected to 
produce something similar to $U(2)_{\Psi^{\pm 1}}$.

If we focus on a sub-algebra 
\begin{equation}
U(2)_{\Psi^{-1} + 1} \times U(2)_{\Psi + 1}
\end{equation}
then the qDS reduction of the second factor gives 
\begin{equation}
U(2)_{\Psi^{-1} + 1} \times (U(1)_{2 \Psi + 2} \times \Vir_{b^2 = - \Psi-1})
\end{equation}
The coset description of Virasoro allows one to rewrite that as 
\begin{equation}
U(2)_{\Psi^{-1} + 1} \times \frac{U(2)_{\Psi^{-1} } \times \mathrm{Ff}^2}{U(2)_{\Psi^{-1} + 1} }
\end{equation}
where $\mathrm{Ff}^2$ is the VOA generated by two complex fermions. 

The extension of $U(2)_{\Psi^{-1} + 1} \times U(2)_{\Psi + 1}$ to $\mathfrak{A}[U(2), \Psi]$ 
involves precisely the products of modules which extend the above product to 
$U(2)_{\Psi^{-1} } \times \mathrm{Ff}^2$. 

This is reasonable. The qDS reduction of $D(2,1;-\Psi)_1$ 
makes four of the odd generators into free fermions. Stripping off these free fermions 
kills the $SU(2)_1$ currents. It is less obvious, but apparently true, that the total stress tensor 
is equal to the sum of the free fermion stress tensor and the  $U(2)_{\Psi^{-1} }$ Sugawara stress tensor.  We prove these statements in Theorem \ref{thm:DS}.

If we start from the sub-algebra 
\begin{equation}
U(1)_\Psi \times U(2|1)_\Psi
\end{equation}
and do the qDS reduction, we may use the relation between the qDS reduction of $U(2|1)_\Psi$
and ${\cal N}=2$ super-Virasoro and the Kazama-Suzuki coset description of ${\cal N}=2$ super-Virasoro in order to 
recover in a different way the identification with $U(2)_{\Psi^{-1} } \times \mathrm{Ff}^2$.

We can say a bit more about modules. The qDS reduction of spectral flow modules of $ U(2)_{\Psi + 1}$ 
gives ``magnetic'' degenerate modules for $U(1)_{2 \Psi + 2} \times \Vir_{b^2 = - \Psi-1}$.
In turns, these induce Weyl modules of $U(2)_{\Psi^{-1}}$. 

\subsection{The $\Psi \to \infty$ limit}
It is easy to see that the $\Psi \to \infty$ limit of $D(2,1;-\Psi)_1$ 
is $PSU(2|2)_1$: the $SU(2)_{1+\Psi}$ Kac-Moody subalgebra 
becomes the $SU(2)$ outer automorphism of $PSU(2|2)_1$.

In particular, conformal blocks for $PSU(2|2)_1$ should give the kernel for 
$SU(2)$ Geometric Langlands. 

We expect the VOA $PSU(2|2)_1$ to arise as a boundary VOA for the Rozansky-Witten twist 
of $T[SU(2)]$, aka $SQED_2$, the three-dimensional gauge theory which can be used to define $B_{1,0}^D$.
We refer to \cite{CG} for confirmation of this statement. 

In this particular case, the mirror dual description of the boundary condition is also known, 
as a natural boundary condition for the mirror Rozansky-Witten twist 
of $T[SU(2)]$, which is a self-mirror theory. The mirror description of the boundary condition involves 
boundary degrees of freedom given by the pair of complex fermions, necessary to cancel a 
certain boundary gauge anomaly.  

The mirror description of the VOA is that of a $U(1)$ BRST coset of 
the product of two symplectic bosons and two complex fermions. 
The symplectic bosons arise from the two hypermultiplets in $SQED_2$
and the $U(1)$ BRST coset from the $U(1)$ gauge field in $SQED_2$.

That free VOA before the $U(1)$ BRST coset gives an $U(2|2)_1$ 
WZW model. The $U(1)$ BRST coset reduces it to $PSU(2|2)_1$.

We should compare this proposal with an earlier proposal for the kernel VOA for 
$SU(2)$ Geometric Langlands \cite{Gaiotto:2016wcv}. The main difference lies in the choice of auxiliary boundary degrees of freedom:
a pair of complex fermions versus a $U(1)_2$ lattice VOA. The latter is obtained from the former by a coset by $SU(2)_1$. 

In particular, the VOA of \cite{Gaiotto:2016wcv} can be described as the coset 
\begin{equation}
\frac{PSU(2|2)_1}{SU(2)_1}
\end{equation}
which makes manifest the $SU(2)$ global symmetry conjectured there. 

This coset has a generalization for generic $\Psi$:
\begin{equation}
\frac{D(2,1;-\Psi)_1}{SU(2)_1}
\end{equation}

\subsection{$U(2)$ vs $SU(2)$} 
Another perspective on the VOA of \cite{Gaiotto:2016wcv} becomes available if we base the gauge theory description 
of an $SU(2)$ gauge theory, rather than a $U(2)$ one. This raises some subtleties concerning the global form of the gauge group, 
but they are manageable. 

If we completely ignore these subtleties, we could simply write down the obvious answer for the candidate 
VOA at the junction between $B^D_{1,0}[SU(2)]$ and $B^D_{0,1}[SU(2)]$:
the extension of $SU(2)_{\Psi+1}\times SU(2)_{\Psi^{-1}+1}$
by products of Weyl modules
\begin{equation}
\mathfrak{A}[SU(2);\Psi] = \bigoplus_{j \in \mathbb{Z}} M_{j,\Psi +1} \otimes  M_{j,\Psi^{-1} +1} 
\end{equation}
In order to get integral dimensions, we only sum over Weyl modules associated to representations of integral spin. 

This reproduces the answer from the previous section:
\begin{equation}
\mathfrak{A}[SU(2);\Psi] = \frac{D(2,1;-\Psi)_1}{SU(2)_1}
\end{equation}
In the $\Psi \to \infty$ limit this matches the proposal of \cite{Gaiotto:2016wcv}.

In order to use this VOA in Geometric Langlands applications, though, we likely need to understand better 
the global properties of the gauge group and how do they reflect on the calculation of conformal blocks. 
We leave that for future work. 

\section{Classical simply-laced Lie groups} \label{sec:five}
In this section we describe some interesting sub-algebras of $\mathfrak{A}[G, \Psi]$ 
for classical groups $G$. 
\subsection{$\mathfrak{A}[U(N), \Psi]$}
Our knowledge of S-duality properties of boundary conditions and junctions is 
much better developed for $U(N)$ gauge theory than for other gauge groups, thanks to 
string constructions involving brane webs \cite{Gaiotto:2017euk}. We may employ that knowledge to 
seek further information about the $\mathfrak{A}[U(N),\Psi]$ VOA.

A particularly important duality fact is that the $S$ operation relates 
the boundary condition $B^{U(N|M)}_{1,0}$ and $B^{\rho_{N-M}}[U(N);U(M)]$ for $N>M$, 
with $\rho_{N-M}$ being the embedding were the fundamental representation of $\mathfrak{u}(N)$ 
is decomposed into an irreducible $\mathfrak{su}(2)$ representation of dimension $N-M$ and 
$M$ copies of the trivial representation. We can denote the corresponding family of interfaces as 
$B_{p,q}^{N|M}$. 

There is a large set of conjectural equivalences between vertex algebras which follow from the 
conjecture that string theory junctions involving a $(1,0)$, a $(0,1)$ and a	$(1,1)$ fivebranes 
should be duality co-variant \cite{Gaiotto:2017euk}. 

These equivalences can be used to understand junctions involving $B^D_{1,0}$. 
The basic idea is that $B^D_{0,1}[U(N)]$ can be decomposed into an
of $B^{N|N-1}_{0,1}$ interface and a $B^D_{0,1}[U(N-1)]$ boundary condition.
The S-dual decomposition breaks down $B^D_{1,0}[U(N)]$ into an
$B^{N|N-1}_{1,0}$ interface and a $B^D_{1,0}[U(N-1)]$ boundary condition.

For example, the basic decomposition of $B^D_{0,1}[U(N)]$ applied to a junction with $B_{1,0}$ leads to the obvious coset relation \footnote{In order to understand the level shifts, 
recall that the critical level for $U(N)$ is $-N$ and the non-shifted level of the $U(N-1)$ sub-algebra equals the 
non-shifted level of the $U(N)$ Kac-Moody algebra. Hence the non-critically shifted levels are both $\Psi-N+1$.}
\begin{equation}
\frac{U(N)_{\Psi+1}}{U(N-1)_{\Psi}} \times U(N-1)_{\Psi} \subset  U(N)_{\Psi+1}
\end{equation}
The action of $S$ and covariance of junctions gives the far less obvious 
\begin{equation}
 \frac{DS_{N-1} U(N|N-1)_{\Psi}}{U(N)_{\Psi+1}} \times  U(N-1)_{\Psi^{-1}}\subset  U(N)_{\Psi^{-1}+1}
\end{equation}
which follows from the non-trivial VOA equivalence \footnote{In order to understand the level shifts, 
recall that the critical level for $U(N|M)$ is $M-N$ and hence the non-shifted level of $U(N|N-1)_{\Psi^{-1}}$ is $\Psi^{-1}-1$. 
The DS reduction modifies the non-shifted level of the $U(N)$ sub-algebra from $\Psi^{-1}-1$ to the desired $\Psi^{-1}-N+1$.}
\begin{equation}
\frac{U(N)_{\Psi+1}}{U(N-1)_{\Psi}} \simeq \frac{DS_{N-1} U(N|N-1)_{\Psi^{-1}}}{U(N)_{\Psi^{-1}+1}}  
\end{equation}
where the modules associated to Weyl modules of $U(N)_{\Psi+1}$ and $U(N-1)_{\Psi}$ respectively 
map to modules associates to Weyl modules of $U(N)_{\Psi^{-1}+1}$ and spectral flow modules 
for the DS reduction. 

We can apply this resolution of $B^D_{0,1}[U(N-1)]$ to the duality kernel setup in a manner analogous to what we did for 
$U(2)$. We start from the conformal embedding
\begin{equation}
U(N)_{\Psi+1} \times U(N)_{\Psi^{-1}+1} \subset  \mathfrak{A}[U(N),\Psi]
\end{equation}
and decompose it further 
\begin{equation}
U(N)_{\Psi+1} \times \frac{U(N)_{\Psi^{-1}+1}}{U(N-1)_{\Psi^{-1}}} \times U(N-1)_{\Psi^{-1}} \subset U(N)_{\Psi+1} \times U(N)_{\Psi^{-1}+1} \subset  \mathfrak{A}[U(N),\Psi]
\end{equation}
map it to 
\begin{align}
U(N)_{\Psi+1} &\times \frac{DS_{N-1} U(N|N-1)_{\Psi}}{U(N)_{\Psi+1}} \times  U(N-1)_{\Psi^{-1}} \cr &\subset U(N)_{\Psi+1} \times U(N)_{\Psi^{-1}+1} \subset  \mathfrak{A}[U(N),\Psi]
\end{align}
and associate the composition of junctions in an alternative manner:
\begin{align}
U(N)_{\Psi+1} & \times \frac{DS_{N-1} U(N|N-1)_{\Psi}}{U(N)_{\Psi+1}} \times  U(N-1)_{\Psi^{-1}} \cr
&\subset DS_{N-1} U(N|N-1)_{\Psi} \times  U(N-1)_{\Psi^{-1}} \subset  \mathfrak{A}[U(N),\Psi]
\end{align}

As a result, we have expressed $\mathfrak{A}[U(N),\Psi]$ as the extension of a product of three VOAs 
by two mutually local sets of modules and given explicitly the result of the extension 
by either set.

\subsection{$\mathfrak{A}[SO(2N), \Psi]$}
In this section we give some alternative descriptions for $\mathfrak{A}[SO(2N), \Psi]$. Recall the basic embedding 
\begin{equation}
SO(2N)_{\Psi +1} \times SO(2N)_{\Psi^{-1} +1}  \subset \mathfrak{A}[SO(2N), \Psi]
\end{equation}
Recall that the critical level for $SO(2N)$ is $2-2N$, hence the above critically shifted levels correspond to 
non-shifted levels of $\Psi - 2N +3$ and $\Psi^{-1} - 2N +3$.

The extension involves a sum over products of Weyl modules in the same representation 
for the two groups. There are some potential subtleties here concerning the global form of the gauge group
and the precise choice of representations which should be allowed in the sum. We will mostly neglect them. 

Consider the sub-algebra 
\begin{equation}
SO(2N-1)_{\Psi} \times \frac{SO(2N)_{\Psi +1}}{SO(2N-1)_{\Psi}} \times SO(2N)_{\Psi^{-1} +1} \subset \mathfrak{A}[SO(2N), \Psi]
\end{equation}
We can use the identification 
\begin{equation}
\frac{SO(2N)_{\Psi +1}}{SO(2N-1)_{\Psi}} \simeq \frac{DS_{Sp(2N-2)} OSp(2N|2N-2)_{\Psi^{-1}}}{SO(2N)_{\Psi^{-1} +1} }
\end{equation}
which can be derived from brane constructions, to find a different conformal embedding: 
\begin{equation}
SO(2N-1)_{\Psi} \times DS_{Sp(2N-2)} OSp(2N|2N-2)_{\Psi^{-1}} \subset \mathfrak{A}[SO(2N), \Psi]
\end{equation}

Notice that this works as stated if in the initial sum over Weyl modules we include the representations which enter in the 
coset $\frac{DS_{Sp(2N-2)} OSp(2N|2N-2)_{\Psi^{-1}}}{SO(2N)_{\Psi^{-1} - 2N +3} }$, which are all $SO(2N)$ 
representations rather than $Spin(2N)$ representations. We leave a precise interpretation of this fact to future work. 

\section{$\mathfrak{A}[SO(2N+1), \Psi] \sim \mathfrak{A}[Sp(2N), \Psi^{-1}]$} \label{sec:six}
In this section we discuss how to extend our construction to classical non-simply laced groups. 
We will use duality statements which follow from brane constructions 
for orthogonal and symplectic groups. 

As a preparation, we should discuss the action of S-dualities on 
$SO(2N+1)$ and $Sp(2N)$ gauge theories.

It is useful to normalize the topological coupling of an $SO(2N+1)$ gauge theory as 
$\Psi$ and the coupling of an $Sp(2N)$ gauge theory as 
$\Psi/2$. In that normalization, the S operation maps $SO(2N+1)$ to $Sp(2N)$ in 
the usual $\Psi \to \Psi^{-1}$ manner. 

The unusual normalization of the  $Sp(2N)$ gauge coupling means that a T operation
does not leave the $Sp(2N)$ gauge theory invariant. It is useful to define 
an $Sp(2N)'$ gauge theory of coupling $\Psi/2$ as coinciding with 
an $Sp(2N)$ gauge theory of coupling $(\Psi+1)/2$. 
Then $Sp(2N)$ and $Sp(2N)'$ are mapped into each other by T and $Sp(2N)'$ is mapped to itself by S. 

Neumann boundary conditions $B_{1,0}[Sp(2N)]$ for $Sp(2N)$ gauge theory map under S 
to a regular Nahm pole boundary condition $B_{0,1}[SO(2N+1)]$ for $SO(2N+1)$ gauge theory. 

Similarly, Neumann boundary conditions $B_{1,0}[SO(2N+1)]$ for $SO(2N+1)$ gauge theory map under S to a regular Nahm pole boundary condition $B_{0,1}[Sp(2N)]$ for 
$Sp(2N)$ gauge theory. On the other hand, a regular Nahm pole boundary condition $B_{0,1}[Sp(2N)']$ for $Sp(2N)'$ gauge theory
map to a modified Neumann boundary condition $B_{1,0}^{OSp(1|2N)}[Sp(2N)']$.

Notice that the space of boundary 't Hooft lines at a regular Nahm pole boundary condition for $Sp(2N)$ and $Sp(2N')$ boundary conditions 
should coincide. They must match respectively the space of boundary Wilson lines for $B_{1,0}[SO(2N+1)]$ and for $B_{1,0}^{OSp(1|2N)}[Sp(2N)']$.
That means there should be a bijection between representations of $SO(2N+1)$ and of $OSp(1|2N)$.

\subsection{A coset description of $W_{SO(2N+1)} \simeq W_{Sp(2N)}$}
At the junction of $B_{1,0}[SO(2N+1)]$ and $B_{0,1}[SO(2N+1)]$ we should find 
the non-simply laced W-algebra 
\begin{equation}
W_{SO(2N+1)}[\Psi] \equiv DS_{\mathrm{reg}} SO(2N+1)_\Psi
\end{equation}
On the other hand, at the junction of $B_{1,0}[Sp(2N)]$ and $B_{0,1}[Sp(2N)]$ we should find 
the non-simply laced W-algebra
\begin{equation}
W_{Sp(2N)}[\Psi/2] \equiv DS_{\mathrm{reg}} Sp(2N)_{\Psi/2}
\end{equation}
The two junctions are expected to be related by S duality and the two W-algebras 
accordingly coincide up to $\Psi \to \Psi^{-1}$. 

This equivalence shows the existence of two families of modules, associated to finite-dimensional 
representations of $SO(2N+1)$ and of $Sp(2N)$ respectively. 

It is useful to act with $T$ on the latter setup, to get a junction between $B_{1,-1}[Sp(2N)']$ and $B_{0,1}[Sp(2N)']$.
This gives a third description of the same W-algebra: 
\begin{equation}
W_{Sp(2N)}[\Psi/2+1/2] \equiv \frac{OSp(1|2N)_{-\Psi^{-1}}}{Sp(2N)_{\Psi^{-1}/2+1/2}}
\end{equation}
In this description, modules associated to Weyl modules of $Sp(2N)_{\Psi^{-1}/2+1/2}$ 
give the family of modules associated to $Sp(2N)$ representations, while 
 modules associated to Weyl modules of $OSp(1|2N)_{-\Psi^{-1}}$ 
give the family of modules associated to $SO(2N+1)$ representations.

\subsection{Another family of W-algebras}
It is also interesting to consider the coset 
\begin{equation}
\tilde W_{SO(2N+1)}[\Psi+1] \equiv \frac{SO(2N+1)_{\Psi^{-1}}\times L_1[SO(2N+1)]}{SO(2N+1)_{\Psi^{-1}+1}}
\end{equation}
The usual manipulations of boundary conditions lead to 
\begin{equation}
\tilde W_{SO(2N+1)}[\Psi] \equiv DS_{Sp(2N)} OSp(1|2N)_{-\Psi}
\end{equation}

\subsection{The candidate GL kernels}
We can now proceed as usual to study a junction between $B^D_{1,0}[SO(2N+1)]$ 
(i.e. the S dual of $B^D_{0,1}[Sp(2N)]$) and $B^D_{0,1}[SO(2N+1)]$.

The resulting basic conformal embedding becomes
\begin{equation}
SO(2N+1)_{\Psi +1} \times OSp(1|2N)_{-\Psi^{-1} -1}  \subset \mathfrak{A}[SO(2N+1), \Psi]
\end{equation}

This vertex algebra behaves well under DS reductions by the $Sp(2N)$ Kac-Moody sub-algebra: it reduces to 
$SO(2N+1)_{\Psi}\times L_1[SO(2N+1)]$\footnote{Recall that the VOA of $2N+1$ fermions is a simple current extension of $L_1(so(2N+1))$ and here this is (as usual in CFT) meant by $L_1[SO(2N+1)]$}. This is proven for $N=2$ in Theorem \ref{thm:ospreduction}.

We can resolve further 
\begin{equation}
SO(2N)_{\Psi} \times \frac{SO(2N+1)_{\Psi +1}}{SO(2N)_{\Psi} } \times OSp(1|2N)_{-\Psi^{-1}- 2N}  \subset \mathfrak{A}[SO(2N+1), \Psi]
\end{equation}
and use 
\begin{equation}
\frac{SO(2N+1)_{\Psi +1}}{SO(2N)_{\Psi} } \simeq \frac{DS_{SO(2N-1)}OSp(2N|2N)_{\Psi^{-1}}}{OSp(1|2N)_{-\Psi^{-1}}}
\end{equation}
to get the second conformal embedding 
\begin{equation}
SO(2N)_{\Psi} \times DS_{SO(2N-1)}OSp(2N|2N)_{\Psi^{-1}}  \subset \mathfrak{A}[SO(2N+1), \Psi]
\end{equation}

We can specialize to $N=1$ as a check: we have embeddings 
\begin{equation}
SO(2)_{\Psi} \times \frac{SO(3)_{\Psi+1}}{SO(2)_{\Psi} } \times OSp(1|2)_{-\Psi^{-1}- 1}  \subset \mathfrak{A}[SO(3), \Psi]
\end{equation}
partially extended to 
\begin{equation}
SO(3)_{\Psi+1} \times OSp(1|2)_{-\Psi^{-1}- 1}  \subset \mathfrak{A}[SO(3), \Psi]
\end{equation}
and 
\begin{equation}
SO(2)_{\Psi} \times OSp(2|2)_{\Psi^{-1}} \subset \mathfrak{A}[SO(3), \Psi]
\end{equation}

We still expect $\mathfrak{A}[SO(3), \Psi]$ to coincide with $D(2,1;-\Psi)_1$.
The latter conformal embedding seems a re-formulation of the $U(1) \times SU(2|1)$ embedding. 

The former conformal embedding is novel and proven in Corollary \ref{cor:ospascoset}.

We can also study a junction between $B^D_{1,0}[Sp(2N)']$ 
(i.e. the S dual of $B^D_{0,1}[Sp(2N)']$) and $B^D_{0,1}[Sp(2N)']$.

The resulting basic conformal embedding becomes
\begin{equation}
Sp(2N)_{\frac{\Psi +1}{2}} \times Sp(2N)_{\frac{\Psi^{-1} +1}{2}} \subset \mathfrak{A}[Sp(2N)', \Psi]
\end{equation}

This vertex algebra behaves well under DS reductions either of the two $Sp(2N)$ Kac-Moody sub-algebras: it reduces to 
$OSp(1|2N)_{-\Psi^{\pm 1}}$.

\section{An S-duality action on VOAs} \label{sec:seven}
\subsection{Intermission: good and bad VOAs}
Our definition of $\mathfrak{A}^{(n)}[G,\Psi]$ could in principle 
work for all integral $n$, but is best behaved for positive $n$, 
where the dimensions of the VOA generators are positive and 
the $L_0$ eigenspaces are finite-dimensional. 

For $n=0$ we have infinitely many generators of dimension $0$, while for negative $n$ 
the dimensions are badly unbounded from below. This makes a lot of manipulations of the VOAs 
more complicated or ill-defined. 

In the following, we will restrict ourselves to positive $n$. 

\subsection{Convolution of VOAs over $G$}

If we are given two VOAs $V_\kappa$ and $V'_{-\kappa}$, with $G$ Kac-Moody sub-algebras of
opposite critically shifted levels, we can combine $V_\kappa \times V'_{-\kappa}$ with a set of $bc$ ghosts valued in $\g$ and 
take the cohomology with respect to a standard BRST charge which makes the total $G$ currents BRST exact \cite{Karabali:1989dk,Hwang:1993nc}.
We denote the resulting new VOA as 
\begin{equation}
V_\kappa \boxtimes_\g V'_{-\kappa}
\end{equation}
Essentially, this is the result of gauging the chiral $G$ symmetry acting on the product VOA. 

It is known that this type of BRST quotient can be used as an alternative definition 
of a coset: 
\begin{equation}
V_\kappa \boxtimes_\g G_{-\kappa} \simeq \frac{V_\kappa}{G_\kappa}
\end{equation}
The coset modules associated to Weyl modules of $G_\kappa$ can be obtained 
from the BRST reduction of the corresponding modules of $G_{-\kappa}$ combined with the vacuum module
of $V_\kappa$, etc. 

If we write $V'_{-\kappa}$ as a conformal extension of $\frac{V'_{-\kappa}}{G_{-\kappa}} \times G_{-\kappa}$,
we can derive a reasonable conformal extension 
\begin{equation}
\frac{V_\kappa}{G_\kappa} \times \frac{V'_{-\kappa}}{G_{-\kappa}} \subset V_\kappa \boxtimes_\g V'_{-\kappa}
\end{equation}

It should be almost obvious that 
\begin{equation}
\mathfrak{A}^{(n)}[G,\Psi] \boxtimes_\g \mathfrak{A}^{(m)}[G,\Psi'] \simeq \mathfrak{A}^{(n+m)}[G,\Psi] 
\end{equation}
with $\Psi^{-1} = n + (\Psi')^{-1}$. 

If we intersperse some $L_1[G]$ we can get the more complicated extensions involving W-algebras at intermediate steps. 
For example, 
\begin{equation}
\mathfrak{A}^{(n)}[G,\Psi] \otimes_\g \left(L_1[G] \times \mathfrak{A}^{(m)}[G,\Psi']\right) \simeq \mathfrak{A}^{(n+1,m+1)}[G,\Psi] 
\end{equation}

\subsection{The S-duality action as convolution}
It is useful to introduce some new notation. As before, denote as $S$ and $T$ the $PSL(2,\mathbb{Z})$
generators $\Psi \to \Psi^{-1}$ and $\Psi \to \Psi+1$. We define the following operations on a VOA with a Kac-Moody $G$ 
sub-algebra at level $\kappa$: 
\begin{equation}
S T^n S: V_\kappa \to (S T^n S\circ V)_{\frac{\kappa}{1+n \kappa}} \equiv \mathfrak{A}^{(n)}[G,\frac{\kappa}{1+n \kappa}] \otimes_\g V_\kappa
\end{equation}
We can write $(S T^n S\circ V)_{\frac{\kappa}{1+n \kappa}}$ as a conformal extension of the form 
\begin{equation}
G_{\frac{\kappa}{1+n \kappa}} \times \frac{V_\kappa}{G_\kappa} \subset (S T^n S\circ V)_{\frac{\kappa}{1+n \kappa}} 
\end{equation}
These operations compose in an appropriate manner. 

We are also induced to define 
\begin{equation}
T: V_\kappa \to \left(T \circ V \right)_{\kappa + 1} \simeq L_1[G] \times V_\kappa
\end{equation}

We can try to extend our definition to negative $n$. This is tricky to do so using the auxiliary VOAs. We can attempt to use 
the conformal extension above as a definition. This will work particularly well if the 
coset modules for $\frac{V_\kappa}{G_\kappa}$ have anomalous dimensions which grow fast enough 
with the weight of the representation. A trivial example is $V_\kappa = G_\kappa$. Then the coset is trivial and the image of $ST^nS$ is $G_{\frac{\kappa}{1+n \kappa}}$ for all 
$n$. 

If we restrict ourselves to positive $n$, the $STS$ and $T$ operations do not satisfy extra relations in 
the $SL(2,\mathbb{Z})^+$ group of positive $SL(2,\mathbb{Z})$ matrices. If we allow negative $n$, 
we may ask if these operations really represent faithfully some duality operations. 
We do not expect this to be the case at the level of vertex algebras, though hopefully it will be true 
at the level of conformal blocks for the VOAs. 

We can gain some insight by looking at $U(1)$ examples. If $V_\kappa$ is a lattice VOA, then $S T^n S$  will map it to a lattice VOA 
of the same rank, but $T$ will increase the rank by $1$, as it tensors by a free complex fermion VOA. 
No operation will lower the rank. This makes clear that no relations of the form $T^{n_1} S T^{n_2} S \cdots T^{n_N}=1$ 
can hold at the level of VOA. 

There is another natural operation we may want to consider: sent $V_\kappa$ to the extension of 
\begin{equation}
DS_{\mathrm{reg}} V_\kappa \times G_\kappa'
\end{equation}
by spectral flow modules for $DS_{\mathrm{reg}} V_\kappa$ combined with Weyl modules of $G_{\kappa'}$. 
This will be possible if 
\begin{equation}
\kappa + \kappa'^{-1} = n
\end{equation}
If we define the generator $\tilde S: \kappa \to - \kappa^{-1}$, 
this is a good candidate for $\left( \tilde S T^{-n} \circ V\right)_{\kappa'}$, especially well defined for 
positive $n$. 

This new generators of $SL(2,\mathbb{Z})$ satisfy reasonable relations with the previously defined ones. 
For example, $(\tilde S T^{-1})\circ(STS)$ coincides with $T$, thanks to the coset definition of $W(\g)$ algebras. 
More generally, $(\tilde S T^{-n})\circ(ST^mS) = (ST^{n-1}S) T (ST^{m-1}S)$.

\subsection{Gauge theory interpretation}

We believe these constructions and results can be given a gauge theory interpretation as consequences of a general principle: 
four-dimensional gauge theory can be cut along a hyperplane by imposing Dirichlet boundary conditions on the two sides of the cut. 
The cut can be healed by gauging back the diagonal combination of the $G$ global symmetries at the two Dirichlet boundaries. 
We expect that the VOA formulae describe the healing of cuts which pass through a junction and split it into two junctions, each 
with a Dirichlet boundary condition. 

We expect that the above operations on VOAs should be compatible with the S-duality action on boundary conditions, in the sense that 
if $V_\kappa$ arises at a junction between some boundary $B$ and a $B^D_{0,1}[G]$ boundary, then 
$(g \circ V)_\kappa$ will arise at some junction between a boundary $g \circ B$ and a $B^D_{0,1}[G]$ boundary.

In particular, all these transformations should descend, at the level of conformal blocks, 
to duality actions on D-modules on the space of twisted $G$ bundles. 

\section{Some mathematical comments} \label{sec:eight}

Let us fix some conventions and notations so that we can relate physics and mathematics terminology. Let $\g$ be a reductive Lie algebra.
We denote by $V_k(\g)$ the universal \avoa{} of $\g$ at level $k$ and its simple quotient by $L_k(\g)$. For generic complex choice of $k$ the two coincide. 
In conformal field theory these \voas{} arise as the chiral algebras of the Wess-Zumino-Witten (WZW) theories based on a usually compact real Lie group $G$ with Lie algebra $\g$. 
The WZW theory is then often denoted as $G_k$. In the main text, though, we instead use the convention $G_{k+h^\vee} \equiv V_k(\g)$. 

Given an \avoa{} there are two standard constructions of new \voas, the coset construction and quantum Drinfeld-Sokolov (or quantum Hamiltonian reduction). 
Let $V$ be a \voa{} and $T$ a \vosa, then the set $C$ of fields that has regular OPE with all fields of $T$ is called the coset \voa{} of $T$ in $V$. 
It is denoted as $C=\Com(T, V)$ for commutant in the \voa{} literature while in physics one denotes the conformal field theory based on $C$ by a quotient symbol $C=V/T$. 
It is widely believed that most DS-reductions also allow for a realization as coset algebra. These believes are usually based on coinciding sets of strong generators and on comparison of characters. The coset realization of regular $W$-algebras of simply laced Lie algebras has finally been established \cite{GKO} and it seems likely that the techniques used in that work will allow to prove many more coset realizations of DS-reductions.

\subsection{Affine \voas}

Let
\[
\widehat{\mathfrak{g}} = \mathfrak{g} \otimes \mathbb C[t, t^{-1}] \oplus \mathbb C K \oplus \mathbb C d 
\]
be the affinization of a simple Lie algebra $\mathfrak g$. We set $\mathfrak g_{\pm} =  \mathfrak{g} \otimes t^{\pm 1}\mathbb C[ t^{\pm 1}]$ and $\mathfrak g_0 = \mathfrak g \oplus \mathbb C K \oplus \mathbb C d$. 
Let $\rho: \mathfrak g \rightarrow  \text{End}(V)$ be an irreducible representtion of $\mathfrak g$ which is extended to an representation of $\mathfrak g_0 \oplus \mathfrak g_+$ by letting $K$ act by multiplication with the scalar $k$ and $d$ and $\mathfrak g_+$ act as zero. 
One then defines
\[
V_k(\rho) := \text{Ind}^{\widehat g}_{\mathfrak g_0 \oplus \mathfrak g_+} \rho.
\]
If $\rho$ is an irreducible highest-weight representation of highest-weight $\lambda$ we just write $V_{k}(\lambda)$ and for its simple quotinet $L_k(\lambda)$. Note that for generic choice of $k$ the module $V_{k}(\lambda)$ is allready simple. It's often called the Weyl module of weight $\lambda$ at level $k$. In the instance of $\mathfrak g=\sltwo$ the weight lattice is just $\mathbb Z \omega_1$ with $\omega_1$ the fundamental weight. In this case we just write $L_k(m)$ for $L_k(m\omega_1)$. 
Let $\Psi=k-h^\vee$ with $h^\vee$ the dual Coxeter number. In the physics part we write $M_{\Psi, \lambda}$ for $L_k(\lambda)$. 

\subsection{Regular $W$-algebras}\label{sec:W}

The notation here is taken from \cite{GKO}. Given a simple Lie algebra $\g$ and an embedding $\rho$ of $\sltwo$ in $\g$ one has a functor that associates to $V_k(\g)$ another \voa{} called the $W$-algebra of $\g$ for the embedding $\rho$ at level $k$. The most familiar instance is the regular embedding with corresponding regular $W$-algebra $W^k(\g)$. We denote the functor from $V_k(\g)$ to the $W$-algebra by $H_{DS}$ for Drinfeld-Sokolov. The simple quotient of $W^k(\g)$ is denoted by $W_k(\g)$ and for generic $k$ the two coincide. 

Let $\gamma : Z(\g) \rightarrow \mathbb C$ be a central character then the Verma module of $W^k(\g)$ of weight $\gamma$ is denoted by $M(\gamma)$ and its simple quotient by $L(\gamma)$. The evaluation of $Z(\g)$ on the Verma  module of highest-weight $\lambda$ of $\g$ defines a central character which is denoted by $\gamma_\lambda$. One then has by Theorem 9.1.4 of \cite{Ara07} that $H_{DS}(L_k(\lambda))\cong L(\gamma_{\lambda -\Psi \rho^\vee})$ for generic $k$. 

It had been common physics belief that for simply-laced $\g$ the regular $W$-algebras can be realized as coset \voas. This is finally proven in \cite{GKO} for both  generic $k$ and for the minimal series. In the generic instance the answer is:
\begin{enumerate}
\item 
As \voas
$$\text{Com}(V_{k+1}(\g),V_k(\g)\otimes L_1(\g))\cong W^\ell(\g),$$
where $\ell+h^{\vee}=(k+h^{\vee})/(k+h^{\vee}+1)$.
\item For $\mu\in P_+$ and $\nu\in P^1_+$,
\begin{align*}
V_k(\mu)\otimes L_1(\nu)\cong \bigoplus_{\substack{\lambda\in P_+
\\ \lambda-\mu-\nu\in Q}} V_{k+1}(\lambda)\otimes \mathbf{L}^\ell(\gamma_{\mu-(\ell+h^{\vee})(\lambda+\rho^{\vee})}).
\end{align*}
\end{enumerate}
For us it is convenient to write 
\[
M_{\Psi, \lambda, \mu} := \mathbf{L}^\ell(\gamma_{\mu-(\ell+h^{\vee})(\lambda+\rho^{\vee})}, \qquad \Psi= \ell +h^\vee. 
\]
We also note that for the special case of $\sltwo$ the reduction is the Virasoro algebra and Virasoro modules are usually labelled by a pair of integers $(r, s)$ which relate to the weight labels via $(\lambda, \mu)=((r-1)\omega_1, (s-1)\omega_1)$ so that we will write $M_{\Psi,\lambda, \mu}= M^\Psi(r, s)$. If $\Psi$ is a rational number $u/v$, then we will sometimes replace it by the tuple $(u, v)$. 

\subsection{Extending \voas}\label{sec:examplesVOAextensions}

Recently, Shashank Kanade, Robert McRae and one of us \cite{CKM} have developped a tensor theory for vertex operator superalgebra extension building on \cite{KO, HKL}. The starting point is the notion of a commutative algebra object $A$ in the vertex tensor category $\mathcal C$ of a given \voa, see Definition 2.2 of \cite{CKM}; and section 2.2 for the notion of supercategoires and superalgebras. 
Then there is a one-to-one correspondence between vertex operator (super)algebra extensions of $V$ and commutative (super)algebra objects in $\mathcal C$ \cite{HKL} (and \cite{CKL} for the super case). Moreover the main result of \cite{CKM} is that the category of local (super)algebra modules is braided equivalent to the vertex (super)tensor category of the extended vertex operator (super)algebra. Another result is the following \cite[Theorem 2.67]{CKM}. Let $\mathcal C^0$ the full subcategory of modules of $\mathcal C$ with the property that they cenralize $A$, i.e. $X$ in $\mathcal C^0$ if and only if the momodromy $M_{A, X}=c_{A, X}\circ c_{X,A}$ is the identity in $A\boxtimes_{\mathcal C}X$. Here $c_{A, X} :A\boxtimes X \rightarrow X\boxtimes A$ is the braiding. Then the statement is that the induction functor $\mathcal F$ restricted to $\mathcal C^0$,
\[
\mathcal F: \mathcal C^0 \rightarrow \mathcal C_A^{\text{local}}, \qquad X\mapsto A\boxtimes_{\mathcal C} X
\]
 is a braided tensor functor. We will see in a moment, Proposition \ref{prop:ff}, that in our set-up it will be even a fully faithfull functor.

Let $V$ be a \voa{} and $C$ be a full vertex tensor category of $V$-modules. In particular, $C$ is braided. We assume that every object is completely reducible. Let $\overline{C}$ be the opposite category, i.e. braiding is reversed. Then one has always the regular representation or coend $R(C)$, which is the commutative and associative algebra object in $C \boxtimes \overline{C}$ of the form 
\[
R(C) \cong \bigoplus_{X} X \otimes \overline{X}
\]
where the sum is over all inequivalent simple objects of $C$. Let now $W$ be a second \voa{} with full subcategory of modules $D$ braided equivalent to $\overline{C}$ then
one has the corresponding commutative and associative algebra object in the module category of $V\otimes W$ 
\begin{equation}\label{eq:extVOA}
A \cong \bigoplus_{X} X \otimes \tau(\overline{X})
\end{equation}
with $\tau$ mapping objects of $\overline{C}$ to the equivalent ones in $D$. As mention above, according to \cite{HKL} such algebra objects are equivalent to \voa{} extension, i.e. $A$ is a \voa{} extending $V\otimes W$ and conversely $V\otimes W$ embeds conformally in $A$. 
If $C$ and $D$ are fusion, then it is actually a consequence of \cite{DMNO} (see \cite{Lin, OS}) that having a \voa{} extension of $V\otimes W$ is possible if and only if $D$ is braid reversed equivalent to $C$. 

We will see that we are interested in subcategories $C, \widetilde{C}$ of the same \voa{} that are local to each other. The category term for this is that $\widetilde{C}\subset C'$ and $C\subset \widetilde{C}'$ as well where $C'$ denotes the centralizer of $C$, i.e. all those modules that have trivial monodromy with those of $C$. To make contact with the set-up for junction \voas{}, let $A$ be a \voa{} that is an extension of $V\otimes W$ as above. Let $\widetilde{C}$ respectively $\widetilde{D}$ subcategories of $V$ resp. $W$-modules that centralize $C$ resp. $D$. Then these modules induce to modules of the big \voa{} $A$, i.e. for any pair $\widetilde{X}\otimes \widetilde{Y}$ in $\widetilde{C} \boxtimes\widetilde{D}$ the module
\[
\mathcal F\left(  \widetilde{X}\otimes \widetilde{Y} \right) \cong A \boxtimes_{V\otimes W} \widetilde{X}\otimes \widetilde{Y}
\]
is a $A$-module \cite{HKL}. Moreover the induction functor restricted to $\widetilde{C} \boxtimes\widetilde{D}$ is a braided tensor functor \cite{CKM} and it is a natural expectation that it even furnishes an equivalence of braided subtensor categories. 

We can actually generalize a little bit, i.e. we can assume that $A$ as in equation \eqref{eq:extVOA} is a vertex operator superalgebra. It then corresponds to a superalgebra object in the representation category of $V\otimes W$ \cite{CKL}.
A useful statement seems to be
\begin{prop}\label{prop:ff} 
Consider the set-up of above with $A$ as in equation \eqref{eq:extVOA}. For $Y, Z$ in $C'$ then 
\[
\text{Hom}_A\left( \mathcal F(Y \otimes W), \mathcal F(Z \otimes W)\right) \cong \text{Hom}_{C'}(Y, Z).
\]
\end{prop}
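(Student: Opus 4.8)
The plan is to prove that the functor $Y\mapsto\mathcal{F}(Y\otimes W)$, restricted to $C'$, is fully faithful; throughout, the symbol $W$ in $Y\otimes W$ denotes the vacuum module of the vertex algebra $W$. Concretely I would evaluate the left-hand $\text{Hom}$ space by Frobenius reciprocity for induction, then collapse the resulting sum using the explicit form $A=\bigoplus_{X}X\otimes\tau(\overline X)$ of the algebra object and the fact that each $\tau(\overline X)$ is a simple $W$-module.

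First I would check that for $Y\in C'$ the object $Y\otimes W$ centralizes $A$: in the Deligne product the monodromy of $Y\otimes W$ with a summand $X\otimes\tau(\overline X)$ factors as the monodromy of $Y$ with $X$ times the monodromy of $W$ with $\tau(\overline X)$; the latter is the identity since the vacuum module is transparent, and the former is the identity for every simple $X$ of $C$ precisely because $Y\in C'$. Hence, by \cite[Theorem 2.67]{CKM}, $\mathcal{F}(Y\otimes W)$ and $\mathcal{F}(Z\otimes W)$ lie in the category of local $A$-modules, and the adjunction between induction and restriction (\cite{KO}, and \cite{CKM} in the vertex-algebraic setting) gives
\[
\text{Hom}_A\bigl(\mathcal{F}(Y\otimes W),\mathcal{F}(Z\otimes W)\bigr)\;\cong\;\text{Hom}_{V\otimes W}\bigl(Y\otimes W,\;A\boxtimes_{V\otimes W}(Z\otimes W)\bigr).
\]

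Next I would decompose the inducing module. Using $A=\bigoplus_X X\otimes\tau(\overline X)$, the sum being over the simple objects $X$ of $C$, together with the fact that fusion over $V\otimes W$ is the componentwise Deligne product of the fusions over $V$ and over $W$, and $\tau(\overline X)\boxtimes_W W\cong\tau(\overline X)$ since $W$ is the unit, one obtains $A\boxtimes_{V\otimes W}(Z\otimes W)\cong\bigoplus_X(X\boxtimes_V Z)\otimes\tau(\overline X)$. As $Y\otimes W$ is itself a Deligne product with finite-dimensional conformal-weight spaces, $\text{Hom}(Y\otimes W,-)$ commutes with this direct sum and splits across the two tensor factors, so the right-hand side becomes $\bigoplus_X\text{Hom}_V(Y,X\boxtimes_V Z)\otimes\text{Hom}_W(W,\tau(\overline X))$. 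Since $\tau$ is a braided tensor equivalence, each $\tau(\overline X)$ is a simple $W$-module and $\tau$ preserves the unit; because $W$ is a simple vertex algebra, $\text{Hom}_W(W,\tau(\overline X))$ is one-dimensional when $X$ is the unit object $V$ of $C$ and zero otherwise. Only the $X=V$ summand therefore survives, and it equals $\text{Hom}_V(Y,V\boxtimes_V Z)\cong\text{Hom}_V(Y,Z)=\text{Hom}_{C'}(Y,Z)$, the last identity because $C'$ is full in the category of $V$-modules. A routine check — the composite isomorphism is realized by composing an $A$-module morphism with the projection of $A\boxtimes_{V\otimes W}(Z\otimes W)$ onto its $X=V$ summand, precomposed with the canonical map $Y\otimes W\to A\boxtimes_{V\otimes W}(Y\otimes W)$ — establishes naturality in $Y$ and $Z$, hence full faithfulness. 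The vertex operator superalgebra case is identical, carried out in super vector spaces with the super analogues of the cited results.

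The step I expect to be the main obstacle is the decomposition $A\boxtimes_{V\otimes W}(Z\otimes W)\cong\bigoplus_X(X\boxtimes_V Z)\otimes\tau(\overline X)$ and the interchange of $\text{Hom}(Y\otimes W,-)$ with this a priori infinite direct sum. In the semisimple, finitely graded setting assumed here one justifies both by reducing to finite partial sums using the conformal-weight grading and complete reducibility; but this is exactly where the standing hypotheses on $C$ are indispensable, and the place to be most careful should one later wish to relax semisimplicity.
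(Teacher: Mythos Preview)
Your proof is correct and follows essentially the same route as the paper: Frobenius reciprocity for induction, the decomposition $A\boxtimes_{V\otimes W}(Z\otimes W)\cong\bigoplus_X(X\boxtimes_V Z)\otimes\tau(\overline X)$, and the observation that $\text{Hom}_W(W,\tau(\overline X))$ vanishes unless $X=V$. The only cosmetic difference is in handling the infinite sum: the paper embeds $\text{Hom}(Y\otimes W,\bigoplus_X\cdots)$ into the product $\prod_X\text{Hom}(Y\otimes W,\cdots)$ and kills all but one factor, whereas you argue that $\text{Hom}(Y\otimes W,-)$ commutes with the direct sum via the grading; both are fine under the standing semisimplicity and grading assumptions.
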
\label{prop:ff}
\begin{proof}
The idea for the argument is similar to the proof of Theorem 5. 1 of \cite{OS}. Namely 
\begin{equation}
\begin{split}
\text{Hom}_A\left( \mathcal F(Y \otimes W), \mathcal F(Z \otimes W)\right) &\cong \text{Hom}_C\left( Y \otimes W, \mathcal F(Z \otimes W)\right) \\
&\cong   \text{Hom}_C\left( Y \otimes W, A \boxtimes_{V\otimes W}(Z \otimes W)\right)\\
&\cong   \text{Hom}_C\left( Y \otimes W, \bigoplus_X (X \boxtimes_{V}Z) \otimes \tau(\overline{X})\right)\\
&\cong \text{Hom}_C\left( Y \otimes W, Z \otimes W\right)\\
&\cong \text{Hom}_C\left( Y, Z\right).
\end{split}
\end{equation}
Here, the first isomorphism is a property of the induction and restriction functor, see Lemma 2.61 of \cite{CKM} (or \cite{KO, EGNO} for the statement in the non-super setting) and the fourth one uses that $W$ appears only once in $A$. More precisely one can embed 
\[
\text{Hom}_C\left( Y \otimes W, \bigoplus_X (X \boxtimes_{V}Z) \otimes \tau(\overline{X})\right)  \quad \text{in} \quad
\prod_X\text{Hom}_C\left( Y \otimes W, (X \boxtimes_{V}Z) \otimes \tau(\overline{X})\right)
\]
and $\text{Hom}_C\left( Y \otimes W, (X \boxtimes_{V}Z) \otimes \tau(\overline{X})\right) \cong \text{Hom}_{C_V}\left( Y, (X \boxtimes_{V}Z)\right) \otimes \text{Hom}_{C_W}\left( W,  \tau(\overline{X})\right)$ with $C_V$ and $C_W$ the vertex tensor categories of the \voas{} $V$ and $W$. The seond factor clearly vanishes unless $\tau(\overline{X})\cong W$ which happens by assumption if and only if $X=V$. 
\end{proof}
\subsubsection{The basic Abelian example}\label{sec:catabel}

Let us discuss an easy example. Consider a pair of two free bosons, i.e. the tensor product $\mathcal H \otimes \mathcal H$ of two Heisenberg \voas.
Let $C_\Psi$ be the full category of Fock modules of the first Heisenberg \voa{} whose weight is an integer multiple of $1/\sqrt{\Psi}$ and let $D_\Psi$ be the corresponding one for the second factor. Then 
\[
C_\Psi' =C_{\Psi^{-1}}.
\]
We now consider the four categories $C_{\Psi+1}, C_{\Psi+1}', D_{\Psi^{-1}+1}, D_{\Psi^{-1}+1}'$ and the following object in $C_{\Psi+1} \boxtimes D_{\Psi^{-1}+1}$.
\[
A = \bigoplus_{n\in\mathbb Z} F_{\frac{n}{\sqrt{\Psi+1}}} \otimes F_{\frac{n}{\sqrt{\Psi^{-1}+1}}}.
\]
It is half-integer graded and clearly isomorphic to the Heisenberg \voa{} times a $bc$-\voa, i.e. the vertex operator superalgebra of the lattice $\mathbb Z$.  As discussed above it defines a commutative and associative superalgebra object.
We have that every module that is local with  $C_{\Psi+1} \boxtimes D_{\Psi^{-1}+1}$ induces to a local module of $A$. Consider for example $F_{m\sqrt{\Psi+1}} \otimes F_0$ which induces to
\[
A \boxtimes_{\mathcal H\otimes \mathcal H} \left( F_{m\sqrt{\Psi+1}} \otimes F_0\right) \cong  \bigoplus_{n\in\mathbb Z} F_{m\sqrt{\Psi+1} +\frac{n}{\sqrt{\Psi+1}}} \otimes F_{\frac{n}{\sqrt{\Psi^{-1}+1}}}
\cong \widetilde{F}_{m\sqrt{\Psi}} \otimes bc
\] 
where the second is as $\mathcal H \otimes bc$-module and to avoid confusion we dentoe the Fock-module of the latter Heisenberg \voa{} by $\widetilde{F}$ and corresponding subcategories of modules by $E_\Psi$. The statement is seen from writing
\[
\frac{1}{\sqrt{\Psi+1}}\left(m(\Psi+1)+n, n\sqrt{\Psi}\right) = \frac{m+n}{\sqrt{\Psi+1}}\left(1, \sqrt{\Psi}\right) + \frac{m}{\sqrt{\Psi^{-1}+1}}\left(\sqrt{\Psi}, -1\right).
\]
Similar we have that
$F_0 \otimes F_{m\sqrt{\Psi^{-1}+1}}$ induces to
\[
A \boxtimes_{\mathcal H\otimes \mathcal H} \left( F_0 \otimes F_{m\sqrt{\Psi^{-1}+1}}\right) \cong  \bigoplus_{n\in\mathbb Z} F_{\frac{n}{\sqrt{\Psi+1}}} \otimes F_{m\sqrt{\Psi^{-1}+1}+\frac{n}{\sqrt{\Psi^{-1}+1}}}
\cong \widetilde{F}_{\frac{m}{\sqrt{\Psi}}} \otimes bc.
\] 
First of all, we can identify $C_{\Psi+1}'$  with $C_{\Psi+1}'\boxtimes \mathcal H$ ($\mathcal H$ viewed as the tensor identity). 
The induction functor is a braided tensor functor from $C_{\Psi+1}'\boxtimes \mathcal H$ to the subcategory $E_{\Psi}'\boxtimes bc$ of local $A$-modules. In this case the functor is even fully faithful by Proposition \ref{prop:ff} and thus furnishes an equivalence of braided tensor categories  $C_{\Psi+1}'\boxtimes \mathcal H$ and $E_{\Psi}'\boxtimes bc$.
Similarly we also have the braided equivalence of $\mathcal H \boxtimes D_{\Psi^{-1}+1}'$ and $E_{\Psi}\boxtimes bc$.

\subsubsection{Rational examples associated to Virasoro, $\mathfrak{osp}(1|2)$ and $\sltwo$}

Let $\text{Vir}(u, v)$ be the simple and rational Virasoro \voa{} at central charge $1-6\frac{(u-v)^2}{uv}$. The complete list of inequivalent simple modules of $\text{Vir}(u, v)$ are $M^{u, v}(n, m)$ with $1\leq n \leq u-1$, $1\leq m \leq v-1$ and the identification $M^{u, v}(n, m)\cong M^{u, v}(u-n, v-m)$. Fusion rules are well-known and we only need that $M^{u, v}(n, 1) \boxtimes_{\text{Vir}(u, v)}  M^{u, v}(1, m) \cong M^{u, v}(n, m)$ and that the $M^{u, v}(n, 1)$ form a closed subcategory which we call $C^{u, v}(\bullet, 1)$. Its centralizer is the full subcategory with simple objects the $M^{u, v}(1, m)$ with $m$ odd, which we denote by $C^{u, v}_{\text{odd}}(1, \bullet)$. 
Similarly, we denote by $C^{u, v}(1, \bullet)$ the full subcategory whose simple objects are the  $M^{u, v}(1, m)$ and whose centralizer $C^{u, v}_{\text{odd}}(\bullet, 1)$ has as simples the 
the $M^{u, v}(n, 1)$ with $n$ odd. $C^{u, v}_{\text{even}}(1, \bullet)$ and $C^{u, v}_{\text{even}}( \bullet, 1)$ are defined in the obvious way.

\begin{exam}{\bf A rational $\mathfrak{osp}(1|2)$ example }

There is also a rational $\mathfrak{osp}(1|2)$ example following Remark 3.8 of \cite{CFK}. Namely one has for positive integer $k$
\[
L_k\left( \mathfrak{osp}(1|2) \right) \cong  \bigoplus_{n=0}^k L_k(n) \otimes M^{k+2, 2k+3}(n+1, 1)
\]
as $L_k(\sltwo) \otimes \text{Vir}(k+2, 2k+3)$-module.
Then a main finding of \cite{CFK} is that local $L_k\left( \mathfrak{osp}(1|2) \right)$-modules are braided equivalent to the subcategory of $C^{u+2, 2u+3}_{\text{odd}}(1, \bullet)$ whose simple objects  are of type $M^{u+2, 2u+3}_{1, r}$ for odd $r$.

\end{exam}

The set-up for the next two examples is as follows.
Consider now $L_k(\sltwo)$ for $k+2=u \in \mathbb Z_{>2}$, i.e. the rational series of affine \voas{} of $\sltwo$. 
It is then well-known \cite{Iohara} that 
\begin{equation}\label{eq:GKOsl2}
\begin{split}
L_k(n-1 ) \otimes L_1(\sltwo) &\cong \bigoplus_{\substack{m=1\\ m +n\ \text{even} }}^{u} L_{k+1}(m-1) \otimes M^{u, u+1}(n, m)\\
L_k(n-1 ) \otimes L_1(1) &\cong \bigoplus_{\substack{m=1\\ m +n\ \text{odd} }}^{u} L_{k+1}(m-1) \otimes M^{u, u+1}(n, m)
\end{split}
\end{equation}
as $L_{k+1}(\sltwo) \otimes \text{Vir}(u, u+1)$-module.  
We now iterate \eqref{eq:GKOsl2} and using that the vertex operator superalgebra of four free fermions is just $F(4) \cong L_1(\sltwo) \otimes L_1(\sltwo) \oplus L_1(\omega_1) \otimes L_1(\omega_1)$ to get
\begin{equation}\nonumber
\begin{split}
L_k(n-1 ) \otimes F(4) &\cong \bigoplus_{m=1}^{u}\bigoplus_{m'=1}^{u+1} L_{k+2}((m'-1)) \otimes M^{u+1, u+2}(m, m')\otimes M^{u, u+1}(n, m)
\end{split}
\end{equation}
Especially, we get the \voa{} extension 
\[
A := \text{Com}\left(L_{k+2}(\sltwo), L_k(\sltwo) \otimes F(4)\right) \cong \bigoplus_{m=1}^{u} M^{u+1, u+2}(m, 1)\otimes M^{u, u+1}(1, m)
\]
of $\text{Vir}(u, u+1) \otimes \text{Vir}(u+1, u+2)$ and
\[
B:= L_k(\sltwo) \otimes F(4) \cong \bigoplus_{m=1}^{u}\bigoplus_{m'=1}^{u+1} L_{k+2}(m'-1) \otimes M^{u+1, u+2}(m, m')\otimes M^{u, u+1}(1, m)
\]
of $L_{k+2}(\sltwo) \otimes A$.
We will know discuss these two cases.

\begin{exam}{\bf A rational Virasoro example }

Let us denote by $\mathcal F_A$ the induction functor from $\text{Vir}(u, u+1) \otimes \text{Vir}(u+1, u+2)$-mod to $A$-mod. 
Then the vertex operator superalgebra $A$ is constracted out of $C^{u+1, u+2}(\bullet, 1) \otimes C^{u, u+1}(1, \bullet)$ which is centralized by  $C^{u+1, u+2}(1, \bullet)_{\text{odd}} \otimes C^{u, u+1}(\bullet, 1)_{\text{odd}}$ as well as $C^{u+1, u+2}(1, \bullet)_{\text{even}} \otimes C^{u, u+1}(\bullet, 1)_{\text{even}}$ so that these modules lift to local $A$-modules. 
One can analyze the lifting problem in exactly the same manner as in \cite{CFK}. We refrain from doing so and only list the results. 
Essentially by Proposition 4.4 of \cite{CKM} every simple module of $C^{u+1, u+2}(1, \bullet) \otimes C^{u, u+1}(\bullet, 1)$ lifts to a simple local or twisted $A$-module.
Namely,
\[
\mathcal F\left(M^{u+1, u+2}(1, m')\otimes M^{u, u+1}(m'', 1)\right) \cong \bigoplus_{m=1}^{u} M^{u+1, u+2}(m, m')\otimes M^{u, u+1}(m'', m)
\]
and the formula for conformal weights
\[
h_{m, m'}^{u+1, u+2} = \frac{\left( (u+1)m'-(u+2)m\right)^2-1}{4(u+1)(u+2)}, \qquad
h_{m'', m}^{u, u+1} = \frac{\left( (u+1)m''-um\right)^2-1}{4u(u+1)}
\]
tells us that these induced modules are integer graded if and only if $m'+m''$ is odd and otherwise they are half-integer graded. In other words, the simple local modules are exactly those for which $m'+m''$ is even. 
 Using Corollary 3.32 of \cite{DMNO} one can indeed compute that these are all local and Ramond twisted $A$-modules. In summary the category of local $A$ modules is braided equivalent (as braided supercategories) to the full subcategory of $C^{u+1, u+2}(1, \bullet) \otimes C^{u, u+1}(\bullet, 1)$ whose simples are the $M^{u+1, u+2}(1, m')\otimes M^{u, u+1}(m'', 1)$ with $m'+m''$ even. This is exactly the picture advocated in Figure \ref{fig:two}.

\end{exam}

\begin{exam}{\bf A rational $\sltwo$ example }

We consider now the extension from $L_{k+2}(\sltwo) \otimes A$ to $B$. We see that the local $A$-modules that are lifts of $C^{u+1, u+2}(1, \bullet)$ are used in the construction of $B$ and so the centralizer is (the lift of) $C^{u, u+1}_{\text{odd}}(\bullet, 1)$. Using Proposition 4.4 of \cite{CKM} one again sees with similar computations as in \cite{CFK} that inducing simple modules gives simple modules and using Corollary 3.32 of \cite{DMNO} one again gets that these exhaust all simple local and Ramond twisted $B$-modules. So here we see that the supercategory of local $B$-modules is braided equivalent to the supercategory $C^{u, u+1}(\bullet, 1)$.
These two examples can be viewed as a fairly simple instance of illustrating the expected behaviour of module categories for two consecutive junctions of \voas.

\end{exam}

\subsection{The $\Psi\to \infty$ limit of $\mathfrak{A}^{(n)}[SU(2),\Psi]$}

Let $n$ be a positive integer. Let $k_1, k_2$ be complex numbers satisfying
\[
\frac{1}{k_1+2} + \frac{1}{k_2+2} = n, 
\]
let $\Psi=k_1+2$,
then 
\[
\mathfrak{A}^{(n)}[SU(2), \Psi] :=  \bigoplus_{\substack{ m=0\\  m \ \text{even}}}^\infty  L_{k_1}(m) \otimes  L_{k_2}(m).
 \]
In the limit $\Psi$ to $\infty$ we have that $k_2+2=\frac{1}{n}$ and 
\[
\lim_{\Psi\rightarrow \infty} \mathfrak{A}^{(n)}[SU(2). \Psi] = Z \otimes \bigoplus_{\substack{ m=0\\  m \ \text{even}}}^\infty  \rho(m) \otimes  L_{k_2}(m)
\]
with $Z$ the abelian \voa{} of three free bosons at level zero. This limit \voa{} of $\mathfrak{A}^{(n)}[SU(2). \Psi]$ has been constructed in a very different way in \cite{C} and has been denotet $\mathcal Y_n$ there. The construction is via a series of standard \voa{} operations. The starting point is the affine \voa{}  $L_k(\mathfrak{sl}(n+1))$ at level $k=-(p^2-1)/p$. One then considers the embedding of $\sltwo$ in $\mathfrak{sl}(n+1)$, such that 
\[
\mathfrak{sl}(n+1) \cong4\rho(0) \oplus 4\rho(n\omega_1) \oplus \rho(2\omega_1) \oplus \rho(3\omega_1) \dots \oplus \rho((n-1)\omega_1).
\]
The DS-reduction for this embedding of $L_k(\mathfrak{sl}(n+1))$ was then denoted by $\mathcal W_n$ in \cite{C}. 
It turns out that $L_{k'}(\mathfrak{gl}(2))$ at level $k'=-2+\frac{1}{n}$ embeds conformally in $\mathcal W_n$ and the Heisenberg coset is
\[
\text{Com}\left(\mathcal H, \mathcal W_n\right) \cong \bigoplus_{\substack{m=0\\ m \ \text{even}}} L_{k'}(m)
\]
as $L_{k'}(\mathfrak{sl}(2))$ module. This coset then allows for an infinite order simple current extension $\mathcal Y_n$ due to Theorem 4.1 of \cite{CKLR} that is of the desired form
\[
\mathcal Y_n \cong  \bigoplus_{\substack{m=0\\ m \ \text{even}}} (2m+1) L_{k'}(m).
\]
We mention that the DS-reduction of $\mathcal Y_n$ coincides with the best-known family of logarithmic \voas, the triplet algebras $W(n)$ (see e.g. \cite{AM}), at least at the level of Virasoro modules \cite{C}.

It would be nice to construct $\mathfrak{A}^{(n)}[SU(2), \Psi]$. So let us consider $D(2, 1;-\lambda)$ at level $1/2$. If we perform the DS-reduction for the principal embedding into the level $1/2$ affine $\sltwo$ then the reduced $W$-algebra is the large $N=4$ superconformal algebra at central charge $c=-6$ with affine vertex operator subalgebra $L_{k_1}(\sltwo) \otimes L_{k_2}(\sltwo)$ at levels $k_1=\frac{1}{2}\left(\lambda-\frac{3}{2}\right)$ and $k_2=\frac{1}{2}\left(\lambda^{-1}-\frac{3}{2}\right)$. We see that
\[
\frac{1}{k_1+2} + \frac{1}{k_2+2} = 2
\]
and that the central charges of $L_{k_1}(\sltwo) \otimes L_{k_2}(\sltwo)$ is also $c=-6$. We thus have a conformal embedding \cite{AKMPP}. We conjecture that this \voa{} decomposes as
\[
H_{DS}\left( D(2, 1;-\lambda)_{1/2}\right) \cong \bigoplus_{ m=0}^\infty  L_{k_1}(m) \otimes  L_{k_2}(m)
\]
so that its even vertex operator subalgebra is $\mathfrak{A}^{(2)}[SU(2), \lambda]$.
We note, that the large $\Psi$-limit seems to be a large center times Adamovi\'c's small $N=4$ super conformal algebra at central charge $c=-9$ \cite{AdaN4}.

\subsection{The general $\Psi \to \infty$ limit}

Let now $G$ be a simple compact Lie group with Lie algebra $\g$. 
Let $P_+$ be the set of dominant weights and $Q$ the root lattice of $\g$. Let $P_+^\vee$ the set of dominant coweights and $Q^\vee$ the coweight lattice and $\lambda^\vee$ the coweight corresponding to the weight $\lambda$. Let $m$ be the lacity of $\g$ and let $\Psi=k_1+h^\vee$ such that
\[
\frac{1}{k_1+h^\vee} +\frac{1}{k_2+h^\vee} =mn
\]
We then claim that 
\[
\mathfrak{A}^{(n)}[G,\Psi] = \bigoplus_{\lambda \in P^+\cap\ Q} L_{k_1}(\lambda, \g)\otimes   L_{k_2}(\lambda^\vee, {}^L\g)
\]
is a simple \voa.
So $\mathfrak{A}[G, \infty]$ is the conformal extension of $L_{k_2}({}^L\g)$ for $k_2+h^\vee=\frac{1}{nm}$ by the product of Weyl modules and finite dimensional representations of the dual group (times a large center)
\begin{equation}
\lim_{\Psi\rightarrow \infty} \mathfrak{A}^{(n)}[G,\Psi] =Z \otimes  \bigoplus_{\lambda \in P^+\cap Q} \rho(\lambda)\otimes   L_{k_2}(\lambda^\vee, {}^L\g).
\end{equation}
We wonder if the construction of the limit for $G=SU(2)$ generalizes somehow. Let us restrict to $\ASUNn$. 
Let $H_{DS,m}$ be the Drinfield-Sokolov functor corresponding to the $\mathfrak{sl}_2$ embedding in $\mathfrak{sl}_{N+m}$ of type $m, 1, 1, \dots, 1$. 
Then $H_{DS,m}\left(V_k(\mathfrak{sl}_{N+m})\right)$ contains $V_{k+m-1}(\mathfrak{gl}_N)$ as a vertex operator subalgebra. 
Set 
\[
m:=(n+1)(N-1)-N \quad\text{and}\quad  k := \frac{n+1}{n}-(n+1)(N-1)
\]
Central charges indicate that this is a conformal embedding. Let $\mathcal H$ be the Heisenberg \voa{} commuting with the $V_{k+m-1}(\mathfrak{sl}_N)$. Then the coset \voa{} $\Com\left(\mathcal H, H_{DS,m}\left(V_{k}(\mathfrak{sl}_{m+N}) \right)\right)$ is an extension of 
$V_{-N+\frac{1}{n}}(\mathfrak{sl}_{N})$.
It is tempting to believe that this \voa{} can be extended to $\lim_{\Psi\rightarrow \infty}\ASUNn$ as in the $SU(2)$ case. 

The triplet \voas{} have a natural higher rank generalization to all simply laced Lie algebras, see \cite{FT, CM}. The results there indicate that these higher rank generalizations are the regular DS-reductions of our conjectural $\lim_{\Psi\rightarrow \infty} \mathfrak{A}^{(n)}[G,\Psi]$ \voas.

\section{$D(2, 1; -\lambda)_1$ and $\mathfrak{A}[SU(2),\Psi]$}  \label{sec:nine}

In this section we look in further detail to the example of $SU(2)$, involving the vertex operator superalgebra $D(2, 1; -\lambda)_1$.
For this we have to introduce the Lie superalgebra $\mathfrak d(2, 1; \alpha)$. We use Appendix B of \cite{FS}.
The even subalgebra is the direct sum of three copies of $\sltwo$. Fix a basis $e_i, f_i, h_i$ for $i=1,2,3$ with commutation relations
\[
[e_i, f_j] = 2\delta_{i, j}h_i, \qquad [h_i, e_j] = \delta_{i, j}e_i, \qquad [h_i, f_j]=-\delta_{i, j}f_i.
\]
The odd part is the tensor product of the standard representations of the three $\sltwo$'s. A basis is given by $\psi(\beta, \gamma, \delta)$ with $\beta, \gamma, \delta = \pm$. 
The non-vanishing commutation relations with the even subalgebra are
\begin{equation}\nonumber
\begin{split}
[e_1, \psi(-, \beta, \gamma)] &= -\psi(+, \beta, \gamma), \qquad [h_1, \psi(\pm, \beta, \gamma)] = \pm \frac{1}{2}\psi(\pm, \beta, \gamma), \\
[f_1, \psi(+, \beta, \gamma)] &= -\psi(-, \beta, \gamma),\qquad
[e_2, \psi( \beta, -, \gamma)] = -\psi( \beta, +,  \gamma), \\
[h_2, \psi( \beta, \pm, \gamma)] &= \pm \frac{1}{2}\psi(\beta, \pm, \gamma), \qquad
[f_2, \psi(\beta, + \gamma)] = -\psi( \beta, -, \gamma)\\
[e_3, \psi( \beta, \gamma, -)] &= -\psi( \beta, \gamma, +), \qquad [h_3, \psi(\beta, \gamma, \pm)] = \pm \frac{1}{2}\psi( \beta, \gamma, \pm), \\
[f_3, \psi( \beta, \gamma, +)] &= -\psi( \beta, \gamma, -).
\end{split}
\end{equation}
In order to write the commutation relations of the odd fields in a compact form \cite{FS} introduce the notation
\[
O_{+ +}^i = -2e_i, \qquad O_{--}^i= 2f_i, \qquad O_{+-}^i = O_{-+}^i=-2h_i, \qquad \epsilon_{+-}=-\epsilon_{-+}=1.
\]
Then 
\begin{equation}\nonumber
\begin{split}
[\psi(\beta_1, \gamma_1, \delta_1), \psi(\beta_2, \gamma_2, \delta_2)] &= \alpha_1O^1_{\beta_1 \beta_2}\epsilon_{\gamma_1\gamma_2}\epsilon_{\delta_1\delta_2} + \alpha_2O^2_{\gamma_1 \gamma_2}\epsilon_{\beta_1\beta_2}\epsilon_{\delta_1\delta_2} + \alpha_3O^3_{\delta_1 \delta_2}\epsilon_{\beta_1\beta_2}\epsilon_{\gamma_1\gamma_2}
\end{split}
\end{equation}
with $\alpha_1+\alpha_2+\alpha_3=0$ and scaling all $\alpha_i$ by the same scalar just amounts to a rescaling of the odd elements of the Lie superalgebra. We assume all three $\alpha_i\neq0$ and set $\alpha_3=1$ and the other two are then determined by
\[
1+\alpha_1+\alpha_2=0, \qquad \alpha=-1-\frac{1}{\alpha_2}.
\]
The non-degenerate invariant supersymmetric bilinear form is 
\begin{equation}\nonumber
\begin{split}
(e_i, f_j) &= \alpha_i^{-1} \delta_{i, j}, \qquad (h_i, h_j)= (2\alpha_i)^{-1} \delta_{i, j}, \\ (\psi(\beta_1, \gamma_1, \delta_1),
 \psi(\beta_2, \gamma_2, \delta_2))&= -2 \epsilon_{\beta_1\beta_2} \epsilon_{\gamma_1\gamma_2} \epsilon_{\delta_1\delta_2}. 
\end{split}
\end{equation}
Define 
\begin{equation}\label{eq:oddosp}
x_1:= \psi(+, +, -) - \psi(+, -, +) \qquad\text{and} \qquad y_1:= \psi(-, +, -) - \psi(-, -, +)
\end{equation}
so that $e_1, f_1, h_1, x_1, y_1$ generate the Lie superalgebra $\mathfrak{osp}(1|2)$ as subalgebra. 

We now set $\lambda=-\alpha$ and consider the affine vertex operator superalgebra of $\mathfrak d(2, 1; -\lambda)$ at level $k$. We denote it by $D(2, 1; -\lambda)_k$. Its affine vertex operator subalgebra is 
\[
L_{k(-1+\lambda^{-1})}(\sltwo) \otimes L_{k(-1+\lambda)}(\sltwo)  \otimes L_{k}(\sltwo)
\]
and we now set $k=1, k_1=-1+\lambda^{-1}$ and $k_2=-1+\lambda$
 so that 
\[
\frac{1}{k_1+2} + \frac{1}{k_2+2} = \frac{1}{1+\lambda^{-1}} +\frac{1}{1+\lambda} =1.
\]
Note, that the central charge of $L_{k_1}(\sltwo) \otimes L_{k_2}(\sltwo)  \otimes L_{1}(\sltwo)$ is one, i.e. it coincides with the central charge of $D(2, 1; -\lambda)_1$ and this is a conformal embedding \cite{KMP, AP}. The main out come of section \ref{sec:dec} can be summarized as
\begin{thm}\label{thm:dec}
For generic $\lambda$: As $L_1(\sltwo) \otimes L_{k-1}(\sltwo)\otimes L_{k_2}(\sltwo)$-modules
\[
D(2, 1; -\lambda)_1 \cong \left(L_1(\sltwo) \otimes \bigoplus_{\substack{ m=0\\  m \ \text{even}}}^\infty  L_{k_1}(m) \otimes  L_{k_2}(m)\right)  \oplus \left(L(\omega_1) \otimes \bigoplus_{\substack{ m=0\\  m \ \text{odd}}}^\infty L_{k_1}(m) \otimes L_{k_2}(m)\right)
\]
with $L_k(m)$ the Weyl module of the $m+1$-dimensional highest-weight module of $\sltwo$ at level $k$. 
\end{thm}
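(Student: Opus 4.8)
The input is the conformal embedding $L_{k_1}(\sltwo)\otimes L_{k_2}(\sltwo)\otimes L_1(\sltwo)\hookrightarrow D(2,1;-\lambda)_1$, valid because both vertex operator superalgebras have central charge $1$ \cite{KMP,AP} once $\tfrac1{k_1+2}+\tfrac1{k_2+2}=1$. For generic $\lambda$ the levels $k_1,k_2$ are generic, the relevant module categories of $L_{k_i}(\sltwo)$ are semisimple, and the Weyl modules $L_{k_i}(m)=V_{k_i}(m)$ are simple with the explicit characters
\[
\ch L_{k}(m)=q^{\frac{m(m+2)}{4(k+2)}}\,\chi_m(z)\prod_{n\ge1}\frac{1}{(1-q^n)(1-z^{2}q^n)(1-z^{-2}q^n)},\qquad \chi_m(z)=\frac{z^{m+1}-z^{-m-1}}{z-z^{-1}} .
\]
Hence, as a module over the bosonic subalgebra, $D(2,1;-\lambda)_1$ is a direct sum of simple modules $L_1(a)\otimes L_{k_1}(b)\otimes L_{k_2}(c)$ with finite--dimensional $L_0$--eigenspaces, and the whole content of the theorem is to pin down the occurring triples $(a,b,c)$ and their multiplicities.

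First I would fix the possible labels by a fermion--parity argument. Each odd generator $\psi(\beta,\gamma,\delta)$ lies in the tensor product of the three standard representations, so its $h_i$--eigenvalue is $\pm\tfrac12$ for all $i=1,2,3$ simultaneously, while the even generators carry integral $h_i$--weight; consequently every state of fermionic parity $\bar0$ has even $h_i$--weight for all three factors at once, and every state of parity $\bar1$ has odd $h_i$--weight for all three. Since $L_1(\sltwo)$ has exactly the two simple modules $L_1(0)$ (even weights) and $L_1(\omega_1)$ (odd weights), this forces $a=0$ with $b,c$ even on the even part and $a=\omega_1$ with $b,c$ odd on the odd part; in particular $b\equiv c\pmod2$. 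Thus $D(2,1;-\lambda)_1^{\bar0}\cong L_1(0)\otimes M^{\bar0}$ and $D(2,1;-\lambda)_1^{\bar1}\cong L_1(\omega_1)\otimes M^{\bar1}$ for $L_{k_1}(\sltwo)\otimes L_{k_2}(\sltwo)$--modules $M^{\bar0},M^{\bar1}$ supported on even, resp.\ odd, pairs of labels; identifying $M^{\bar0}$ with $\bigoplus_{m\text{ even}}L_{k_1}(m)\otimes L_{k_2}(m)$ is then exactly the statement $\mathfrak{A}^{(1)}[SU(2),1-\lambda]$ anticipated for the even subalgebra.

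The core step is an Euler--Poincar\'e (character) computation. I would compute $\ch D(2,1;-\lambda)_1$ directly from the PBW/free--field description of the level--$1$ vacuum module of $\widehat{\mathfrak d(2,1;-\lambda)}$, tracking all three Cartan fugacities, and expand the result in the basis $\{\ch L_1(a)\,\ch L_{k_1}(b)\,\ch L_{k_2}(c)\}$ using the generic Weyl--module characters above. A convenient organization is to first strip off the $L_1(\sltwo)$ factor by the standard level--one theta identity (which already separates the $\bar0$ and $\bar1$ sectors), reducing the claim to an identity between $\sltwo\times\sltwo$ characters of exactly the type underlying $\mathfrak{A}^{(1)}[SU(2),\Psi]$; comparing coefficients of monomials in $q,z_1,z_2$ then forces $b=c$, every multiplicity equal to $1$, and the sum to run over all $m\ge0$ with the parity of $m$ matched to the $L_1$--label. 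One may equally exploit that $e_1,f_1,h_1,x_1,y_1$ generate an $\mathfrak{osp}(1|2)$ inside $\mathfrak d(2,1;-\lambda)$, so that the eight odd currents are organized by $L_1(\mathfrak{osp}(1|2))$, to keep the bookkeeping of which modules are generated transparent.

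The main obstacle is the exhaustion and multiplicity claim: the conformal embedding together with generic--level semisimplicity and the parity argument only yield that $D(2,1;-\lambda)_1$ is \emph{some} direct sum of modules $L_1(\bar m)\otimes L_{k_1}(m')\otimes L_{k_2}(m'')$ with $m'\equiv m''\equiv\bar m\pmod2$, and upgrading this to ``$m'=m''=m$, multiplicity one, for every $m\ge0$'' genuinely requires the full strength of the character identity (equivalently, a careful inductive argument that repeated operator products of the weight--one odd currents with the vacuum build up precisely the diagonal tower and nothing more). Checking that the resulting infinite direct sum closes into a vertex operator superalgebra is not an issue, since by construction it \emph{is} $D(2,1;-\lambda)_1$; the theorem is purely the branching statement. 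Compatibility of this decomposition with the anticipated statements about quantum Drinfeld--Sokolov reduction (Theorem~\ref{thm:DS}) then follows by applying $H_{DS}$ termwise.
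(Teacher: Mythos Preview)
Your parity argument and the idea of producing the diagonal tower by iterated OPEs of the odd currents are both in the right spirit, but the ``core step'' has a genuine gap. You propose to compute $\ch D(2,1;-\lambda)_1$ ``directly from the PBW/free-field description of the level-$1$ vacuum module of $\widehat{\mathfrak d(2,1;-\lambda)}$''. PBW gives you the character of the \emph{universal} affine VOA $V_1(\mathfrak d(2,1;-\lambda))$, not of its simple quotient; and at level $1$ these differ, since the third $\sltwo$ sits at level $1$ and $V_1(\sltwo)$ already has the singular vector $e_{3,-1}^2\,|0\rangle$. So the PBW character is strictly too large, and without an independent determination of the maximal submodule you cannot carry out the character comparison you describe. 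The Weyl--Kac type formula you would need for the simple $D(2,1;-\lambda)_1$ at this special level is exactly what is not available a priori.

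The paper sidesteps this by never computing $\ch D(2,1;-\lambda)_1$ directly. Instead it realizes $D(2,1;-\lambda)_1$ inside an auxiliary VOA $D_{k_1}=\Com\bigl(H,\,V_{k_1}(\mathfrak{sl}(2|1))\otimes V_L\bigr)$, a Heisenberg coset of the affine $\mathfrak{sl}(2|1)$ VOA at \emph{generic} level times a lattice VOA. At generic $k_1$ the $\mathfrak{sl}(2|1)$ factor \emph{is} simple and its character is PBW, so $\ch D_{k_1}$ can be computed (via \cite{BCL,CLsf}) and shown to equal the character of the claimed decomposition. The lower bound is obtained by writing down explicit primary fields $X_n=\,:a^+\partial a^+\cdots\partial^{n-1}a^+\!:\varphi_n\phi_{\beta_n}$ and checking via OPE that they lie in $D(2,1;-\lambda)_1$ and generate $L_1(\beta_n)\otimes L_{k_1}(n)\otimes L_{k_2}(n)$. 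The sandwich
\[
\text{(claimed decomposition)}\ \subset\ D(2,1;-\lambda)_1\ \subset\ D_{k_1}
\]
together with equality of the outer characters then forces all three to coincide. The point is that the $\mathfrak{sl}(2|1)$ realization trades the inaccessible simple character at level $1$ for a tractable coset computation at generic level; your proposal lacks any substitute for this upper bound.
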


\subsection{Quantum Hamiltonian reduction}\label{sec:reduction}

Let $V_{\ZZ}$ be the lattice \voa{} of the integer lattice. It is strongly generated by two odd fields $b(z), c(w)$ with operator products
\[
b(z)c(w) = (z-w)^{-1}.
\]
This vertex operator superalgebra is also often called the $bc$-ghost \voa. 
Consider the \voa{} $D(2, 1; -\lambda)_1 \otimes V_{\ZZ}$ and define
\begin{equation}\label{eq:d}
d(z) = :b(z)e_1(z): +b(z).
\end{equation}
We denote by $H_\lambda$ the cohomology of the zero-mode of $d$ on $D(2, 1; -\lambda)_1 \otimes V_{\ZZ}$. In other words $H_\lambda$ is the DS-reduction of  $D(2, 1; -\lambda)_1$ with respect to the first of the three affine $\sltwo$ vertex operator supalgebras. We will now prove
\begin{thm}\label{thm:DS}
Let $\lambda$ be generic, then as \voas{}  $H_\lambda \cong V_{-2+\lambda} \otimes F(4)$ where $F(4)$ is the vertex operator superalgebra of four free fermions. 
\end{thm}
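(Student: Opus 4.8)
The plan is to compute the DS-reduction $H_\lambda$ of $D(2,1;-\lambda)_1$ with respect to the first $\sltwo$ factor by using the decomposition of Theorem \ref{thm:dec} together with the behaviour of quantum Drinfeld--Sokolov reduction on $L_{k_1}(\sltwo)$-modules. First I would recall that for $\sltwo$ the reduction $H_{DS}$ sends the affine Weyl module $L_{k_1}(m)$ to a Virasoro module, and more precisely the reduction of a highest-weight module of weight $m\omega_1$ gives the degenerate Virasoro module $M^{\Psi_1}(m+1,1)$, where $\Psi_1=k_1+2=1+\lambda^{-1}$; the spectrally-flowed copies of the vacuum, which is what the $L_1(\sltwo)$-doublet sector contributes when tensored with $V_{\ZZ}$, reduce to Weyl modules of the dual $\sltwo$, i.e. to $L_{k_2}(\sltwo)$-type modules, in accordance with the general principle stated in Section~\ref{sec:three} for the simply-laced kernel VOAs and its $SU(2)$ avatar. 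Since reduction is an exact functor on category $\mathcal O$ and commutes with taking direct sums here (the sum over $m$ is by conformal weight, each summand finite-dimensional in each $L_0$-eigenspace once we restrict to the $U(1)\times \Vir$ piece), applying $H_{DS}$ term by term to the right-hand side of Theorem~\ref{thm:dec} will produce an extension of $\Vir_{b^2=-\Psi_1}\otimes L_{k_2}(\sltwo)\otimes(\text{free fermions})$ by explicit products of degenerate Virasoro modules and $L_{k_2}(\sltwo)$-modules.

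The key computational step is then to identify this extension. Here I would invoke the coset description of Virasoro: $\Vir_{b^2=-\Psi_1}\simeq \frac{L_{k_2-?}(\sltwo)\times \mathrm{Ff}}{L_{k_2}(\sltwo)}$ type statements, or more directly the GKO decomposition \eqref{eq:GKOsl2} run in reverse, which glues $\Vir(u,u+1)\otimes L_{k+1}(\sltwo)$-modules back up to $L_k(\sltwo)\otimes L_1(\sltwo)$. Matching levels: with $k_1=-1+\lambda^{-1}$ one has $\Psi_1=1+\lambda^{-1}=\frac{\Psi}{\Psi-1}$ where $\Psi=1+\lambda$ would be the would-be coupling; the Virasoro produced has $b^2=-\Psi_1$ and its coset partner is exactly $V_{-2+\lambda}(\sltwo)$ at level $k_2=\lambda-1$. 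Reassembling the degenerate-module sum over $m$ against the $L_{k_2}(m)$ factors, using that $V_{\ZZ}$ contributes one complex fermion and the $L_1(\sltwo)$/$L_1(\omega_1)$ alternation over even/odd $m$ supplies the second complex fermion pair via $F(4)\cong L_1(\sltwo)^{\otimes 2}\oplus L_1(\omega_1)^{\otimes 2}$, one should recover precisely $V_{-2+\lambda}\otimes F(4)$. The bookkeeping of which four fermions appear is governed by the observation in the excerpt that qDS of $D(2,1;-\Psi)_1$ turns four of the odd generators into free fermions and kills the $SU(2)_1$ currents; the $\mathfrak{osp}(1|2)$ subalgebra generated by $e_1,f_1,h_1,x_1,y_1$ from \eqref{eq:oddosp} organizes exactly which two odd fields survive reduction by the first $\sltwo$.

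Concretely I would carry this out as: (1) fix conventions for $H_{DS}$ on $\sltwo$ and state $H_{DS}(L_{k_1}(m))\cong M^{\Psi_1}(m+1,1)$ and $H_{DS}$ of the relevant spectrally-flowed vacua $\cong L_{k_2}(\ast)$, citing \cite{Ara07} and the $\sltwo$ case of the general claim; (2) apply $H_{DS}\otimes \mathrm{id}_{V_{\ZZ}}$ to Theorem~\ref{thm:dec}, including the ghost contribution so that $d(z)$ from \eqref{eq:d} is the correct differential, and record the resulting $\Vir\otimes L_{k_2}(\sltwo)\otimes(\text{ff})$-module; (3) identify $\Vir_{b^2=-\Psi_1}$ with the coset $\frac{V_{-2+\lambda}(\sltwo)\times \mathrm{Ff}^2}{?}$ and collapse the module sum via the inverse GKO branching \eqref{eq:GKOsl2}, concluding $H_\lambda\cong V_{-2+\lambda}\otimes F(4)$ as vertex operator superalgebras; (4) separately verify the stress tensor matches, i.e. that the total Virasoro element of $H_\lambda$ equals the sum of the $F(4)$ fermion stress tensor and the Sugawara vector of $V_{-2+\lambda}(\sltwo)$ (this is asserted but ``less obvious'' in the introduction), which can be done by a finite OPE computation on low-lying generators since everything is strongly generated in dimension $\le 1$. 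The main obstacle I anticipate is step (4) together with the precise identification of which lattice/fermion directions organize the four fermions: the character/branching argument fixes $H_\lambda$ up to isomorphism as a module, but pinning down the \emph{vertex algebra} structure — in particular that the conformal vector is the naive one and not a twisted version (the same subtlety flagged for the $L_1[G]$ factors in Section~\ref{sec:two}) — requires an honest computation with the reduced odd generators $x_1,y_1$ and the ghosts, and verifying associativity of the glued product. I would handle it by reducing to checking OPEs among a finite set of strong generators, which is finite and mechanical, relying on simplicity of $D(2,1;-\lambda)_1$ for generic $\lambda$ to conclude that a module isomorphism respecting these generators is an isomorphism of vertex operator superalgebras.
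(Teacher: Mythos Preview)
Your overall strategy coincides with the paper's: apply the reduction term-by-term to the decomposition of Theorem~\ref{thm:dec} to obtain a module isomorphism with $L_{-2+\lambda}(\sltwo)\otimes F(4)$ (this is exactly Lemma~\ref{lem:DS}, proved via the GKO branching you invoke), and then upgrade to a VOA isomorphism by identifying strong generators and matching OPEs. However, several of your bookkeeping claims are off and obscure the mechanism that makes the second step work.

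First, $V_{\ZZ}$ is the $bc$ ghost system for the reduction; it is consumed by the cohomology and does not contribute a surviving complex fermion. All four free fermions of $F(4)$ come from the four odd currents $\psi(+,\beta,\gamma)$, $\beta,\gamma\in\{+,-\}$, which lie in the kernel of $d_0$ and whose conformal weight drops from $1$ to $\tfrac12$ under the $h_1$-twist. Second, spectral flow plays no role in Theorem~\ref{thm:DS}; it enters only in the separate module discussion of Section~\ref{sec:DSspectralflow}. Third, the $\mathfrak{osp}(1|2)$ subalgebra and the fields $x_1,y_1$ belong to Corollary~\ref{cor:ospascoset} and Theorem~\ref{thm:ospreduction}, not to this argument.

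The step you flag as the main obstacle --- pinning down the VOA structure --- is handled in the paper by a concrete observation you are missing: from \eqref{eq:OPEfermions} the $\psi(+,\beta,\gamma)$ have first-order pole proportional to $e_1(w)$, and in cohomology $e_1$ is identified with the identity because $[d_0,c]=1+e_1$. This immediately yields free-fermion OPEs. Since these fermions carry the standard representation of the surviving $L_{-1+\lambda}(\sltwo)$, the Jacobi identity forces the OPE among them and the $L_{-1+\lambda}(\sltwo)$ currents to coincide with that of the strong generators of $L_{-2+\lambda}(\sltwo)\otimes F(4)$. Simplicity of the latter for generic $\lambda$ then gives an embedding into $H_\lambda$, and the graded-dimension equality from Lemma~\ref{lem:DS} makes it an isomorphism. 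Note that it is simplicity of the target $L_{-2+\lambda}(\sltwo)\otimes F(4)$, not of $D(2,1;-\lambda)_1$, that is used here.
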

We start with the weaker statement
\begin{lem}\label{lem:DS}
Let $\lambda$ be generic, then $H_\lambda \cong L_{-2+\lambda}(\sltwo) \otimes F(4)$
 as $L_1(\sltwo)\otimes L_{k}(\sltwo)\otimes \Vir(c_k)$-modules.
 \end{lem}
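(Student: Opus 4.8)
The plan is to compute $H_\lambda$ termwise against the decomposition of Theorem~\ref{thm:dec}, to evaluate each Drinfeld--Sokolov reduction of a Weyl module via \cite{Ara07}, and then to recognize the resulting sum inside $L_{-2+\lambda}(\sltwo)\otimes F(4)$ through the generic Goddard--Kent--Olive decomposition, of which \eqref{eq:GKOsl2} is the rational specialization. First I would note that the differential $d_0$, the zero mode of $d(z)=\,:b(z)e_1(z):+b(z)$, involves only the currents $e_1,f_1,h_1$ of the first affine $\sltwo$ together with the ghosts $b,c$; it therefore acts trivially on the second and third affine $\sltwo$ subalgebras of $D(2,1;-\lambda)_1$ and commutes with the decomposition of $D(2,1;-\lambda)_1\otimes V_{\ZZ}$ into $\bigl(L_{k_1}(\sltwo)\otimes V_{\ZZ}\bigr)$-modules tensored with spectator modules for the other two $\sltwo$'s. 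Since quantum Drinfeld--Sokolov reduction is exact and commutes with direct sums, Theorem~\ref{thm:dec} yields, writing $k_2=-1+\lambda$ and $L_1^{[0]}(\sltwo):=L_1(\sltwo)$, $L_1^{[1]}(\sltwo):=L_1(\omega_1)$,
\[
H_\lambda\ \cong\ \bigoplus_{m\ge 0}L_1^{[m\bmod 2]}(\sltwo)\otimes L_{k_2}(m)\otimes H_{DS}\!\bigl(L_{k_1}(m)\bigr)
\]
as modules over $L_1(\sltwo)\otimes L_{k_2}(\sltwo)\otimes\Vir(c_k)$, where $\Vir(c_k)$ denotes the Virasoro \voa{} produced by the reduction of the first $\sltwo$, of central charge $1-6/(\lambda(\lambda+1))$.

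Next, for generic $\lambda$, equivalently generic $k_1=-1+\lambda^{-1}$, the higher reductions vanish and Theorem~9.1.4 of \cite{Ara07}, specialized to $\g=\sltwo$, identifies $H_{DS}(L_{k_1}(m))$ with the irreducible degenerate Virasoro module $M^{\Psi_1}(1,m+1)$, $\Psi_1:=k_1+2=1+\lambda^{-1}$, of central charge $c_k$ and highest weight the value dictated by the Arakawa formula. On the other side, the free-fermion identity $F(4)\cong L_1(\sltwo)\otimes L_1(\sltwo)\oplus L_1(\omega_1)\otimes L_1(\omega_1)$ combines with the diagonal conformal embedding $L_{k_2}(\sltwo)\hookrightarrow L_{-2+\lambda}(\sltwo)\otimes L_1(\sltwo)$ --- legitimate since $(-2+\lambda)+1=k_2$, with commutant $\Vir(c_k)$ by the coset description of the Virasoro algebra \cite{GKO}. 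Running the generic analogue of \eqref{eq:GKOsl2} on the first $L_1(\sltwo)$-factor of $F(4)$ and then reassembling $F(4)$ gives
\[
L_{-2+\lambda}(\sltwo)\otimes F(4)\ \cong\ \bigoplus_{m\ge 0}L_1^{[m\bmod 2]}(\sltwo)\otimes L_{k_2}(m)\otimes M^{\Psi_1}(1,m+1),
\]
which matches the previous display termwise once one checks that the two Virasoro modules written $M^{\Psi_1}(1,m+1)$ coincide: both are degenerate modules of central charge $c_k$, and one compares the conformal weight of the highest vector produced by the reduction with the one appearing in the coset decomposition; for generic $\lambda$ equality of the pair $(c,h)$ pins down the module.

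Since each simple summand $L_1^{[\epsilon]}(\sltwo)\otimes L_{k_2}(m)\otimes M^{\Psi_1}(1,m+1)$ occurs with multiplicity one, the equality of graded characters then upgrades to an isomorphism of $L_1(\sltwo)\otimes L_{k_2}(\sltwo)\otimes\Vir(c_k)$-modules as soon as the three distinguished subalgebras are matched: on the $H_\lambda$ side the two affine factors are the untouched spectator currents of $D(2,1;-\lambda)_1$ and $\Vir(c_k)$ is the reduction Virasoro, while on the right they are the second $\sltwo_1\subset F(4)$, the diagonal $\sltwo_{k_2}\subset L_{-2+\lambda}(\sltwo)\otimes\sltwo_1$, and the coset Virasoro. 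I would identify the four free fermions of $F(4)$ with the cohomology classes of the $h_1$-weight $+\tfrac12$ odd currents $\psi(+,\pm,\pm)$ (the weight $\tfrac32$ currents $\psi(-,\beta,\gamma)=-[f_1,\psi(+,\beta,\gamma)]$ being descendants), and then check the OPEs: that these four classes are mutually local free fermions, that the spectator $\sltwo_{k_2}$ current is the diagonal combination above, and that the DS stress tensor equals the sum of the $L_{-2+\lambda}(\sltwo)$ Sugawara tensor and the four-fermion tensor.

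The main obstacle I expect is precisely this last step --- upgrading the equality of graded vector spaces to an isomorphism over the full $L_1(\sltwo)\otimes L_{k_2}(\sltwo)\otimes\Vir(c_k)$, and in particular identifying concretely which fields of the cohomology realize the $F(4)$ fermions and the diagonal $\sltwo_{k_2}$. One route is a direct low-lying OPE computation; a cheaper alternative is to match Euler--Poincar\'e characters of $H_\lambda$ and of $L_{-2+\lambda}(\sltwo)\otimes F(4)$ as graded $L_1(\sltwo)\otimes L_{k_2}(\sltwo)\otimes\Vir(c_k)$-modules and invoke the multiplicity-one property above. The vanishing of the higher DS cohomology of the Weyl modules $L_{k_1}(m)$, which relies on genericity of $\lambda$, is a minor additional point.
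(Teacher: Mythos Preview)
Your proposal is correct and follows essentially the same route as the paper: decompose $D(2,1;-\lambda)_1$ via Theorem~\ref{thm:dec}, apply $H_{DS}$ termwise to the $L_{k_1}(\sltwo)$ factor, identify each $H_{DS}(L_{k_1}(m))$ as a degenerate Virasoro module using \cite{Ara07}, and compare with the GKO decomposition of $L_{k_2-1}(\sltwo)\otimes L_1(\sltwo)$ and $L_{k_2-1}(\sltwo)\otimes L_1(\omega_1)$ (here $k_2-1=-2+\lambda$), tensored with the spare $L_1(\sltwo)$ from $F(4)$.

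One comment: your last two paragraphs overshoot the lemma. The statement is only an isomorphism of $L_1(\sltwo)\otimes L_{k_2}(\sltwo)\otimes\Vir(c_k)$-modules, and for that the termwise comparison of two completely reducible, multiplicity-free decompositions into the same simple summands is already enough. Identifying the cohomology classes $\psi(+,\beta,\gamma)$ as the four free fermions, checking their OPEs, and matching the diagonal $\sltwo_{k_2}$ current is exactly what the paper does \emph{after} the lemma, in the passage upgrading Lemma~\ref{lem:DS} to Theorem~\ref{thm:DS}. So what you flag as the ``main obstacle'' is not an obstacle for the lemma at all; it belongs to the next step.
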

\begin{proof}
Let $c = 13- 6(t+t^{-1})$ with $t=k_1+2$. Denote by $M(m, c)$ the Virasoro module obtained via DS-reduction from $ L_{k_1}(m)$ so that from Theorem \ref{thm:dec} we have that 
for generic $\lambda$ and as $L_1(\sltwo) \otimes L_{k_2}(\sltwo)\otimes \Vir(c)$-modules
\[
H_\lambda \cong \left(L_1(\sltwo) \otimes \bigoplus_{\substack{ m=0\\  m \ \text{even}}}^\infty  L_{k_2}(m) \otimes M(m, c) \right)  \oplus \left(L(\omega_1) \otimes \bigoplus_{\substack{ m=0\\  m \ \text{odd}}}^\infty L_{k_2}(m) \otimes M(m, c)\right)
\]
By Theorem 9.1.4 of \cite{Ara07} that $H_{DS}(L_k(\lambda))\cong L(\gamma_{\lambda -\Psi \rho^\vee})$ for generic $k$. 
so that by \cite{GKO} (see section \ref{sec:W})
for generic $k$ and as $L_{k}(\sltwo)\otimes \Vir(c_k)$-modules
\[
L_1(\sltwo)\otimes L_{k-1}(\sltwo) \cong \bigoplus_{\substack{ m=0\\  m \ \text{even}}}^\infty  L_{k}(m) \otimes M(m, c_k)   
\]
and
\[
L(\omega_1)\otimes L_{k-1}(\sltwo) \cong \bigoplus_{\substack{ m=0\\  m \ \text{odd}}}^\infty  L_{k}(m) \otimes M(m, c_k) 
\]
so that we see that the Lemma is indeed true. 
  \end{proof}
We are almost done with the proof of Theorem \ref{thm:DS}.
The \voa{} $L_{-2+\lambda}(\sltwo) \otimes F(4)$ is strongly generated by the four free fermions and the $L_{-1+\lambda}(\sltwo)$. The four free fermions in $H_\lambda$ are found as follows. 
 The four fermionic currents corresponding to $\psi(+, \beta, \gamma)$ are clearly not in the image but in the kernel of $d_0$. They have conformal dimension $1/2$ and the \OPE{} is deduced from the commutation relations to be 
\begin{equation}\label{eq:OPEfermions}
\psi(+,\beta_1, \gamma_1)(z)\psi(+, \beta_2, \gamma_2)(w) \sim -\frac{2\alpha_1e_1(w)\epsilon_{\beta_1, \beta_2}\epsilon_{\gamma_1, \gamma_2}}{(z-w)}
\end{equation}
but $e_1$ is in the same class as the vacuum since $[d_0, c] =1+e_1$. It is a similar but simpler computation as the proof of \cite[Lemma 8.2]{ACL2} to show that the Jacobi identity implies that the operator product algebra of the $\psi(+,\beta, \gamma)(z)$ and the $L_{-1+\lambda}(\sltwo)$ is fixed by the following three: above \OPE, the \OPE{} of $L_{-1+\lambda}(\sltwo)$ and that the $\psi(\beta, \gamma, +)(z)$ carry the standard representation of $L_{-1+\lambda}(\sltwo)$. 
It follows that the operator product algebras of the $\psi(\beta, \gamma, +)$ together with the $L_{-1+\lambda}(\sltwo)$ is the same as the one of the corresponding generators of $L_{-2+\lambda}(\sltwo) \otimes F(4)$.

We thus have two \voas{} and identified vertex operator subalgebras that have the same operator product algebra. In the case of $L_{-2+\lambda}(\sltwo) \otimes F(4)$ this is already the full \voa, which for generic $\lambda$ is in addition simple. This means that for generic $\lambda$ the corresponding fields of $H_\lambda$ generate the same \voa. We thus have that $L_{-2+\lambda}(\sltwo) \otimes F(4)$ is isomorphic to a vertex operator subalgebra of $H_\lambda$. But Lemma \ref{lem:DS} especially says that both \voas{} have the same graded dimension, so they must coincide. 

\begin{rem}
One can also view $H_\lambda$ and  $L_{-2+\lambda}(\sltwo) \otimes F(4)$ as \voas{} over the field of rational functions in $\lambda$. 
They are isomorphic as \voas{} over this field. 
\end{rem}

\subsubsection{DS-reduction of modules}\label{sec:DSspectralflow}

We now investigate the DS-reduction of spectrally flown modules.
We use the notation for spectral flow of \cite{CR}. 

Fix a basis $\{ e_n, f_n, h_n, K, d | n \in\ZZ\}$ of $\widehat\sltwo$ with $d$ the derivation, $K$ central and the other commuation relations 
\begin{equation}\nonumber
\begin{split}
[h_m, e_n] &=2e_{m+n}, \qquad[h_m, f_n]=-2f_{m+n}, \qquad [h_m, h_n]= 2m\delta_{m+n, 0}K\\
[e_m, f_n]&= -h_{m+n} -m\delta_{n+m,0}K, \qquad [e_m, e_n]=[f_m, f_n]=0.
\end{split}
\end{equation}
Spectral flows are then the following family of automorphisms of $\widehat\sltwo$
\begin{equation}\nonumber
\sigma^\ell(e_n)= e_{n-\ell}, \qquad \sigma^\ell(f_n)= f_{n+\ell}, \qquad  \sigma^\ell(h_n)= h_{n}-\delta_{n, 0}\ell K,
\end{equation}
for integer $\ell$. 
Consider now $V_k(\sltwo)$ so that $K$ acts by multiplication with the complex number $k$ on modules. Then spectral flow acts on the Virasoro zero-mode as
\[
\sigma^\ell(L_0) = L_0 - \frac{1}{2} \ell h_0 +\frac{1}{4} \ell^2 k.
\]
Consider now a module $M$ of $V_k(\sltwo)$. The module $\sigma^\ell(M)$ twisted by $\ell$ units of spectral flow is as a vector space isomorphic to $M$, but the action of $X$ in $\widehat\sltwo$ changes as follows. Denote the isomorphism from $M$ to $\sigma^\ell(M)$ by $\sigma^*_\ell$, then
\[
X \sigma^*_\ell(v) := \sigma^*_\ell\left( \sigma^{-\ell}\left(X\right)v\right).
\]
The character of a module is
\[
\ch[M](z, q) =\text{tr}_M(z^{h_0}q^{L_0-\frac{c}{24}}), \qquad c=\frac{3k}{k+2}, 
\]
so that
\[
\ch[\sigma^\ell(M)](z, q) = z^{\ell k} q^{\frac{\ell^2k}{4}}\ch[M]\left(zq^{\frac{\ell}{2}}, q\right)
\]
It follows then that the characters of Weyl modules satisfy
\[
\ch[\sigma^{-\ell}(L_k(m-1)] = q^{-\frac{c}{24}}\frac{q^{\frac{\ell^2t}{4}}z^{-\ell t}q^{\frac{m^2-1}{4t}}(z^mq^{-\frac{m\ell}{2}}-z^{-m}q^{\frac{m\ell}{2}})}{z^{1-2\ell}q^{\frac{\ell(\ell-1)}{2}} \prod\limits_{n=1}^\infty (1-z^2q^{n-\ell})(1-q^n)(1-z^{-2}q^{n+\ell-1})}
\]
where this is as formal power series, i.e. 
\[
\frac{1}{1-x} = 1+x+x^2 +\dots
\]
and $t=k+2$. This means that the character converges in the domain $$|q^\ell|<|z^2| < |q^{\ell-1}|.$$
Note, that the character simplifies using that the denominator is a Jacobi form
\[
\ch[\sigma^{-\ell}(L_k(m-1)] = q^{-\frac{c}{24}}\frac{q^{\frac{\ell^2t}{4}}z^{-\ell t}q^{\frac{m^2-1}{4t}}(z^mq^{-\frac{m\ell}{2}}-z^{-m}q^{\frac{m\ell}{2}})}{z \prod\limits_{n=1}^\infty (1-z^2q^{n})(1-q^n)(1-z^{-2}q^{n-1})}
\]
The supercharacter of the ghosts is just
\[
q^{\frac{1}{12}}\prod_{n=1}^\infty (1-z^2q^n)(1-z^{-2}q^{n-1}).
\]
The Euler-Poincar\'e character is then given by the limit $z$ to $q^{-1/2}$ of the character times the ghost supercharacter times $q^{k/4}$. It is then an easy computation that tells us this equals to the character of the Virasoro module $M^{k-2}(n, \ell+1)$ up to a possible sign,
\begin{equation}\nonumber
\begin{split}
\text{EP}(\ch[\sigma^{-\ell}(L_k(n-1))] ) &= \lim_{z\rightarrow q^{-\frac{1}{2}}} \ch[\sigma^{-\ell}(L_k(n-1))](z, q)q^{\frac{1}{12}+\frac{k}{4}}\prod_{n=1}^\infty (1-z^2q^n)(1-z^{-2}q^{n-1})  \\
&= (-1)^\ell \ch[M^{k-2}(n, \ell+1)].
\end{split}
\end{equation}
In other words we get 
\[
\text{EP}\left(\ch[\sigma^{-\ell}(D(2, 1;-\lambda)_1)]\right) =  (-1)^\ell \ch[L_{k_2-1}(\ell) \otimes F(4)]
\]
as desired. 
It thus remains to prove a vanishing theorem for cohomology of spectrally flown modules. Here is a possible way:
Recall that the DS-differential is
\[
d= d_{st} + \chi
\]
the sum of the standard differential for the semi-infinite cohomology of the affine Lie algebra and a character. Spectral flow twists $d_{st}$ and leaves the character $\chi$ invariant. We can however also apply opposite spectral flow to the ghosts of the reduction so that $d_{st}$ stays invariant but $\chi$ changes. Spectral flow by $-\ell$ on ghosts is just an isomorphism of ghost \voa{} but it changes the ghost grading by $\ell$. One can now try to apply the standard steps as in e.g. \cite{Aranoghost} and hopefully get a vanishing theorem, i.e. all homologies vanish, except in degree $\ell$. We hope to be able work out the arguments for this soon. 

\subsection{Conformal embeddings and decomposition}\label{sec:dec}

We will work out the decopositions of $D(2, 1; -\lambda)_1$ and $\psl_1$ into modules of its even affine vertex operator subalgebra.

\subsubsection{Constructing the relevant \voas}\label{sec:constVOAs}

First of all let us give a concrete basis of the relevant finite dimensional Lie super algebras. 
We start with $\mathfrak{sl}(2|1)$. Its even sub algebra is spanned by $e, f, h, x$ with non-vanishing relations
\[
[h, e]=e, \qquad [h, f]=-f, \qquad [e, f]=2h
\]
the odd part is spanned by $a^\pm, b^\pm$ with action of the even part given by
\begin{equation}\nonumber
\begin{split}
[h, a^\pm] &= \pm \frac{1}{2} a^\pm, \qquad [h, b^\pm] = \pm \frac{1}{2} b^\pm, \qquad 
[x, a^\pm] =  \frac{1}{2} a^\pm, \qquad [x, b^\pm] = - \frac{1}{2} b^\pm, \\
[e, a^-] &= -a^+, \qquad\quad [e, b^-]=b^+, \qquad\quad\ \, [f, a^+]=-a^-, \qquad [f, b^+]=b^-
\end{split}
\end{equation}
and with non-vanishing anti-commutation relations
\[
[a^+, b^+] = e, \qquad  [a^-, b^-] = f, \qquad [a^\pm, b^\mp] = x\mp h.
\]
The invariant bilinear form satisfies
\[
(h, h)=\frac{1}{2}, \qquad (e, f)=1, \qquad (x, x)=-\frac{1}{2}, \qquad  (a^+, b^-)=1=(a^-, b^+).
\]
Consider the universal affine \voa{} $V_k\left(\mathfrak{sl}(2|1)\right)$ of $\mathfrak{sl}(2|1)$. It is strongly-generated by $\{ y(z) | y = e, f, h, x, a^\pm, b^\pm \}$ with operator products
\[
y(z)v(w) \sim \frac{k(y, v)}{(z-w)^2} +\frac{[y,v](w)}{(z-w)}, \qquad y, v \in \{e, f, h, x, a^\pm, b^\pm \}
\] 
as usual. 
Consider the lattice
\[
L = A_1 \oplus \sqrt{-1}A_1 \cup \left( A_1 + \omega \oplus \sqrt{-1}\left(A_1+\omega\right)\right) 
\]
with $\omega$ the fundamental weight of $A_1$. 
We claim that 
\[
D_k = \text{Com}\left( H, W_k \right), \qquad W_k = V_k\left(\mathfrak{sl}(2|1)\right) \otimes V_L
\]
is $D(2, 1; -\lambda)_1$ for generic $k$. Here $H$ denotes the Heisenberg sub \voa{} generated by $x-y$, where $y$ is the Heisenberg field of the lattice \voa{} $V_{\sqrt{-1}A_1}$ and normalized such that $\sqrt{-1}\omega$ has eigenvalue $1/2$, i,e, $y$ has norm $-1/2$. 
\begin{lem}\label{lem:inclusions}
\begin{equation}
\begin{split}
D_{k_1} \supset D(2, 1; -\lambda)_1 \supset & \left(L_1(\sltwo) \otimes \bigoplus_{\substack{ m=0\\  m \ \text{even}}}^\infty  L_{k_1}(m) \otimes  L_{k_2}(m)\right)  \oplus \\
& \oplus \left(L_1(\omega) \otimes \bigoplus_{\substack{ m=0\\  m \ \text{odd}}}^\infty L_{k_1}(m) \otimes L_{k_2}(m)\right)
\end{split}
\end{equation}
\end{lem}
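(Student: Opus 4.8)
The plan is to prove the two inclusions separately. Throughout write $E:=L_1(\sltwo)\otimes L_{k_1}(\sltwo)\otimes L_{k_2}(\sltwo)$ for the even affine part; recall that $\tfrac{1}{k_1+2}+\tfrac{1}{k_2+2}=1$ and that $E$ and $D(2,1;-\lambda)_1$ both have central charge $1$, so $E\subset D(2,1;-\lambda)_1$ is a conformal embedding (\cite{KMP,AP}) and the branching over $E$ has finite-dimensional graded pieces.

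For $D_{k_1}\supset D(2,1;-\lambda)_1$ I would realize the affine currents of $\mathfrak{d}(2,1;-\lambda)$ at level $1$ explicitly inside $W_{k_1}=V_{k_1}(\mathfrak{sl}(2|1))\otimes V_L$, checking that they lie in $\Com(H,W_{k_1})=D_{k_1}$. Recall (Appendix B of \cite{FS}) that $\mathfrak{sl}(2|1)$ sits in $\mathfrak{d}(2,1;-\lambda)$ as one of the three $\sltwo$'s, the diagonal Cartan of the remaining two, and four of the eight odd root vectors. So I take $\{e,f,h\}\subset\mathfrak{sl}(2|1)$ for the first $\sltwo$ (level $k_1$), the sublattice currents $V_{A_1}\cong L_1(\sltwo)$ inside $V_L$ for the third $\sltwo$ (level $1$), and I reconstruct the second $\sltwo$ — which must come out at level $k_2=-1+\lambda$ — together with the remaining odd currents from the $\mathfrak{sl}(2|1)$-field $x$, the Heisenberg field $y$ of $V_{\sqrt{-1}A_1}$, and the vertex operators $e^{\pm(\omega,\sqrt{-1}\omega)}$ of the isotropic vector $(\omega,\sqrt{-1}\omega)\in L$; concretely the odd currents are the dressings $a^{\pm}\otimes e^{\pm(\omega,\sqrt{-1}\omega)}$, $b^{\pm}\otimes e^{\mp(\omega,\sqrt{-1}\omega)}$ and their images under the third $\sltwo$. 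That all of these centralize $x-y$ is immediate from charge counting, and they have conformal weight $1$ because $(\omega,\sqrt{-1}\omega)$ has norm $0$. It then remains to verify their OPEs close into the defining relations of $\mathfrak{d}(2,1;-\lambda)$ at level $1$ — in particular that the reconstructed Cartan has level $k_2$ and that the total conformal vector is the Sugawara one — a finite computation using the $\mathfrak{sl}(2|1)$ and lattice OPEs and $\alpha_1+\alpha_2+\alpha_3=0$. Since for generic $\lambda$ the universal $V_1(\mathfrak{d}(2,1;-\lambda))$ is simple and equal to $D(2,1;-\lambda)_1$, the resulting nonzero homomorphism $V_1(\mathfrak{d}(2,1;-\lambda))\to D_{k_1}$ is injective, giving the inclusion.

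For the second inclusion I would read the branching over $E$ off this model. Decompose $V_{k_1}(\mathfrak{sl}(2|1))$ over $L_{k_1}(\sltwo)$ times the $x$-Heisenberg and $V_L$ over $V_{A_1}$ times $V_{\sqrt{-1}A_1}$ — for generic level both are semisimple with controlled multiplicities — and pass to the $(x-y)$-neutral subspace. There $(x-y)$-neutrality ties the $x$-charge to the $\sqrt{-1}A_1$-charge, and the indecomposability of $L$ (its gluing vector being isotropic, hence conformal-weight-preserving) ties the $\sqrt{-1}A_1$-charge to the $A_1$-charge, which labels the content in $V_{A_1}\cong L_1(\sltwo)$ by the parity of a single integer $m$; simultaneously the $x$-charge controls the content in $L_{k_1}(\sltwo)$, and the second-$\sltwo$ currents constructed above then force the content in $L_{k_2}(\sltwo)$ to share the index $m$. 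Tracking this shows the vacuum module of $D_{k_1}$, hence of $D(2,1;-\lambda)_1$, contains $L_1(\epsilon_m\omega)\otimes L_{k_1}(m)\otimes L_{k_2}(m)$ for each $m\geq 0$ with $\epsilon_m\equiv m\bmod 2$ (the parity forced since only spins $0,\tfrac{1}{2}$ occur at level $1$), and these summands are pairwise distinct by conformal weight; this is exactly the displayed direct sum. Equivalently, one can be fully explicit: for each $m$ exhibit a singular vector $v_m\in D(2,1;-\lambda)_1$ of $E$-weight $(\epsilon_m\omega;m\omega;m\omega)$ as a normally ordered product of $m$ odd currents carrying the $+$ label in the second and third slots and a parity-dictated pattern in the first, verify (using the first step to see it is nonzero) that it is killed by the positive modes of $E$, and note that it generates a copy of $L_1(\epsilon_m\omega)\otimes L_{k_1}(m)\otimes L_{k_2}(m)$.

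The hardest point is the claim in the second inclusion that the level-$k_2$ factor carries the same label $m$ as the level-$k_1$ factor — equivalently, that the $v_m$ are nonzero in the simple superalgebra $D(2,1;-\lambda)_1$ and genuinely $E$-singular of the stated weight. This is precisely where the isotropy of $(\omega,\sqrt{-1}\omega)$ and the level-$1$ truncation of the first $\sltwo$ enter. By contrast the first inclusion is a bookkeeping-and-OPE check finished by the standard universal-affine-algebra argument, and the remaining gap — that nothing else appears, upgrading ``$\supset$'' to ``$=$'' — is supplied by the character count of Theorem \ref{thm:dec}.
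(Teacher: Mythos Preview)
Your approach is the same as the paper's, but one concrete step is executed incorrectly. The dressings you write, $a^{\pm}\otimes e^{\pm(\omega,\sqrt{-1}\omega)}$ and $b^{\pm}\otimes e^{\mp(\omega,\sqrt{-1}\omega)}$, do \emph{not} all centralize $x-y$: for instance $a^{-}$ has $x$-charge $+\tfrac12$ while $e^{-(\omega,\sqrt{-1}\omega)}$ has $y$-charge $-\tfrac12$, so the product has $(x-y)$-charge $1$. The correct dressings are $a^{\pm}\phi_{\pm 1}\varphi_{1}$ and $b^{\pm}\phi_{\pm 1}\varphi_{-1}$ (the $\varphi$-index tracks the $x$-charge, i.e.\ $a$ versus $b$; the $\phi$-index tracks the $h$-charge). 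With this fix your argument for the first inclusion is exactly the paper's.

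For the second inclusion the paper is more explicit than your sketch and supplies the step you leave implicit. It writes down the highest-weight vectors directly in the $D_{k_1}$ model as
\[
X_n=\;:a^{+}\,\partial a^{+}\cdots\partial^{n-1}a^{+}:\varphi_n\,\phi_{\beta_n},\qquad \beta_n=n\bmod 2,
\]
checks they have the right $E$-weight $(n\omega,n\omega,\beta_n\omega)$ and conformal dimension, and are primary. The point you flagged as hardest --- why these vectors lie in $D(2,1;-\lambda)_1$ rather than merely in $D_{k_1}$ --- is handled by the single OPE computation
\[
X_1(z)\,X_n(w)\;\sim\;(z-w)^{(n+\beta_n)/2}\,X_{n+1}(w)+\cdots,
\]
so inductively every $X_n$ is generated from $X_1$, which is one of the odd $\mathfrak{d}(2,1;-\lambda)$ currents. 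Your alternative (b), writing $v_m$ as a normal-ordered product of $m$ odd currents of $D(2,1;-\lambda)_1$ itself, is equivalent in spirit; but note that non-vanishing and primarity are most cleanly checked by passing through the $D_{k_1}$ realization, which is exactly what the explicit $X_n$ does. Your structural approach (a) is plausible but, as written, is a heuristic rather than a proof.
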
 
\begin{proof}
Set $k=k_1$.
The argument goes as follows. First of all, denote the lattice \voa{} primary fields corresponding to $n\omega$ by $\phi_n$ and those corresponding to $n\sqrt{-1}\omega$ by $\varphi_n$. 
Then the fields
\[
a^\pm(z)\phi_{\pm 1}(z) \varphi_1(z)\qquad \text{and} \qquad b^\pm(z)\phi_{\pm 1}(z) \varphi_{-1}(z)
\]
are all elements of $D_k$. Also $:a^+a^-\varphi_2:$ and $:b^+b^-\varphi_{-2}:$ are in $D_k$. The conformal dimension of all these fields is one and it is a straight forward OPE computation to verify that the three affine $\sltwo$ sub \voas{} have the same levels as the one of 
$D(2, 1; -\lambda)_1$. Moreover by construction the eight fermionic fields carry the tensor product of the standard representations which fixes the sub \voa{} generated by these fields to be
 $D(2, 1; -\lambda)_1$. Alternatively, this statement is also straight-forward to check via OPE computations.

It thus follows that $D(2, 1; -\lambda)_1 \subset D_k$ for generic $k$. 
Define the fields
\[ 
X_n = : a^+ \partial a^+ \partial^2 a^+ \dots \partial^{n-1} a^+: \varphi_n \phi_{\beta_n}, \qquad \beta_n = \begin{cases} 0 & \quad \text{if} \ n \ \text{is even} \\
1 & \quad \text{if} \ n \ \text{is odd} \end{cases}
\]
then we clearly have that these fields are elements of $D_k$. Moreover they have weight $(n\omega, n\omega, \beta_n\omega)$ with $\beta_n=0$ for even $n$ and $\beta_n=1$ for odd $n$.  This is the weight with respect to the zero-mode of the three $\sltwo$'s, the last one being the one of level one. The conformal dimension is
\[
\Delta(X_n) = \frac{n(n+2)}{4} +\frac{\beta_n}{4}
\]
and it has regular OPE with all three $e_i(z)$. 
It follows that $X_n$ is a primary fields for the three affine $\sltwo$'s. We compute that 
\[
X_1^+X_n^+ \sim (z-w)^{\frac{(n+\beta_n)}{2}} X_{n+1}^+ +\dots
\]
so that all thes fields are contained in $D(2, 1; -\lambda)_1$ and hence 
\begin{equation}
\begin{split}
D(2, 1; -\lambda)_1 \supset& \left(L_1(\sltwo) \otimes \bigoplus_{\substack{ m=0\\  m \ \text{even}}}^\infty  L_{k_1}(m) \otimes  L_{k_2}(m)\right)  \oplus \\
& \oplus \left(L_1(\omega) \otimes \bigoplus_{\substack{ m=0\\  m \ \text{odd}}}^\infty L_{k_1}(m) \otimes L_{k_2}(m)\right)
\end{split}
\end{equation}
where we write $L_{k}(m)$ for $L_{k}(m\omega)$.
\end{proof}
\begin{lem}
\begin{equation}\nonumber
\begin{split}
\ch[D_k] &= \ch[L_1(\sltwo)] \sum  _{\substack{ m=0\\  m \ \text{even}}}^\infty  \ch[L_{k_1}(m)] \sum_{n\in \mathbb Z} x^nq^{\frac{(m+1)^2-1}{4}} \left(\chi(n+m+1)-\chi(n-m-1)\right) + \\
&\qquad  \ch[L_1(\omega)] \sum  _{\substack{ m=0\\  m \ \text{odd}}}^\infty  \ch[L_{k_1}(m)] \sum_{n\in \mathbb Z} x^nq^{\frac{(m+1)^2-1}{4}} \left(\chi(n+m+1)-\chi(n-m-1)\right) \\
= \ch&\left[ \left(L_1(\sltwo) \otimes \bigoplus_{\substack{ m=0\\  m \ \text{even}}}^\infty  L_{k_1}(m) \otimes  L_{k_2}(m)\right)  \oplus \left(L_1(1) \otimes \bigoplus_{\substack{ m=0\\  m \ \text{odd}}}^\infty L_{k_1}(m) \otimes L_{k_2}(m)\right) \right] 
\end{split}
\end{equation}
with 
\[
\chi(r) = \frac{q^{-r^2/8}}{\eta(q)^3} \sum_{s=0}^\infty (-1)^s q^{(2s+|r-1|+1)^2/8}
\]
\end{lem}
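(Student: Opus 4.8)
\emph{Strategy.} The plan is to compute $\ch[D_k]$ directly from the presentation $D_k=\Com(H,W_k)$, with $W_k=V_k(\mathfrak{sl}(2|1))\otimes V_L$, and to match the resulting $q$-series with the right-hand side; combined with the containments of Lemma~\ref{lem:inclusions} this simultaneously completes the proof of Theorem~\ref{thm:dec}. First I would write the character of $W_k$ in three Cartan fugacities. Since $V_k(\mathfrak{sl}(2|1))$ is the \emph{universal} affine \voa, its character is the PBW product over positive modes of the eight generators — the Cartan currents $h,x$ contributing $\prod_{n\ge1}(1-q^n)^{-2}$, the bosonic roots $e,f$ contributing $\prod_{n\ge1}(1-\zeta q^n)^{-1}(1-\zeta^{-1}q^n)^{-1}$, and the four odd currents $a^\pm,b^\pm$ of weight $1$ contributing numerator factors $\prod_{n\ge1}(1+\cdots q^n)$ with the half-integral $h$- and $x$-weights dictated by the commutation relations — while $V_L$ contributes the lattice theta function $\Theta_L$ over the appropriate $\eta$-powers for the positive-definite $A_1$-boson and the negative-definite $\sqrt{-1}A_1$-boson.

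Next I would isolate the Heisenberg field $x-y$. Passing to a fugacity $z$ conjugate to $(x-y)_0$ together with fugacities conjugate to a set of currents commuting with $H$, one writes $\ch[W_k]=\sum_\mu\ch[F^H_\mu](z,q)\,\ch[D_k^{(\mu)}](q)$, where $D_k^{(\mu)}$ is the $H$-charge-$\mu$ multiplicity space and $D_k=D_k^{(0)}$; since $\ch[F^H_\mu](z,q)$ is $z^\mu$ times a fixed Gaussian in $q$ divided by $\prod_{n\ge1}(1-q^n)$, one recovers $\ch[D_k]$ as $\prod_{n\ge1}(1-q^n)$ times the $z$-independent part of $\ch[W_k]$ in the new variables. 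Carrying out this constant-term extraction, the $A_1$-lattice boson (now orthogonal to $H$) recombines with the $\mathfrak{sl}_2$- and $\mathfrak{gl}_1$-Cartans of $\mathfrak{sl}(2|1)$ to produce exactly the level-$1$ content $L_1(\sltwo)$, resp.\ $L_1(\omega)$ in the odd sector, while the odd currents $a^\pm,b^\pm$ dressed by lattice exponentials assemble into the families $X_n^\pm$ of Lemma~\ref{lem:inclusions}; their contribution to the constant term organises, after applying the Jacobi triple product to the numerator $\prod_{n\ge1}(1+\cdots q^n)$, into the alternating sums $\sum_{s\ge0}(-1)^s q^{(2s+\cdots)^2/8}$ defining $\chi(r)$, and trades the remaining lattice and $\mathfrak{gl}_1$ fugacities for the summation variable $n$ in $\sum_{n\in\mathbb Z}x^n(\cdots)$. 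Getting the quadratic exponents $\tfrac{(m+1)^2-1}{4}$ and the arguments $n\pm(m+1)$ exactly right is the bulk of the computation.

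It then remains to recognise the resulting series: I would verify the identity
\[
\sum_{n\in\mathbb Z}x^{n}q^{\frac{(m+1)^2-1}{4}}\bigl(\chi(n+m+1)-\chi(n-m-1)\bigr)=\ch[L_{k_2}(m)](x,q),
\]
which is a pure $\widehat{\sltwo}$-character statement at level $k_2$. Substituting the definition of $\chi$ and clearing $\eta(q)^{-3}$ identifies $\chi(r)$ with a $\delta$-graded, spectral-flowed Verma-module character, and the double sum over $n$ and over $s$ collapses by the Jacobi triple product — exactly as in the Euler--Poincar\'e computation of Section~\ref{sec:DSspectralflow} — into the Weyl--Kac numerator-over-$\eta^{3}$ form of $\ch[L_{k_2}(m)]$, the $n$-sum running over the affine translation lattice and $\chi(n+m+1)-\chi(n-m-1)$ implementing the affine reflection. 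Feeding this back, together with the standard level-$1$ characters $\ch[L_1(\sltwo)]$ and $\ch[L_1(\omega)]=\ch[L_1(1)]$, yields both displayed equalities of the Lemma.

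\emph{Main obstacle.} The delicate step is the coset-character bookkeeping in the middle paragraph: the Heisenberg $H$ generated by $x-y$ is \emph{negative-definite}, so the decomposition of $W_k$ over $H$ is not an ordinary lowest-weight decomposition, and one must argue that the formal operation (multiply by $\prod(1-q^n)$, extract the $z$-constant term) genuinely computes $\Com(H,W_k)$ with its correct, lower-bounded, finite-dimensional grading rather than some indefinite-lattice artefact — and, relatedly, that the superalgebra signs from the odd sector are tracked consistently, so that one obtains the honest character and not a supercharacter. This is precisely where Lemma~\ref{lem:inclusions} does the real work: it traps $\Com(H,W_k)$ between two \voas{} whose graded pieces are manifestly of the same finite dimensions, so once the formal character identity is established it forces equality throughout. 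Everything else is a lengthy but mechanical matching of fugacity conventions and Gaussian exponents.
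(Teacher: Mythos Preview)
Your overall architecture matches the paper's: compute $\ch[D_k]$ from the coset description (the paper does this by decomposing $\ch[V_k(\mathfrak{sl}(2|1))]$ into $L_{k_1}(\sltwo)$--Weyl modules, citing \cite{CLsf} Corollary~5.8 for the multiplicity $B_{m\omega}/\eta$, then extracting the $x^n$ Fourier coefficient of $D_k$), and separately identify the inner sum with $\ch[L_{k_2}(m)]$ (the paper cites \cite{BCL} Example~4.4). Your PBW--plus--constant--term extraction is simply a from--scratch version of the first step.

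There is, however, a genuine misconception in your justification of the key identity
\[
\sum_{n\in\mathbb Z}x^{n}q^{\frac{(m+1)^2-1}{4}}\bigl(\chi(n+m+1)-\chi(n-m-1)\bigr)=\ch[L_{k_2}(m)](x,q).
\]
You describe the $n$--sum as ``running over the affine translation lattice'' with $\chi(n+m+1)-\chi(n-m-1)$ ``implementing the affine reflection'', and propose to collapse it via the Jacobi triple product into a Weyl--Kac numerator. But $k_2$ is generic here: $L_{k_2}(m)$ is the universal Weyl module, with no affine singular vectors, and its character is simply the finite--Weyl numerator $x^{m+1}-x^{-(m+1)}$ over the affine Weyl denominator $\Pi(x)$ --- no affine Weyl sum appears. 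The $n$--sum in the lemma is \emph{not} an affine translation sum; it is the Fourier expansion of that quotient in the Cartan variable $x$. The functions $\chi(r)$ are the resulting string functions, and they are \emph{partial} (false) theta functions rather than theta functions --- this is why the Jacobi triple product does not apply, and why no level $k_2$ appears on the left--hand side. This Fourier--coefficient identity for generic--level $\widehat{\sltwo}$ Weyl modules is exactly what \cite{BCL} Example~4.4 records; your proposed mechanism (affine Weyl group plus JTP) would give the wrong object.

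Once you replace that step by the correct string--function identification, the rest of your outline goes through. Your worry about the negative--definite Heisenberg $H$ is legitimate in principle, but note that the paper's route avoids it: one first decomposes $V_k(\mathfrak{sl}(2|1))$ over the positive--level $L_{k_1}(\sltwo)$, where the branching is honest lower--bounded representation theory, and only then deals with the remaining lattice/Heisenberg piece, where the Fourier coefficient in $x^n$ is manifestly well--defined. The sandwich from Lemma~\ref{lem:inclusions} is still what closes the argument.
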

\begin{proof}
The first identity follows immediately from Example 4.4 of \cite{BCL} while for the second one one needs to decompose $\ch[L_{k}\left(\mathfrak{sl}(2|1) \right)]$ into characters of Weyl modules $L_{k_1}(m)$ of $L_{k}\left(\mathfrak{sl}(2) \right)$. The corresponding multiplicity is $B_{m\omega}/\eta$ of  Corollary 5.8 of \cite{CLsf}. The Fourier coefficient in $x^n$ of $D_{k_1}$ is then $q^{-n^2/4}B_{m\omega}/\eta$ and a careful analysis then shows that this coefficient agrees with $q^{\frac{(m+1)^2-1}{4}}( \chi(n+m+1)-\chi(n-m-1))$. 
\end{proof}
 \begin{cor}
\begin{equation}\nonumber
\begin{split}
D_k &\cong  D(2, 1; -\lambda)_1\\  &\cong  \left(L_1(\sltwo) \otimes \bigoplus_{\substack{ m=0\\  m \ \text{even}}}^\infty  L_{k_1}(m) \otimes  L_{k_2}(m)\right)  \oplus \left(L_1(1) \otimes \bigoplus_{\substack{ m=0\\  m \ \text{odd}}}^\infty L_{k_1}(m) \otimes L_{k_2}(m)\right) 
\end{split}
\end{equation}
as  $L_1(\sltwo)\otimes L_{k_1}(\sltwo) \otimes L_{k_2}(\sltwo)$-module.
\end{cor}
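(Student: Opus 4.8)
The plan is to obtain the Corollary as a ``sandwich'' argument combining the two preceding lemmas. Write $\Omega_\lambda$ for the direct sum of products of Weyl modules displayed in the statement of the Corollary. Lemma \ref{lem:inclusions} provides, as modules for $L_1(\sltwo)\otimes L_{k_1}(\sltwo)\otimes L_{k_2}(\sltwo)$, a chain of inclusions in which $\Omega_\lambda$ embeds into $D(2,1;-\lambda)_1$ as a \vosa{} and $D(2,1;-\lambda)_1$ in turn embeds into $D_k=\Com(H,W_k)$ as a \vosa. The strategy is to show that $D_k$ and $\Omega_\lambda$ have the same graded dimension, which then forces every inclusion in the chain to be an equality.

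First I would record the structural facts needed to run the squeeze: each of $D_k$, $D(2,1;-\lambda)_1$ and $\Omega_\lambda$ is graded by $L_0$ (with the usual shift by $c/24$), bounded below, and has finite-dimensional homogeneous components. For $D_k$ and $D(2,1;-\lambda)_1$ this is clear from the constructions. For $\Omega_\lambda$ it follows because, for generic $\lambda$, the top conformal weight of the summand $L_{k_1}(m)\otimes L_{k_2}(m)$ equals $\tfrac{m(m+2)}{4}\left(\tfrac{1}{k_1+2}+\tfrac{1}{k_2+2}\right)=\tfrac{m(m+2)}{4}$, up to the bounded contribution of the level-one factor; this grows with $m$, so each $L_0$-eigenspace of $\Omega_\lambda$ receives contributions from only finitely many $m$, and the building blocks $L_{k_i}(m)$, $L_1(\sltwo)$, $L_1(\omega_1)$ each have finite-dimensional graded pieces.

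Next I would invoke the character identity of the Lemma immediately preceding the Corollary, which gives $\ch[D_k]=\ch[\Omega_\lambda]$ with respect to this grading; that Lemma reduces the claim, via Example 4.4 of \cite{BCL} and the branching multiplicities of Corollary 5.8 of \cite{CLsf}, to a $q$-series identity for the functions $\chi(r)$. Granting it, an inclusion of graded vector spaces that are bounded below, have finite-dimensional homogeneous pieces, and have equal graded dimensions must be an equality in every degree; applied to $\Omega_\lambda\subseteq D(2,1;-\lambda)_1\subseteq D_k$ this yields $\Omega_\lambda=D(2,1;-\lambda)_1=D_k$, which is the assertion of the Corollary (and of Theorem \ref{thm:dec}). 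That the middle identification holds as \voas{} rather than merely as modules is already part of Lemma \ref{lem:inclusions}, and simplicity of $D(2,1;-\lambda)_1$ for generic $\lambda$ is supplied by the conformal-embedding input of \cite{KMP, AP}.

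The genuine obstacle lies not in this final squeeze but upstream, in the character Lemma on which it rests: decomposing $\ch[L_{k_1}(\mathfrak{sl}(2|1))]$ into the characters of the Weyl modules of its $L_{k_1}(\sltwo)$ subalgebra, using the branching coefficient $B_{m\omega}/\eta$ of \cite{CLsf}, and matching the coefficient of $x^n$ with $q^{((m+1)^2-1)/4}\left(\chi(n+m+1)-\chi(n-m-1)\right)$, is a delicate $q$-series computation. A secondary point requiring care is the compatibility of gradings: one must verify that the $L_0$-grading used for the character of $D_k$ is the one restricted from $V_{k_1}(\mathfrak{sl}(2|1))\otimes V_L$ and agrees with the Sugawara grading on $\Omega_\lambda$, which holds because passing to the Heisenberg coset $\Com(H,-)$ leaves the stress tensor of the commutant unchanged.
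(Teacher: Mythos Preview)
Your proposal is correct and follows exactly the approach the paper intends: the Corollary is an immediate consequence of the two preceding Lemmas via the squeeze argument you describe, which is why the paper gives no separate proof. Your added care about finite-dimensionality of graded pieces and compatibility of gradings fills in details the paper leaves implicit, and your identification of the character Lemma as the real workhorse is accurate.
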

Another corollary is now the character formula of $D(2, 1; -\lambda)_1$. For this recall, that 
\[
\ch[L_{k}(m)](q, z) = \frac{\left(z^{m+1}-z^{-(m+1)}\right)q^{\frac{m(m+2)-12k}{k+2}+\frac{1}{8}} }{\Pi(z)}
\]
with Weyl denominator 
\[
\Pi(z) = q^{\frac{1}{8}}\left(z-z^{-1}\right) \prod_{n=1}^\infty (1-z^2q^n)(1-q^n)(1-z^{-2}q^n).  
\]
The root lattice of $A_1$ is $\sqrt{2}\ZZ$ and its discriminant is isomorpic to $\ZZ/2\ZZ$ with non-trivial coset representative $\sqrt{2}\ZZ+\frac{1}{\sqrt{2}}:= A_1+(1)$. The Jacobi theta functions are
\[
\theta_{A_1}(q, z) = \sum_{\substack{m\in\ZZ\\ m \ \text{even}}} q^{\frac{m^2}{4}}z^m, \qquad
\theta_{A_1+(1)}(q, z) = \sum_{\substack{m\in\ZZ\\ m \ \text{odd}}} q^{\frac{m^2}{4}}z^m
\]
and the characters of $L_1(\sltwo)$-modules are
\[
\ch[L_1(\sltwo)] = \frac{\theta_{A_1}(q, z)}{\eta(q)}, \qquad \ch[L_1(1)] = \frac{\theta_{A_1+(1)}(q, z)}{\eta(q)} \]
with the usual Dedekind's eta function $\eta(q)$. 
We now compute
\begin{equation}\nonumber
\begin{split}
\ch[D&(2, 1; -\lambda)_1](z, w, v, q) = \ch[L_1(\sltwo)](v, q)\sum_{\substack{m=0 \\ m \ \text{even}}}  \ch[L_{k_1}(m)](q, z)\ch[L_{k_2}(m)](q, w)+ \\
&\qquad\qquad\qquad + \ch[L_1(1)](v, q)\sum_{\substack{m=0 \\ m \ \text{odd}}}  \ch[L_{k_1}(m)](q, z)\ch[L_{k_2}(m)](q, w)\\
&= \frac{\theta_{A_1}(q, v)}{\eta(q)} \sum\limits_{\substack{m=0\\ m \ \text{even}}}^\infty\frac{ q^{\frac{(m+1)^2}{4}}\left((zw)^{m+1}+(zw)^{-(m+1)} -(zw^{-1})^{m+1}-(z^{-1}w)^{m+1}\right) }{\Pi(z) \Pi(w)} + \\
&\ \  + \frac{\theta_{A_1+(1)}(q, v)}{\eta(q)} \sum\limits_{\substack{m=0\\ m \ \text{odd}}}^\infty\frac{ q^{\frac{(m+1)^2}{4}}\left((zw)^{m+1}+(zw)^{-(m+1)} -(zw^{-1})^{m+1}-(z^{-1}w)^{m+1}\right) }{\Pi(z) \Pi(w)}\\
&= \frac{\theta_{A_1}(q, v)\left(\theta_{A_1+(1)}(q, zw) -\theta_{A_1+(1)}(q, zw^{-1}) \right) +  \theta_{A_1+(1)}(q, v)\left(\theta_{A_1}(q, zw) -\theta_{A_1}(q, zw^{-1}) \right) }{\eta(q)\Pi(z)\Pi(w)}.
\end{split}
\end{equation}
The super character is obtained by changing the sign of the second summand. We summarize.
\begin{cor}
The character $\ch^+$ and super character $\ch^-$ of $D(2, 1; -\lambda)_1$ are meromorphic Jacobi forms,
\begin{equation}\nonumber
\begin{split}
\ch^\pm &:= \ch^\pm[D(2, 1; -\lambda)_1](z, w, v, q)\\
 &= \frac{\theta_{A_1}(q, v)\left(\theta_{A_1+(1)}(q, zw) -\theta_{A_1+(1)}(q, zw^{-1}) \right) \pm  \theta_{A_1+(1)}(q, v)\left(\theta_{A_1}(q, zw) -\theta_{A_1}(q, zw^{-1}) \right) }{\eta(q)\Pi(z)\Pi(w)},
\end{split}
\end{equation}
where $|z|, |w| < |q|^{\pm 1}$.
\end{cor}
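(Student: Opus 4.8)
The plan is to read the closed formula off the module decomposition of $D(2,1;-\lambda)_1$ obtained above, then obtain the super-character by tracking the $\ZZ/2$-grading, and finally recognise the result as a ratio of theta functions and locate its domain of convergence. For the formula itself one starts from the decomposition of $D(2,1;-\lambda)_1$ as an $L_1(\sltwo)\otimes L_{k_1}(\sltwo)\otimes L_{k_2}(\sltwo)$-module (Theorem \ref{thm:dec}, equivalently the Corollary just above), substitutes the character formulas recalled in the lines just above, $\ch[L_k(m)](q,z)=(z^{m+1}-z^{-(m+1)})q^{\ast}/\Pi(z)$ and $\ch[L_1(\sltwo)]=\theta_{A_1}/\eta$, $\ch[L_1(1)]=\theta_{A_1+(1)}/\eta$, and pulls the common factor $\eta(q)^{-1}\Pi(z)^{-1}\Pi(w)^{-1}$ to the front. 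Using $(z^{m+1}-z^{-(m+1)})(w^{m+1}-w^{-(m+1)})=(zw)^{m+1}+(zw)^{-(m+1)}-(zw^{-1})^{m+1}-(z^{-1}w)^{m+1}$ and the relation $\tfrac{1}{k_1+2}+\tfrac{1}{k_2+2}=1$ — which forces $\tfrac{k_1}{k_1+2}+\tfrac{k_2}{k_2+2}=0$, so the $q$-powers combine to $q^{(m+1)^2/4}$ up to an $m$-independent factor — the residual $m$-sum becomes $\sum_m q^{(m+1)^2/4}\bigl((zw)^{m+1}+(zw)^{-(m+1)}-(zw^{-1})^{m+1}-(z^{-1}w)^{m+1}\bigr)$. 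Reindexing by $r=m+1$ and noting that the first two monomials sweep out $r>0$ and $r<0$ (and likewise the third and fourth), this collapses, summand by summand, to $\theta_{A_1+(1)}(q,zw)-\theta_{A_1+(1)}(q,zw^{-1})$ over even $m$ and $\theta_{A_1}(q,zw)-\theta_{A_1}(q,zw^{-1})$ over odd $m$; reinserting the prefactors $\theta_{A_1}(q,v)$ from $L_1(\sltwo)$ and $\theta_{A_1+(1)}(q,v)$ from $L_1(1)$ gives the stated formula for $\ch^+$.

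For the super-character one uses the usual $\ZZ/2$-grading on $D(2,1;-\lambda)_1$, under which the even affine subalgebra $L_1(\sltwo)\otimes L_{k_1}(\sltwo)\otimes L_{k_2}(\sltwo)$ has parity $0$ and the eight fermionic currents $\psi(\pm,\pm,\pm)$ parity $1$. In the decomposition the even-$m$ block (tensored with $L_1(\sltwo)$) is exactly the even part and the odd-$m$ block (tensored with $L_1(\omega_1)$) the odd part: the primaries $X_n$ of Lemma \ref{lem:inclusions} generating the $m=n$ block are products of $n$ copies of the odd generator $a^+$ dressed by lattice vertex operators, hence carry parity $n\bmod 2$. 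Thus the supertrace negates the second summand of $\ch^+$, which is the displayed formula with the lower sign.

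For the modular claim, each of $\theta_{A_1}$, $\theta_{A_1+(1)}$ is a component of the level-one $A_1$ lattice theta function, hence a holomorphic Jacobi form of weight $\tfrac12$ in its elliptic variable; by the Jacobi triple product $\Pi(z)=q^{1/8}(z-z^{-1})\prod_{n\ge1}(1-z^2q^n)(1-q^n)(1-z^{-2}q^n)$ is, up to an elementary monomial-and-$q$-power prefactor, the odd Jacobi theta $\vartheta_{11}$, again a holomorphic Jacobi form of weight $\tfrac12$ with divisor $z\in q^{\ZZ}$, and $\eta(q)$ is a modular form of weight $\tfrac12$. Dividing, $\ch^\pm$ is a ratio of Jacobi forms and hence a meromorphic Jacobi form (of weight $-\tfrac12$; its index in the three elliptic variables $z,w,v$ and its multiplier are read off from the above, and its poles lie along $z,w\in q^{\ZZ}$). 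Read as this ratio of absolutely convergent theta series and infinite products, the right-hand side is in any case a meromorphic function on $(\mathbb C^\times)^2$ for $|q|<1$; to see that on $|z|,|w|<|q|^{\pm1}$ it equals the series defining $\ch^\pm$ as a graded (super)trace, one expands $\Pi(z)^{-1}$ and $\Pi(w)^{-1}$ geometrically — toward one puncture for $\ch^+$ and toward the opposite puncture for $\ch^-$, which accounts for the $\pm$ in the exponent — and matches term by term with the Weyl-module characters, whose annuli of convergence are those recorded for $\ell=0$ in Section \ref{sec:DSspectralflow}.

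The only delicate step is this last point: checking that $|z|,|w|<|q|^{\pm1}$ is precisely where the (super)trace over the infinite direct sum converges absolutely and coincides with the meromorphic function, which requires controlling the convergence annuli of all the Weyl modules $L_{k_i}(m)$ simultaneously and justifying the interchange of the $m$-sum with the geometric expansions of $\Pi(z)^{-1}$ and $\Pi(w)^{-1}$. Everything else — the algebraic identity and the identification as a ratio of Jacobi forms — is a formal consequence of the decomposition theorem and the Jacobi triple product.
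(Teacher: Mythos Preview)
Your argument is correct and follows the paper's own computation essentially line for line: start from the decomposition of Theorem~\ref{thm:dec}, substitute the Weyl-module and level-one characters, use $\frac{1}{k_1+2}+\frac{1}{k_2+2}=1$ to collapse the $q$-exponent to $q^{(m+1)^2/4}$, and resum into $A_1$ theta functions. You supply more detail than the paper on two points --- the parity justification for the supercharacter via the $X_n$ fields, and the identification of $\Pi(z)$ with $\vartheta_{11}$ for the Jacobi-form claim --- both of which the paper leaves implicit.

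One small misreading: the condition ``$|z|,|w|<|q|^{\pm 1}$'' is shorthand for the two-sided annulus $|q|<|z|,|w|<|q|^{-1}$, not two different domains for $\ch^+$ and $\ch^-$. The character and supercharacter are traces over the same graded vector space with only a sign insertion, so they converge in the same region; the $\pm$ in the exponent has nothing to do with the $\pm$ in $\ch^\pm$. Your suggestion of expanding $\Pi^{-1}$ ``toward opposite punctures'' for the two cases is therefore off the mark --- both use the same expansion, the one dictated by the $\ell=0$ annulus you cite from Section~\ref{sec:DSspectralflow}.
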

There are a few remarks in order
\begin{rem}
The limit $z, w \rightarrow 1$ is by L'H\^opital's rule
\begin{equation}\nonumber
\begin{split}
\lim_{z, w \rightarrow 1 } \ch^\pm[D(2, 1; -\lambda)_1](z, w, v, q)&= 
\frac{1}{2} \frac{\theta_{A_1}(q, v)\theta''_{A_1+(1)}(q)\pm  \theta_{A_1+(1)}(q, v)\theta''_{A_1}(q)}{\eta(q)^7},
\end{split}
\end{equation}
with $\theta''(q):= \frac{d^2}{dz} \theta(q, z)\Big\vert_{z=1}$. This is a holomorphic Jacobi form. Recently the notion of quasi-lisse \voas{} has been established.  Ordinary modules of such \voas{} satisfy modular differential equations and the associated variety of such \voas{} is symplectic with finetly many symplectic leaves \cite{Arakawa:2016hkg}. 
It is thus an interesting question if  $D(2, 1; -\lambda)_1$ is a deformable family of quasi-lisse vertex operator superalgebras. 
\end{rem}
\begin{rem} 
Having the describtion of $D(2, 1; -\lambda)_1$ as 
\[
D(2, 1; -\lambda)_1 = \text{Com}\left( H, W_{k_1} \right), \qquad W_k = V_k\left(\mathfrak{sl}(2|1)\right) \otimes V_L
\]
it is easy to derive weight conditions on special modules. Recall that a module of $V_k\left(\mathfrak{sl}(2|1)\right)$ is atypical if the two weights satisfy $j=\pm b$, where $j$ is the $h_0$ eigenvalue and $b$ the $x_0$ eigenvalue of the highest-weight state. 
The Cartan element of the $V_{k_2}(\mathfrak{sl}_2)$ is 
\[
\frac{1}{k_2+1} \left(x+ky\right) =\lambda  \left(x+ky\right)
\]
it follows that atypical modules of $W_k$ satisfy the weight condition on the two $\sltwo$ weights in $D(2, 1; -\lambda)_1$ to be
\[
j = \pm \lambda  b.
\]
\end{rem}

\begin{rem}
In \cite{CL1, CL2} the notion of deformable families of \voas{} has been introduced. 
The philosophy is that such \voas{} allow for a large level (or large $\lambda$-limit) in which the \voa{} becomes often quite simple. For example 
\[
\lim_{k\rightarrow \infty} V_k\left(\mathfrak{sl}(2|1)\right) \cong H(4) \otimes SF(2)
\]
is just four Heisenberg \voas{} together with two pairs of symplectic fermions. Moreover a coset problem reduces to an orbifold problem in the large level limit, as e.g.
\[
\lim_{k\rightarrow \infty} D_k \cong   H(3) \otimes \left( SF(2) \otimes V_L\right)^{U(1)}.
\]
On the other hand, understanding that $D(2, 1; -\lambda)_1$ is a deformable family of \voas{}  implies \cite{CL3}
\[
\lim_{\lambda \rightarrow \infty}D(2, 1; -\lambda)_1 \cong H(3) \otimes L_1\left(\psl\right)
\]
with an abelian rank three Heisenberg \voa{} $H(3)$. 
Especially
\[
\psl \cong \left(L_1(\sltwo) \otimes \bigoplus_{\substack{ m=0\\  m \ \text{even}}}^\infty  \rho_m \otimes  L_{-1}(m)\right)  \oplus \left(L_1(1) \otimes \bigoplus_{\substack{ m=0\\  m \ \text{odd}}}^\infty \rho_m \otimes L_{-1}(m)\right)
\]
as  $L_1(\sltwo)\otimes \sltwo \otimes L_{-1}(\sltwo)$-module.
\end{rem}

\subsubsection{$L_k(\mathfrak{osp}(1|2))$ as a coset}

Recall that the odd elements $x_1, y_1$ defined in \eqref{eq:oddosp} of $\mathfrak{d}(2, 1; -\lambda)$ together with $e_1, f_1, h_1$ generate the subalgebra $\mathfrak{osp}(1|2)$ in $\mathfrak{d}(2, 1; -\lambda)$. We will now see that this extends to a conformal embeddings of the affine vertex superalgebra in $D(2, 1; -\lambda)_1$.

Using Theorem  \ref{thm:dec}  and section \ref{sec:W} we have 
\begin{equation}
\begin{split}
D(2, 1; -\lambda)_1 &\cong \left(L_1(\sltwo) \otimes \bigoplus_{\substack{ m=0\\  m \ \text{even}}}^\infty  L_{k_1}(m) \otimes  L_{k_2}(m)\right)  \oplus \left(L(\omega_1) \otimes \bigoplus_{\substack{ m=0\\  m \ \text{odd}}}^\infty L_{k_1}(m) \otimes L_{k_2}(m)\right)\\
&\cong \bigoplus_{\substack{ s, m=0\\  s \ \text{even}}}^\infty \left( L_{k_1}(m) \otimes  L_{k_2+1}(s) \otimes  M^{1+\lambda^{-1}}(s+1, m)\right)
\end{split}
\end{equation}
We thus see that 
\[
\text{Com}\left(L_{k_2+1}(\sltwo), D(2, 1; -\lambda)_1\right) \cong    \bigoplus_{m=0}^\infty \left( L_{k_1}(m) \otimes  M^{1+\lambda^{-1}}(1, m)\right)
\]
moreover this coset obviously contains $L_{k_1}(\mathfrak{osp}(1|2))$. Equality for generic level follows from equality of characters which is most easily seen by noting that the vacuum character of $L_{k_1}(\mathfrak{osp}(1|2))$ is the vacuum character of $L_{k_1}(\sltwo)$ times the one of symplectic fermions. But symplectic fermions are well-known \cite{AM} to decompose as Virasoro module as 
\[
SF(1) \cong   \bigoplus_{m=0}^\infty  (m+1)M^2(1, m).
\]
We thus have
\begin{cor}\label{cor:ospascoset}
The coset is $\text{Com}\left(L_{k_2+1}(\sltwo), D(2, 1; -\lambda)_1\right)=L_{k_1}(\mathfrak{osp}(1|2))$.
\end{cor}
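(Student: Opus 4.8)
The plan is to prove that $\text{Com}\left(L_{k_2+1}(\sltwo), D(2, 1; -\lambda)_1\right)=L_{k_1}(\mathfrak{osp}(1|2))$ by the standard two-step strategy: first exhibit an explicit conformal embedding $L_{k_1}(\mathfrak{osp}(1|2)) \hookrightarrow \text{Com}\left(L_{k_2+1}(\sltwo), D(2, 1; -\lambda)_1\right)$, and then show the inclusion is an equality for generic $\lambda$ by comparing graded dimensions. The first step is essentially already in hand from the structure of the excerpt: the fields $e_1, f_1, h_1, x_1, y_1$ of $\mathfrak{d}(2, 1; -\lambda)$ from equation \eqref{eq:oddosp} generate the finite-dimensional subalgebra $\mathfrak{osp}(1|2)$, and at the affine level the corresponding currents in $D(2, 1; -\lambda)_1$ generate a copy of the universal affine vertex superalgebra of $\mathfrak{osp}(1|2)$ at the level $k_1 = -1 + \lambda^{-1}$ dictated by the bilinear form. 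One checks directly that these currents have regular OPE with all of $L_{k_2+1}(\sltwo)$ (using that $y_1$ involves only the $\psi(-,\pm,\mp)$ generators and these braid trivially with the currents of the second $\sltwo$), so they land in the coset; since the second $\sltwo$ factor $L_{k_2+1}(\sltwo)$ and the $\mathfrak{osp}(1|2)$ subalgebra together generate a conformally embedded pair inside $D(2,1;-\lambda)_1$ (this is a finite-dimensional Lie-superalgebra fact: $\mathfrak{osp}(1|2) \oplus \mathfrak{sl}(2)$ sits inside $\mathfrak{d}(2,1;-\lambda)$ with matching Casimir/central-charge data), the induced map on VOAs gives the claimed embedding.

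For the second step, the key input is the branching of $D(2, 1; -\lambda)_1$ as a module for its even affine subalgebra, which is Theorem \ref{thm:dec}, together with the simply-laced $\sltwo$ coset decomposition \eqref{eq:GKOsl2} (the Goddard--Kent--Olive realization of Virasoro). First I would re-expand the right-hand side of Theorem \ref{thm:dec} by regrouping the $L_1(\sltwo) \otimes L_{k_2}(\sltwo)$ and $L_1(\omega_1)\otimes L_{k_2}(\sltwo)$ pieces into $L_{k_2+1}(\sltwo) \otimes \Vir$ modules via \eqref{eq:GKOsl2}; this is exactly the computation already displayed in section \ref{sec:constVOAs} giving
\[
D(2, 1; -\lambda)_1 \cong \bigoplus_{\substack{ s, m=0\\  s \ \text{even}}}^\infty L_{k_1}(m) \otimes  L_{k_2+1}(s) \otimes  M^{1+\lambda^{-1}}(s+1, m).
\]
Taking the commutant of $L_{k_2+1}(\sltwo)$ picks out the $s=0$ summand, so $\text{Com}\left(L_{k_2+1}(\sltwo), D(2, 1; -\lambda)_1\right) \cong \bigoplus_{m=0}^\infty L_{k_1}(m) \otimes M^{1+\lambda^{-1}}(1, m)$ as an $L_{k_1}(\sltwo) \otimes \Vir$-module.

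It then remains to identify this module with $L_{k_1}(\mathfrak{osp}(1|2))$. I would do this at the level of characters: the vacuum character of $L_{k_1}(\mathfrak{osp}(1|2))$ factors as $\ch[L_{k_1}(\sltwo)]$ times the vacuum character of a single pair of symplectic fermions $SF(1)$, and the well-known Virasoro decomposition $SF(1) \cong \bigoplus_{m=0}^\infty (m+1) M^2(1, m)$ (from \cite{AM}), combined with the fact that the relevant Virasoro central charge here matches, shows the characters of the two sides agree. Since we already have an injection $L_{k_1}(\mathfrak{osp}(1|2)) \hookrightarrow \text{Com}\left(L_{k_2+1}(\sltwo), D(2, 1; -\lambda)_1\right)$ of VOAs with the same graded dimension, and both are simple for generic $\lambda$, the injection is an isomorphism. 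The main obstacle I anticipate is bookkeeping rather than conceptual: correctly matching the Virasoro minimal-model labels $M^{1+\lambda^{-1}}(1,m)$ with the $M^2(1,m)$ appearing in the symplectic-fermion decomposition, and tracking the shift of the second $\sltwo$ level from $k_2$ to $k_2+1$ through the GKO coset, so that the central charges and conformal weights line up exactly; once that normalization is pinned down, the character identity and the simplicity-plus-graded-dimension argument close the proof just as in \cite{CFK}.
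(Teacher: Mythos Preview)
Your proposal is correct and follows essentially the same route as the paper: decompose $D(2,1;-\lambda)_1$ via Theorem~\ref{thm:dec} and the GKO coset \eqref{eq:GKOsl2}, pick out the $s=0$ piece to identify the commutant as $\bigoplus_m L_{k_1}(m)\otimes M^{1+\lambda^{-1}}(1,m)$, observe that the $\mathfrak{osp}(1|2)$ currents from \eqref{eq:oddosp} already sit inside the coset, and then match characters using the symplectic-fermion decomposition. One small correction to your reasoning: $x_1,y_1$ do \emph{not} have regular OPE with the second $\sltwo$ individually; rather, they are singlets for the \emph{diagonal} of the second and third $\sltwo$'s (which is what $L_{k_2+1}(\sltwo)$ is), as one sees by computing $[e_2+e_3,x_1]=0$ etc.\ directly from the commutation relations.
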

Consider now the quantum Hamiltonian reduction on the $L_{k_1}(\mathfrak{osp}(1|2))$. The differential is computed from \cite{Kac:2003jh}
to be
\[
d_{osp} = :e_1(z)b(z): -\sqrt{\frac{\lambda^{-1}-1}{2}} :x_1(z)\beta(z): -\frac{1}{2}:\beta(z)\beta(z)c(z): +b(z) +:\beta(z)\phi(z):
\]
where we introduced the bosonic ghosts $\beta, \gamma$, the fermionic ghosts $b, c$ and the free fermion $\phi$ with non-vanishing OPEs
\[
\beta(z)\gamma(w) \sim b(z)c(w) \sim \phi(z)\phi(w) \sim \frac{1}{(z-w)}.
\]
Theorem 6.2 of \cite{Kac:2003jh} is a no-ghost Theorem for the reduction of universal Weyl and Verma modules so that for generic $\lambda$ the Euler-Poincar\'e character coincides with the true character. 
Let $M$ be a module of $L_{k_1}(\mathfrak{osp}(1|2))$ then the Euler-Poincar\'e character of $H_{d_{osp}}(M)$ is
\[
\text{EP}_{osp}(\ch[M]) = \lim_{z\rightarrow q^{-1/2}} \left( \ch[M](z, q) \text{sch}[\text{ghosts}](z, q)\right). 
 \]
 Let now $M=D(2, 1; -\lambda)_1$
Comparing with the Euler-Poincar\'e character of the reduction of section \ref{sec:reduction} we see that the two reductions only differ by the extra ghosts contribution of $\beta, \gamma, \phi$ so that
\begin{equation}\nonumber
\begin{split}
\text{EP}_{osp}(\ch[D(2, 1; -\lambda)_1]) &=\text{EP}(\ch[D(2, 1; -\lambda)_1]) \prod_{n=1}^\infty \frac{\left(1-q^{n+\frac{1}{2}}\right)}{\left(1-q^{n+\frac{1}{2}}\right)^2}\\
&= \ch[L_{k_2-1}(\sltwo) \otimes F(3)].
\end{split}
\end{equation}
It remains to show that $L_{k_2-1}(\sltwo) \otimes F(3)$ is indeed a subagebra of $H_{d_{osp}}(D(2, 1; -\lambda)_1)$. But this is the same type of argument as in section \ref{sec:reduction}. We still have the $L_{k_2+1}(\sltwo)$ in the cohomology of $d_osp$. Further the commutation relations of $\mathfrak{d}(2, 1; -\lambda)$ immediately imply that $\psi(+, +, +)(z), \psi(+, +, -)(z) + \psi(+, -, +)(z)$ and $\psi(+, -, -)(z)$ are in the kernel of the zero-mode of $d_{osp}$. They have conformal dimension $1/2$ and their OPE is given in \eqref{eq:OPEfermions}
but $e_1$ is in the same class as the identity since $[d_{osp, 0}, c]=1+e_1$. In other words, the cohomology contains three free fermions together with $L_{k_2+1}(\sltwo)$. The fermions carry the adjoint representation of $\sltwo$ under the zero-modes of $L_{k_2+1}(\sltwo)$. Hence we have that $L_{k_2-1}(\sltwo) \otimes F(3)$ is a sub vertex algebra of the cohomology and since characters coincide we have equality. We summarize 
\begin{thm}\label{thm:ospreduction}
For generic $\lambda$ we have $H_{d_{osp}}(D(2, 1; -\lambda)_1) \cong L_{k_2-1}(\sltwo) \otimes F(3)$.
\end{thm}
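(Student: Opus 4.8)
The plan is to mimic the proof of Theorem \ref{thm:DS}: first produce an explicit copy of $L_{k_2-1}(\sltwo)\otimes F(3)$ inside the cohomology $H_{d_{osp}}(D(2,1;-\lambda)_1)$, and then match graded dimensions to force the inclusion to be an equality for generic $\lambda$. Here $F(3)$ denotes the VOA of three free fermions, which is isomorphic to $L_2(\sltwo)$ with its generators spanning the adjoint representation.

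To build the subalgebra I would first use Corollary \ref{cor:ospascoset}: since $L_{k_1}(\mathfrak{osp}(1|2))=\Com(L_{k_2+1}(\sltwo),D(2,1;-\lambda)_1)$, the affine $\sltwo$ currents at level $k_2+1$ commute with all of $\mathfrak{osp}(1|2)\subset\mathfrak d(2,1;-\lambda)$ and with the reduction ghosts $\beta,\gamma,b,c,\phi$, hence are $d_{osp,0}$-closed and survive in $H_{d_{osp}}(D(2,1;-\lambda)_1)$ at level $k_2+1$. Next, from the commutation relations of $\mathfrak d(2,1;-\lambda)$ one checks that the dimension $\tfrac12$ fields $\psi(+,+,+)$, $\psi(+,+,-)+\psi(+,-,+)$, $\psi(+,-,-)$ all lie in $\ker d_{osp,0}$; their singular OPEs, given by \eqref{eq:OPEfermions}, are proportional to $e_1$, and since $[d_{osp,0},c]=1+e_1$ the field $e_1$ is cohomologous to $-\mathbf 1$, so these three classes generate exactly $F(3)$. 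A short weight count shows that they form the adjoint representation of the surviving level-$(k_2+1)$ $\sltwo$. The fermion bilinears of $F(3)$ give an affine $\sltwo$ at level $h^\vee=2$ acting on the fermions in the same adjoint fashion; subtracting this from the level-$(k_2+1)$ currents yields an affine $\sltwo$ of level $(k_2+1)-2=k_2-1$ with regular OPE against $F(3)$, so $L_{k_2-1}(\sltwo)\otimes F(3)\hookrightarrow H_{d_{osp}}(D(2,1;-\lambda)_1)$.

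For the opposite inclusion I would invoke the no-ghost theorem (Theorem 6.2 of \cite{Kac:2003jh}), which for generic $\lambda$ identifies the Euler--Poincar\'e character of the reduction with the true graded character, together with the computation of Section \ref{sec:reduction}: since $d_{osp}$ refines the differential $d$ there only by the extra $\beta,\gamma,\phi$ ghost data,
\[
\text{EP}_{osp}(\ch[D(2,1;-\lambda)_1])=\text{EP}(\ch[D(2,1;-\lambda)_1])\prod_{n=1}^{\infty}\frac{1-q^{n+1/2}}{(1-q^{n+1/2})^{2}}=\ch[L_{k_2-1}(\sltwo)\otimes F(3)],
\]
using $\text{EP}(\ch[D(2,1;-\lambda)_1])=\ch[L_{k_2-1}(\sltwo)\otimes F(4)]$. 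Since $L_{k_2-1}(\sltwo)\otimes F(3)$ is simple for generic $\lambda$ and has the same graded dimension as $H_{d_{osp}}(D(2,1;-\lambda)_1)$, the embedding must be an isomorphism. Alternatively one could realize $H_{d_{osp}}(D(2,1;-\lambda)_1)$ as a conformal extension of $L_{k_2+1}(\sltwo)\otimes W^{k_1}(\mathfrak{osp}(1|2))$ and feed in the GKO-type coset description of the $\mathfrak{osp}(1|2)$ super-Virasoro algebra, but the direct argument seems shorter.

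The main obstacle is the cohomological bookkeeping: verifying carefully that the three displayed fermions are $d_{osp,0}$-closed and not exact against all terms of $d_{osp}$, and fixing normalizations so that they close onto an honest $F(3)$; on the analytic side one must also confirm that the no-ghost theorem of \cite{Kac:2003jh} applies to the deformable family $D(2,1;-\lambda)_1$ so that the Euler--Poincar\'e characters faithfully compute the cohomology.
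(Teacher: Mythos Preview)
Your proposal is correct and follows essentially the same route as the paper: compute the Euler--Poincar\'e character by comparing to the reduction of Section~\ref{sec:reduction} (the extra $\beta,\gamma,\phi$ ghosts account exactly for the passage $F(4)\to F(3)$), exhibit the surviving $L_{k_2+1}(\sltwo)$ together with the three closed fermions $\psi(+,+,+),\ \psi(+,+,-)+\psi(+,-,+),\ \psi(+,-,-)$ in the adjoint, use $[d_{osp,0},c]=1+e_1$ to turn their OPE into a free-fermion OPE, and conclude by matching characters. Your explicit remark that the level shift $k_2+1\to k_2-1$ comes from subtracting the $L_2(\sltwo)\subset F(3)$ fermion bilinears is a nice clarification that the paper leaves implicit.
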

Recall that $F(n)$ is a simple current extension of $L_1(\mathfrak{so}(n))$.

\section*{Acknowledgements}
DG thanks S. Bravermann, K. Costello, P. Yoo for many instructive conversations. 
TC appreciates various discussions with T. Arakawa and A. Linshaw on related topics. 
\appendix 

\newcommand{\etalchar}[1]{$^{#1}$}

\end{document}